\documentclass{article}
\usepackage[T1]{fontenc}
\usepackage[latin9]{inputenc}
\usepackage{amsmath}
\usepackage{amssymb}
\usepackage{mathtools}
\usepackage{mathtools}
\usepackage{amsthm}
\usepackage{graphicx}
\usepackage{pgfplots}
\pgfplotsset{compat=1.18}
\usepackage{subcaption}
\usepackage{amsmath}
\usepackage{appendix}
\usepackage[colorlinks=true,allcolors=black]{hyperref}
\usepackage{cleveref}
\usepackage{xcolor}
\usepackage{comment}
\usepackage{thm-restate}
\usepackage{natbib}
\usepackage{fullpage}
\usepackage{array} 
\usepackage{tikz}
\usepackage{caption}
\usepackage{enumitem}

\DeclareMathOperator*{\argmax}{arg\,max}

\usepackage[symbol]{footmisc}
\usepackage{placeins}
\usepackage{nicefrac,bbm}

\allowdisplaybreaks

\newtheorem{theorem}{Theorem}[section]
\newtheorem*{theorem*}{Theorem}
\newtheorem{lemma}[theorem]{Lemma}
\newtheorem{proposition}[theorem]{Proposition}

\newtheorem*{corollary*}{Corollary}

\newtheorem{remark}[theorem]{Remark}

\newcommand{\cI}{\mathcal{I}}

\newcommand{\R}{\mathbb{R}}

\newcommand{\SW}{\textup{\textsc{Sw}}}
\newcommand{\OPT}{\textup{\textsc{Opt}}}
\newcommand{\OPTLL}{\textup{\textsc{OptWithoutLL}}}

\newcommand{\EQ}{\textup{\textsc{Ua}}}

\newcommand{\ID}{\textup{\textsc{Id}}}
\newcommand{\DD}{\textup{\textsc{Dd}}}

\newcommand{\E}[1]{\mathbb{E}\left[#1\right]}

\newcommand{\bQ}{\textbf{Q}}

\newcommand{\bc}{\textbf{c}}

\newcommand{\bq}{\textbf{q}}

\newcommand{\bu}{\textbf{u}}

\newcommand{\bw}{\textup{\textbf{w}}}

\newcommand{\bzero}{\textbf{0}}
\newcommand{\bone}{\textbf{1}}

\makeatletter
\newcommand{\vast}{\bBigg@{4}}
\newcommand{\Vast}{\bBigg@{5}}
\makeatother

\def\EqStyle{}
\let\citet\cite
\def\QLEXPR{\frac{q_{\ell}}{\sum_{i \in [\ell]} q_i}}

\usepackage[]{color-edits}%
\addauthor{PD}{black}

\newcommand\blfootnote[1]{
  \begingroup
  \renewcommand\thefootnote{}
  \NoHyper\footnote{#1}\endNoHyper
  \addtocounter{footnote}{-1}
  \endgroup
}

\title{Anonymous Contracts}

\date{}

\author{
Johannes Br\"ustle \\ Sapienza University of Rome
\and
Paul D\"utting \\ Google Research
\and
Stefano Leonardi \\ Sapienza University of Rome
\and
Tomasz Ponitka \\ Tel Aviv University
\and
Matteo Russo \\ EPFL
}

\begin{document}

\maketitle

\blfootnote{\hspace*{-2.5em}
S. Leonardi is partially supported by the ERC Advanced Grant 788893 AMDROMA ``Algorithmic and Mechanism Design Research in Online Markets'', by the FAIR (Future Artificial Intelligence Research) project PE0000013, funded by the NextGenerationEU program within the PNRR-PE-AI scheme (M4C2, investment 1.3, line on Artificial Intelligence), and by the PNRR MUR project IR0000013-SoBigData.it, and by the MUR PRIN grant 2022EKNE5K (Learning in Markets and Society). T. Ponitka was supported by ERC grant 101170373, an Amazon Research Award, NSF-BSF grant 2020788, Israel Science Foundation grant 2600/24, and a TAU Center for AI and Data Science grant. A portion of this work was completed while M. Russo was a student at Sapienza University of Rome.}

\begin{abstract}
We study a multi-agent contracting problem where agents exert costly effort to achieve individually observable binary outcomes. While the principal can theoretically extract the full social welfare using a discriminatory contract that tailors payments to individual costs, such contracts may be perceived as unfair. In this work, we introduce and analyze \emph{anonymous contracts}, where payments depend solely on the total number of successes, ensuring identical treatment of agents.

We first establish that every anonymous contract admits a pure Nash equilibrium. However, because general anonymous contracts can suffer from multiple equilibria with unbounded gaps in principal utility, we identify \emph{uniform anonymous contracts} as a desirable subclass. We prove that uniform anonymous contracts guarantee a unique equilibrium, thereby providing robust performance guarantees.

In terms of efficiency, we prove that under limited liability, anonymous contracts cannot generally approximate the social welfare better than a factor logarithmic in the spread of agent success probabilities. We show that uniform contracts are sufficient to match this theoretical limit. Finally, we demonstrate that removing limited liability significantly boosts performance: anonymous contracts generally achieve an $O(\log n)$ approximation to the social welfare and, surprisingly, can extract the full welfare whenever agents' success probabilities are distinct. This reveals a structural reversal: widely spread probabilities are the hardest case under limited liability, whereas identical probabilities become the hardest case when limited liability is removed.
\end{abstract}

\newpage
\clearpage

\section{Introduction}\label{sec:intro}

Contract design is a pillar of microeconomic theory \cite{Mirrlees99,Ross73,GrossmanHart83,Holmstrom79,NobelPrize16}. In this work, we focus on a multi-agent contracting problem with \emph{individual} outcomes 
\cite{CastiglioniM023}. In this setting, each of $n$ agents has a binary choice of exerting effort or not. The cost of exerting effort is $c_i$ for agent $i$. If agent $i$ exerts effort, they succeed in their individual task with probability $q_i$. We focus on the case where the principal's expected reward is additive across agents, such that the (normalized) expected reward from a set of agents $S$ exerting effort is $f(S) = \sum_{i \in S} q_i$.

As observed in \cite{CastiglioniM023}, optimal contracts in this specific setting are remarkably simple yet highly discriminatory. The principal can extract the entire social welfare as surplus by incentivizing precisely those agents $i \in [n]$ for which $q_i > c_i$, paying them just enough to cover their costs. Specifically, the optimal payment to agent $i$ is $c_i/q_i$ for success and zero otherwise. While efficient, these contracts may be perceived as unfair: two agents that succeed and contribute equally to the principal's utility can receive vastly different payments. For instance, if $q_1 = q_2 = 1/2$ but $c_1 = 0.1$ and $c_2 = 0.01$, agent $1$ is paid $0.2$ while agent $2$ is paid only $0.02$.

Motivated by this discrepancy, we propose and analyze \emph{anonymous contracts}. These contracts avoid such imbalance by constraining the principal to offer a common payment scheme to all agents that succeed, regardless of their individual costs $c_i$. 
In what follows, we will use $\bw = (w_1, \ldots, w_n)$ to denote the payments of such a contract, with $w_j$ designating the payment 
to each successful agent
when exactly $j$ agents succeed. 
By design, such contracts are inherently less discriminatory and may be more appealing in many real-world settings.

Observe that payment vector $\bw$ is unrestricted and can have both positive and negative components. When all of its components $w_j \ge 0$, we say that the principal has to satisfy \emph{limited liability}, which is the main focus of this work, and otherwise the setting is \emph{without limited liability}.

\subsection{Our Contributions and Techniques}

Our goal in this work is to quantify the principal's loss from using an anonymous contract, rather than an unrestricted contract.

\paragraph{Structural Insights.}

Our first objective, addressed in Section~\ref{sec:pne}, is, to establish whether anonymous contracts are well-defined for arbitrary anonymous payments. We show that this is indeed the case in a strong sense: for any payment vector $\bw \in \mathbb{R}^n$ (i.e., with possibly negative payments), there exists a subset $S \subseteq [n]$ such that exactly the agents in $S$ are incentivized under payment $\bw$. In particular, no agent in $S$ benefits from deviating by not exerting effort, and no agent outside $S$ would prefer to exert effort given the expected payment. In other words, every payment vector $\bw$ admits a \emph{pure Nash equilibrium}:

\medskip
\noindent \textbf{Theorem (Existence of PNEs, \Cref{thm:equilibrium}).} For all payment vectors $\bw \in \R^n$, the anonymous contract defined by $\bw$ admits a (pure) Nash equilibrium.
\medskip

Our proof leverages the symmetry in reward functions, showing that if no equilibrium existed, the iterative addition or removal of agents would lead to a contradiction due to reward cancellations. By carefully tracking how expected payoffs shift with deviations, we establish that some subset of agents always reaches a stable equilibrium. 

Beyond existence, we construct explicit examples demonstrating that equilibria need not be unique. For a fixed $\bw$, multiple stable agent subsets can emerge, indicating that identical contracts may yield vastly different outcomes depending on agent responses. This highlights the importance of equilibrium selection in anonymous contracts:

\medskip
\noindent \textbf{Proposition (Unbounded Gap between PNEs, \Cref{prop:unbounded}).} 
There exists an instance and a payment vector $\bw \in \R^n_+$
such that the gap in principal utility between the best and the worst equilibrium induced by $\bw$ is unbounded.
\medskip

The conclusion of \Cref{prop:unbounded} poses a critical risk to the principal: since the gap between best and worst equilibria can be arbitrarily large, she cannot safely rely on a contract that performs well only in a favorable equilibrium. This ambiguity renders best-case guarantees fragile. Consequently, we seek contracts admitting a \emph{unique} equilibrium
In particular, we define \emph{uniform anonymous contracts} as anonymous contracts that pay the same amount irrespective of the number of successes, i.e., set $w_1 = \ldots = w_n = w$.

\medskip

\noindent \textbf{Proposition (Unique PNE for Uniform Anonymous Contracts, \Cref{prop:equal}).} A uniform anonymous contract $\bw$ admits a unique (pure) Nash equilibrium. Moreover, this (unique) equilibrium is a \emph{prefix} set, that is a set $S$ consisting of 
$|S|$
agents sorted in nondecreasing order of $\tfrac{c_i}{q_i}$.
\medskip

\paragraph{Results With Limited Liability.} Next, we analyze the efficiency of anonymous contracts from the principal's perspective in the limited liability case. The first insight is a strong impossibility result that relates the utility of the principal under any anonymous contract to the welfare. We 
parameterize instances by
the largest gap between highest and smallest agent's success probability $Q = (\max_i q_i)/(\min_i q_i)$:

\medskip
\noindent \textbf{Theorem (Impossibility under Limited Liability, \Cref{thm:negativegeneral}).} 
For any $n$ and $Q$,
there exists an instance such that the principal's utility collected by an optimal anonymous contract is 
$\Omega(\min \{ n, \log Q \})$
times smaller than the social welfare of the same instance.
\medskip

Observe that, in the worst case ($Q = 2^n$), the performance of any anonymous contract is capped at a linear approximation of the social welfare. In light of this strong impossibility result, we highlight the second crucial property of \emph{uniform anonymous contracts}, i.e., that they match this rate almost exactly (up to lower order terms):

\medskip
\noindent \textbf{Theorem (Approximation via Uniform Anonymous Contracts, \Cref{thm:Qbounds}).} The principal's utility collected by an optimal uniform anonymous contract is always at most $O(\min\{n, \log(Qn)\})$ times smaller than the social welfare.
\medskip

Our key insight to prove this result is that it is always profitable to incentivize a prefix of agents, i.e., the agents indexed from 1 to some \( k \) according to the ratio $c_i/q_i$. By optimizing over such prefixes, we establish a structural link between the optimal uniform contract and social welfare, leading to our approximation bound. In addition to the above two desirable properties, optimal uniform anonymous contracts are also efficiently computable, in \( \text{poly}(n) \) time.

\paragraph{Results Without Limited Liability.} In the last part of this work, we circumvent the impossibility bounds of \Cref{sec:general_instances} by lifting the limited liability assumption. We demonstrate that allowing for negative payments dramatically improves the power of anonymous contracts. In the general setting, this flexibility yields a logarithmic approximation to the social welfare. In addition, under the mild but \emph{necessary} condition that agents have distinct success probabilities, this approach enables the principal to extract the full social welfare. 

\medskip
\noindent \textbf{Theorem (Anonymous Contracts without Limited Liability, \Cref{thm:positive_without_ll} and \Cref{thm:invertible-noLL}).} There always exists a payment vector achieving an $O(\log n)$-approximation to the social welfare. This is (asymptotically) tight, even among anonymous contracts that satisfy limited liability, when the agents' success probabilities can all be the same. In contrast, if all agents' success probabilities $q_i$ are distinct, there exists a payment vector that extracts the full social welfare.
\medskip

This result highlights a surprising reversal regarding the structural hardness of instances. Under limited liability, the worst-case scenarios for anonymous contracts arise when the agents' success probabilities $q_i$'s are exponentially spread. On the contrary, removing limited liability flips this dynamic: as distinct probabilities facilitate full welfare extraction, the hardest instances to approximate become those where all agents have identical success probabilities (see \Cref{fig:ll_vs_noll}).

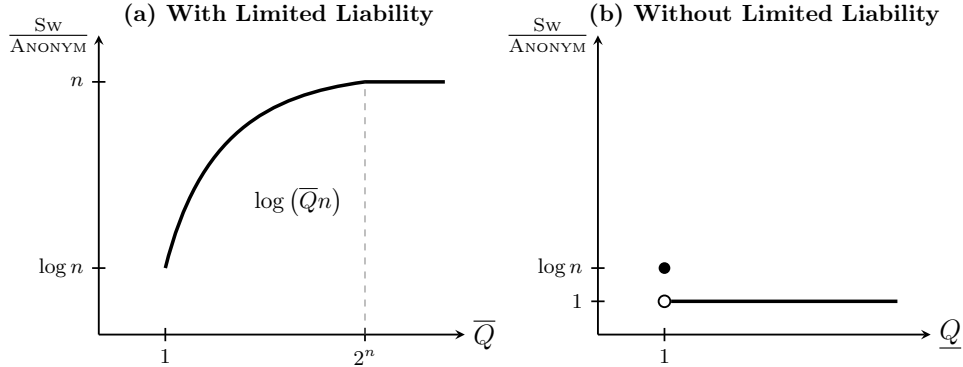
\begin{figure}[htbp]
\centering
\resizebox{0.8\linewidth}{!}{
\tikzset{every picture/.style={line width=0.75pt}}

\begin{tikzpicture}[
    axis/.style={->, >=stealth, thick},
    label/.style={font=\large},
    tick/.style={font=\small},
    curve/.style={thick, line width=1.5pt},
    dashed_line/.style={dashed, thin, gray}
]

    \begin{scope}[local bounding box=leftgraph]
        \node[font=\bfseries] at (2.7, 4.8) {(a) With Limited Liability};

        \draw[axis] (0,0) -- (5.5,0) node[right, label] {$\overline{Q}$};
        \draw[axis] (0,0) -- (0,4.5) node[left, label] {$\tfrac{\SW}{\textsc{Anonym}}$};

        \draw (0.1, 1) -- (-0.1, 1) node[left, tick] {$\log n$};
        \draw (0.1, 3.8) -- (-0.1, 3.8) node[left, tick] {$n$};

        \draw (1, 0.1) -- (1, -0.1) node[below, tick] {$1$};
        \draw (4, 0.1) -- (4, -0.1) node[below, tick] {$2^n$};

        \draw[curve] (1, 1) .. controls (1.5, 3) and (2.5, 3.6) .. (4, 3.8) -- (5.2, 3.8);

        \draw[dashed_line] (4, 0) -- (4, 3.8);

        \node at (3, 2) {$\log\left(\overline{Q}n\right)$};
    \end{scope}

    \begin{scope}[xshift=7.5cm, local bounding box=rightgraph]
        \node[font=\bfseries] at (2.5, 4.8) {(b) Without Limited Liability};

        \draw[axis] (0,0) -- (5,0) node[right, label] {$\underline{Q}$};
        \draw[axis] (0,0) -- (0,4.5) node[left, label] {$\tfrac{\SW}{\textsc{Anonym}}$};

        \draw (0.1, 0.5) -- (-0.1, 0.5) node[left, tick] {$1$};
        \draw (0.1, 1) -- (-0.1, 1) node[left, tick] {$\log n$};

        \draw (1, 0.1) -- (1, -0.1) node[below, tick] {$1$};

        \fill (1, 1) circle (2.5pt);

        \draw[curve] (1.1, 0.5) -- (4.5, 0.5);

        \draw[thick, fill=white] (1, 0.5) circle (2.5pt);

    \end{scope}

\end{tikzpicture}
}
\caption{(a) Under limited liability, the worst-case gap between social welfare $\SW$ and anonymous contracts $\textsc{Anonym}$ grows as $\min\{\log(\overline{Q} n), n\}$, where $\overline{Q} = \max_{i,j} {q_i}/{q_j}$. Without limited liability, if $\underline{Q} = \min_{i \neq j} {q_i}/{q_j} = 1$, the worst-case gap stays at $\log n$; once $\underline{Q} > 1$, this gap becomes exactly $1$.}
\label{fig:ll_vs_noll}
\end{figure}

\subsection{Related Work}\label{sec:rel-work}

Our work is situated within the burgeoning field of algorithmic contract theory (see, e.g.,  \cite{BabaioffFN06, BabaioffFNW12} and \cite{dutting2019simple}). We refer the reader to \cite{DuttingFT24} and \cite{Feldman26} for comprehensive surveys.

\paragraph{Multi-Agent Contracts with Individual Outcomes.}
We adopt a multi-agent contracting model with individually observable outcomes. The  most relevant predecessors in this direction are \cite{CastiglioniM023,GoelHC24}.

\citet{CastiglioniM023} 
consider a multi-agent setting, where each agent takes one of $\ell$ actions, and for each agent and action pair, there is a distribution over $m$ possible outcomes. The joint outcome is a tuple of outcomes, one from each agent. They observe that, while the principal could in principle make each agent's payment contingent on the entire vector of outcomes, an optimal contract makes payments to each agent, that are contingent only on the agent's individual outcome not that of other agents. 
An important consequence of this is that it  
completely decouples the problem of incentivizing a certain profile of actions as cheaply as possible across the agents. However, agents are coupled through the fact that the principal enjoys a reward for each tuple of outcomes, and this is not necessarily separable across agents. Indeed, the focus of their work is on IR-submodular and DR-supermodular rewards. Our problem is quite different because, although anonymous contracts present a class of simple contracts, they introduce an interdepence between agents that necessitates a careful equilibrium analysis. Importantly, the problem of designing optimal anonymous contracts is mathematically interesting and challenging already in the natural case where the principal's reward is additive across agents.

Recent work on Luce contracts \cite{GoelHC24} 
examines contract design where 
agents select success probabilities \( p_i \), incurring convex costs \( c_i(p_i) \), while a principal maximizes a weighted sum of successes under a budget constraint. Contracts allocate budget fractions based on which agents succeeded (resp.~failed).
In equilibrium, optimal contracts distribute the entire budget among successful agents, with agents that fail receiving nothing. A key challenge is the exponential search space over possible contracts. To address this, the authors introduce Luce contracts, a restricted class of contracts that imposes agent-specific orderings and weights, and yet characterizes the structure of optimal contracts. However, Luce contracts remain inherently non-anonymous, and it is unclear whether finding the right agent-specific ordering and weights is a task that can be performed in polynomial time.

In summary, while we work in a related setting as \cite{CastiglioniM023,GoelHC24} insisting on anonymity leads to a very different optimization problem, with incomparable challenges.

\paragraph{Multi-Agent Contracts with Joint Outcomes.}
While we focus on a model with individually observable outcomes, there is a substantial body of work %
on multi-agent contracts with a \emph{jointly observable} outcome \cite[e.g.,][]{BabaioffFN06,BabaioffFNW12,DuettingEFK23}. The fact that the principal can only observe a joint outcome (e.g., whether a project succeeded or not) constrains the principal's ability to fine-tune payments, and makes results for this model largely incomparable to results in the setting with individual outcomes.

This line of work studies the computational aspects of the model under binary actions \cite{BabaioffFN06,BabaioffFNW12,DuettingEFK23}, combinatorial actions \cite{DuettingEFK25}, the presence of budget constraints \cite{AharoniHT25,FeldmanGPS25,FeldmanGPS26a}, as well as various notions of equilibria \cite{dutting2025black} among others.
Importantly, in close relation to our study, several recent studies 
incorporate fairness considerations. 
\cite{FeldmanGPS26b,DingLS26} study 
equal-pay contracts, in which the principal may hire a subset of agents but must pay all hired agents the same amount, and show that, in the worst case, the gap between equal-pay contracts and optimal contracts is $\Theta(\log n / \log \log n)$, where $n$ is the number of agents. \cite{CastiglioniCL25b} study contracts that are envy-free under swaps, meaning that no two agents would obtain higher utility under the  new equilibrium obtained by exchanging their contracts. Notably, equal-pay contracts are envy-free; however, they are not anonymous, since the principal may select which agents to incentivize and is only constrained to use identical payments for those agents.

Finally, \cite{AlonCCELT25} introduce a model in which the principal has a set of available projects and assigns each agent to at most one project. Envy-free contracts in this setting are studied by \cite{CastiglioniCL25a}.

\paragraph{Contracts for Typed Agents.}
Another line of work studies contract design when the agent has an unknown \emph{type}, and the principal is only given a distribution over types \cite{AlonDT21,AlonDLT23,CastiglioniCLXZ25,CastiglioniMG22,GuruganeshSW023,GuruganeshSW21,CastiglioniMG23,FengMaXiao26}. When the distribution over types is itself unknown, a further line of work considers the resulting learning problem \cite{ZhuBYWJJ23,CohenDK23,HoSV16,BacchiocchiCGM25,DuttingFPS25,ChenCDH24,Hogsgaard26}. This perspective is closely related to our model of anonymous contracts. In particular, our analysis in \Cref{sec:uniqueness_of_pne_under_uniform_anonymous_contracts} implies that uniform anonymous contracts for an instance with $n$ agents in our model are mathematically equivalent to single contracts for an instance with a uniform distribution over $n$ agent types, where the principal's utility and social welfare are scaled by a factor of $1/n$. Under this correspondence, results on approximation guarantees for single contracts in the Bayesian setting translate directly to bounds for uniform anonymous contracts in our model. 
In particular,
\cite[Theorem 3]{GuruganeshSW21} shows that in their model with binary actions and binary outcomes, a single contract achieves a $\Theta(\log n)$-approximation to social welfare when either all agents have identical costs or identical success probabilities, which yields the same guarantee in our setting; see \Cref{sec:equal_prob,sec:equal_cost}, where we include proofs in our model for completeness. 
We emphasize that this correspondence between our model and typed-agent models does not extend to non-uniform anonymous contracts, which lack a natural counterpart in the Bayesian framework.

\paragraph{Binary Contests.} Our work is also related to binary contests \cite{GhoshK16, HaggiagOS22, LevyAS24, LevySA18, LevySA24, LevySR17, SarneL17}. In the binary contests model, each agent is characterized by a quality level \( q_i \) and a participation cost \( c_i \), with types \( (q_i, c_i) \) drawn from a known joint distribution \( F(\bq, \bc) \). An agent who enters the contest produces an outcome of quality \( q_i \) at a cost \( c_i \). A principal is tasked with allocating a total prize \( V \) among the participants based on their relative rankings. Each agent decides whether to participate by weighing the expected prize---determined by how their quality \( q_i \) ranks among other entrants---against their cost \( c_i \).  The key distinction in our model lies in the interpretation of \( q_i \)'s. Rather than representing a fixed quality level, \( q_i \) corresponds to a probability of success, introducing a different layer of stochasticity. As a result, an agent's expected payoff is based on the uncertainty of his success, together with that of other agents. In contrast, the binary contests framework involves randomness in sampling \( q_i \) and \( c_i \), but once these values are realized, the problem reduces to distributing a predetermined prize \( V \) among the highest-performing participants to maximize value extraction. Our setting diverges fundamentally in that there is no fixed prize budget; instead, costs are incurred based on the number of agents who ultimately succeed, making the reward structure inherently more contingent on realized outcomes.

\paragraph{Effort-Based Prize Splitting.}
Among contest-design models, the closest in spirit to \emph{anonymous/fair} reward rules is the effort-based prize-splitting literature, where the designer commits to a rule that divides a \emph{fixed prize budget} across participants as a function of their (typically rank-ordered) effort or performance, treating agents symmetrically. A recurring message is that concentrating rewards is often optimal: e.g., \citet{ArchakS09} and \citet{MoldovanuS06} identify conditions (risk-neutrality; linear/convex costs when the principal's value is the sum of effort) under which winner-takes-all is optimal, while \citet{ChawlaHS12} show winner-takes-all is optimal when prizes are committed ex ante and the principal only values the best submission; \citet{DiPalantinoV09} and related work highlight how reward levels shape participation. This perspective is complementary to ours. Prize splitting allocates a predetermined budget based on realized effort/ranks, whereas in our model outcomes are \emph{binary and stochastic}: an agent's ``performance'' is not a deterministic quality/effort level but a success event occurring with probability $q_i$. Accordingly, our rewards are naturally \emph{outcome-contingent} (e.g., paying agent $i$ only upon success) rather than budget-splitting rules over effort/ranks, and the added expressiveness of general contracts fundamentally changes the problem from choosing an optimal \emph{prize split} to choosing an \emph{outcome-contingent payment rule equal across successful agents}.

\section{Model and Preliminaries}\label{sec:model}

We study a multi-agent contract-design problem with observable individual outcomes. This puts us in line with \cite{CastiglioniM023}, and separates us from \cite{DuettingEFK23}.  
In our model a single principal interacts with $n$ agents. 
Each agent $i\in [n]$ is characterized by a tuple $(q_i, c_i) \in [0,1] \times \mathbb{R}_{\geq 0}$, where $q_i$ denotes the probability of success when exerting effort, and $c_i$ is the cost of exerting effort. When not exerting effort, the agent's probability of success and cost are both zero.
The principal can observe individual outcomes, but not whether an agent exerted effort or not.  
In particular, for each agent, there are two possible outcomes $\Omega = \{0,1\}$, where we interpret $0$ as failure and $1$ as success. We use $\omega_i \in \Omega$ to denote agent $i$'s outcome, and $\boldsymbol{\omega} = (\omega_1, \ldots, \omega_n) \in \Omega^n$ for the outcomes of all agents.
For a set $T \subseteq S$, let $Q_T(S)$ be the probability that out of all agents in $S$, exactly the set $T$ of agents succeeds:
\[
    \EqStyle Q_T(S) := \prod_{i\in T} q_i\prod_{i\in S\setminus T}(1-q_i).
\]

The principal cares about the number of agents that succeeded, 
and derives a certain reward $r \geq 0$ from each of them. Without loss of generality, we normalize the reward so that $r = 1$. Thus, when the set of agents $S \subseteq [n]$ exerts effort, the principal's expected reward is 
\[
   \EqStyle f(S) = \sum_{T\subseteq S} Q_T(S) \cdot |T| \cdot r = \sum_{T\subseteq S} Q_T(S) \cdot |T|.
\]

The social welfare achieved by a set of agents $S$ is given by $\textsc{SW}(S) = f(S) - \sum_{i \in S} c_i$. The maximum social welfare of an instance $\mathcal{I}$ is $\textsc{SW}(\mathcal{I}) = \max_{S} (f(S) - \sum_{i \in S} c_i)$.

\subsection{Contracts and Utilities}

To incentivize the agents to exert effort the principal employs a contract. In its most general form a contract is a tuple $\mathbf{t} = (t_1, \ldots, t_n)$ of payment functions $t_i: \{0,1\}^n \to \mathbb{R}$ for each agent $i$. The input to $t_i$ is a vector $\boldsymbol{\omega} \in \{0,1\}^n$ which indicates for each agent $i'$ whether agent $i'$ succeeded ($\omega_{i'} = 1$) or not ($\omega_{i'}= 0$).  
We interpret $t(T)$ as $t(\boldsymbol{\omega}(T))$ where $\boldsymbol{\omega}_i(T) = 1$ for $i \in T$ and $\boldsymbol{\omega}_i(T) = 0$ otherwise.

We may or may not impose limited liability. A contract satisfies \textit{limited liability} (LL) if payments are non-negative, that is $t_i(\omega) \ge 0$ for all $i$ and $\omega$. This requirement is often economically justified, and also serves to rule out trivial solutions to the contracting problem. An important thing to note is that with additional constraints on the contract (such as anonymity), the problem is non-trivial and therefore interesting even if we do not impose limited liability.

Each agent $i$ strategically decides to exert effort of not. Denoting by $S$ the set of agents that exert effort, agent $i$'s utility is 
\[
\EqStyle u_i(S,\mathbf{t}) =\sum_{T \subseteq S} Q_T(S) t_i(T) - \mathbf{1}\{i \in S\} \cdot c_i.
\]

The principal's utility in turn is 
\[
\EqStyle u_p(S,\mathbf{t}) = f(S) - \sum_{T \subseteq S} Q_T(S) \sum_{i \in [n]} t_i(T). 
\]

Each contract $\mathbf{t}$ induces a game among the agents, and we are interested in the (pure) Nash equilibria of that game. Formally, we say that a set $S$ is a (pure) Nash equilibrium of contract $\mathbf{t}$ (or that contract $\mathbf{t}$ incentives the set of agents $S$ to exert effort) if:
\begin{align*}
&u_i(S,\mathbf{t}) \geq u_i(S \setminus\{i\},\mathbf{t}) &&\text{for all $i \in S$, and} 
&u_i(S,\mathbf{t}) \geq u_i(S \cup \{i\},\mathbf{t}) &&\text{for all $i \not\in S$.} 
\end{align*}

The principal's (and our) goal is to find a contract $\mathbf{t}$ and a pure Nash equilibrium $S$ of $\mathbf{t}$ that maximizes the principal's utility  $u_p(S,\mathbf{t})$. 

\subsection{General vs.~Anonymous Contracts}

We classify contracts based on the amount of information the principal uses to determine payments and the constraints imposed on those payments. We define the hierarchy ranging from the most general to the most restrictive contracts. For each class we may or may not impose limited liability, as we shall see adding this requirement is without loss for general contracts, but for the more restricted classes there will be a distinction between the version with limited liability and without.

\paragraph{General Contracts.}
In the most general setting, the principal specifies a payment function $t_i: \{0,1\}^n \to \mathbb{R}$ for each agent $i$. This payment can depend on the full vector of outcomes $\omega$.  As observed in \cite{CastiglioniM023}, for general contracts, it is without loss of generality to make agent $i$'s payment contingent only on their own success, that is $t_i(\omega) > 0$ only if $\omega_i = 1$, and $0$ otherwise. This class allows the principal to extract full social welfare, even if we additionally impose limited liability. We have the following proposition (whose proof, provided for completeness, is deferred to \Cref{app:model}): 

\begin{restatable}[General contracts, \cite{CastiglioniM023}]{proposition}{propunrestricted}\label{prop:unrestricted}
For general contracts $\mathbf{t}$ with $t_i: \{0,1\}^n \to \mathbb{R}$ for each agent $i$ we have:
\begin{enumerate}
\item Every set of agents $S \subseteq [n]$ such that no agent $i \in S$ simultaneously has $q_i = 0$ and $c_i > 0$ can be incentivized by a general contract. 
\item If the set of agents $S \subseteq [n]$ can be incentivized by a general contract, then the optimal such contract sets $t_i(T) = \mathbf{1} \{i \in T\} \cdot c_i/q_i$ for $i \in S$ such that $q_i > 0$, and $t_i(T) = 0$ otherwise.
\item The optimal general contract for incentivizing set $S \subseteq [n]$ (if it can be incentivized) yields a principal's utility of $f(S) - \sum_{i \in S} c_i$.
\end{enumerate}
\end{restatable}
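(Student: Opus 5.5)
The plan is to prove the three parts in the order 2, 1, 3, because the optimal contract of part 2 is also the witness needed for part 1.

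\emph{Step 1: a lower bound on expected payments.} Fix $S$ and any contract $\mathbf{t}$ that incentivizes $S$. For $i \in S$ write $S_{-i} = S \setminus \{i\}$. By independence,
\[
u_i(S,\mathbf{t}) + c_i - u_i(S_{-i},\mathbf{t}) = q_i \,(A_i - B_i),
\]
where $A_i = \sum_{T' \subseteq S_{-i}} Q_{T'}(S_{-i})\, t_i(T' \cup \{i\})$ and $B_i = \sum_{T' \subseteq S_{-i}} Q_{T'}(S_{-i})\, t_i(T')$. The equilibrium condition $u_i(S,\mathbf{t}) \ge u_i(S_{-i},\mathbf{t})$ therefore reads $q_i(A_i - B_i) \ge c_i$.

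\emph{Step 2: the cost of $S$ under $\mathbf{t}$.} Agent $i$'s expected payment when $S$ works is $q_i A_i + (1-q_i)B_i$. If we impose limited liability, then $B_i \ge 0$, so this is at least $q_i(A_i - B_i) + B_i \ge c_i$. Without limited liability the same bound still holds for the principal's objective, via individual rationality: an agent can always guarantee utility zero by not working with zero payment, and the principal's cost is measured as total expected payment. If the paper's model does not include IR, I would restrict attention to LL contracts here, which is where the statement is cleanest.

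Summing over $i \in S$, any contract incentivizing $S$ has total expected payment at least $\sum_{i\in S} c_i$. Hence the principal's utility is at most $f(S) - \sum_{i\in S} c_i$, which is the upper bound in part 3.

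\emph{Step 3: the proposed contract attains the bound.} Take $t_i(T) = \mathbf{1}\{i \in T\}\, c_i/q_i$ for $i \in S$ with $q_i > 0$, and $t_i \equiv 0$ otherwise.
\begin{itemize}
\item For $i \in S$ with $q_i > 0$: $A_i = c_i/q_i$ and $B_i = 0$, so agent $i$ gets exactly $c_i - c_i = 0$ by working and $0$ by shirking. It is therefore indifferent and does not deviate.
\item For $i \in S$ with $q_i = 0$: the hypothesis of part 1 forces $c_i = 0$, so working is weakly optimal.
\item For $i \notin S$: payments are identically zero, so joining yields $-c_i \le 0$ and the agent does not deviate.
\end{itemize}
So $S$ is an equilibrium, which proves part 1. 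Each agent's expected payment is exactly $c_i$, so the contract achieves the upper bound from Step 2. This proves parts 2 and 3.

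\emph{Main obstacle.} The only delicate point is Step 2 without limited liability: negative payments on failure could in principle lower the expected transfer below $c_i$. The fix is the IR argument above, or simply noting that the paper's general-contract claim is made together with LL. I would also remark on the $q_i = 0$, $c_i > 0$ exclusion: for such an agent working has no effect on the outcome distribution but costs $c_i$, so no contract can make it work, which shows the condition in part 1 is necessary.
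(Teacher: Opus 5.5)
Under limited liability your argument is correct and is essentially the paper's proof. The paper first truncates each $t_i$ to outcomes with $\omega_i=1$; under LL this lowers payments without weakening agent $i$'s incentive. It then notes that the constant payment $c_i/q_i$ on success meets the incentive constraint at the minimal expected cost $c_i$. Your identity $q_iA_i+(1-q_i)B_i=q_i(A_i-B_i)+B_i\ge c_i+B_i$ states the same fact directly. It also gives the necessity of excluding $q_i=0,\ c_i>0$ for free, since the constraint becomes $0\ge c_i$. When you sum over $i\in S$, say explicitly that the principal's payments to agents outside $S$ are nonnegative by LL, because they enter the total too.

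You should drop the individual-rationality remedy for the case without limited liability. The model has no participation constraint. An agent who does not work still receives $t_i(\boldsymbol{\omega})$, which may depend on the other agents' outcomes and may be negative, so shirking does not guarantee utility zero. In fact part 3 is false without LL. Replace every $t_i$ by $t_i-M$: each agent's utility shifts by $-M$ under every action profile, so all equilibrium inequalities still hold. Meanwhile the principal's utility rises by $nM$, which is unbounded. So LL is not merely where the statement is cleanest; it is needed, and the paper's proof assumes it explicitly.
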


A perhaps counterintuitive property of optimal contracts is that they pay less capable agents more. This is true both ``ex post'' 
because
$c_i/q_i$ increases 
when
$c_i$ increases 
or
$q_i$ decreases, but also ``ex ante'' as an agent's expected payment is equal to $q_i \cdot c_i/q_i = c_i$ which increases as $c_i$ increases. 

In this work we focus on contracts that avoid the 
ex-post discrepancy
by paying all successful agents the same. We refer to such contracts as anonymous contracts.

\paragraph{Anonymous Contracts.}
An anonymous contract cannot depend on the identity, but only on the number of successful agents. Formally, let $T \subseteq S$ be the set of agents that succeed. The payment to agent $i$ is defined as $t_i(T) = \mathbf{1}\{i \in T\} \cdot w_{|T|}$, where $\mathbf{w} = (w_1, \dots, w_n)$ is a vector of payments fixed in advance. Here, $w_k$ represents the payment given to a successful agent conditional on exactly $k$ agents succeeding.

\paragraph{Uniform Anonymous Contracts.} This is a further restriction of anonymous contracts where the payment is constant regardless of the number of successes. Formally, $t_i(T) = \mathbf{1}\{i \in T\} \cdot w$, where $w \in \mathbb{R}$ is a single scalar value. In the notation of anonymous contracts, this corresponds to setting $w_j = w$ for all $j \in [n]$.

\subsection{Analyzing Anonymous Contracts} 

For the purpose of analyzing anonymous contracts, it will be useful to introduce some additional notation. First, since payments only depend on the number of successful agents, it will be useful to define the probability that exactly $t$ agents 
succeed when a given set $S$ of agents exerts effort:
\[
   \EqStyle Q_t(S) = \sum_{{T \subseteq S, |T|=t}} \left( \prod_{i \in T} q_i \cdot \prod_{i \in S \setminus T} (1-q_i) \right). 
\] 
Note that $Q_t(S) = 0$ for $t > |S|$.
Fixing payments $\mathbf{w}$, we can then express agent $i$'s utility as follows 
\begin{align}\label{eq:agent-utility}
    u_i(S, \bw) = \begin{cases}
    q_i \cdot \sum_{j \in [n]} Q_{j-1}(S_{-i}) \cdot w_j - c_i, &\text{for $i \in S$, and} \\
    0 &\text{otherwise}.
    \end{cases}
\end{align}
Similarly, the principal's utility becomes
\begin{align}\label{eq:principal-utility}
    \EqStyle u_p(S, \bw) = f(S) - \sum_{i \in S} \left( q_i \cdot \sum_{j \in [n]} Q_{j-1}(S_{-i}) \cdot w_j \right).
\end{align}
In order to maximize her expected utility, the principal thus aims to solve the following program: 
\begin{align}
    &\max_{S \subseteq [n], \bw \geq \bzero} ~ f(S) - \EqStyle \sum_{i \in S} \left( q_i \cdot \sum_{j \in [n]} Q_{j-1}(S_{-i}) \cdot w_j \right) \label{eq:principal-program1}\\
    &\quad\text{ s.t. }\qquad\; \EqStyle q_i \cdot \sum_{j \in [n]} Q_{j-1}(S_{-i}) \cdot w_j \geq c_i \qquad \forall i \in S \label{eq:principal-program2}\\
    &\quad\qquad\qquad \EqStyle q_i \cdot \sum_{j \in [n]} Q_{j-1}(S) \cdot w_j \leq c_i \qquad\forall i \notin S \label{eq:principal-program3}.
\end{align}

Here \eqref{eq:principal-program1} is simply the principal's utility $u_p(S,\mathbf{w})$, while \eqref{eq:principal-program2} and \eqref{eq:principal-program3} capture that $S$ should be a pure Nash equilibrium of $\mathbf{w}$. We succinctly denote as $u_p(S(\bw))$ the highest utility the principal can get by offering contract $\bw$, where $S(\bw)$ is the maximizing set for the specific $\bw$. As it turns out, analyzing this mathematical program is much more challenging then solving the unconstrained contract problem.

\section{Structural Insights on Anonymous Contracts}\label{sec:insights}

The primary objective of this section is to establish crucial insights into the structure of anonymous contracts. Namely, we first establish that every anonymous contract admits a (pure) Nash equilibria. Then, we show that the principal's utility gap between worst and best equilibria can be unbounded.

\subsection{Existence of (Pure) Nash Equilibria}\label{sec:pne}

The goal of this section is to prove the following theorem:
\begin{restatable}{theorem}{thmequilibrium}\label{thm:equilibrium}
    For all payment vectors $\bw \in \R^n$, the anonymous contract defined by $\bw$ admits a (pure) Nash equilibrium. Moreover, this holds for any principal's reward function.
\end{restatable}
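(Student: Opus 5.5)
The plan is to show that the game induced by any anonymous contract $\bw\in\R^n$ is an \emph{exact potential game}, and then take a maximizer of the potential over the finite set $2^{[n]}$. The principal's reward function $f$ does not appear in any agent's utility \eqref{eq:agent-utility}, so once the potential argument works the ``moreover'' part of the statement is immediate.

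First I rewrite the incentive term probabilistically. For a set $S$, let $X_S$ be the number of successes among the agents of $S$, all exerting effort. Then $\sum_{j\in[n]} Q_{j-1}(S)\, w_j = \E{w_{X_S+1}}$, which makes sense whenever $|S|\le n-1$. By \eqref{eq:agent-utility}, the gain to agent $i\notin S$ from joining $S$ is
\[
u_i(S\cup\{i\},\bw)-u_i(S,\bw) = q_i\,\E{w_{X_S+1}} - c_i .
\]
The gain to agent $i\in S$ from leaving is the negative of the same expression, evaluated at $S\setminus\{i\}$.

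Next I define cumulative payments $W(0)=0$ and $W(k)=\sum_{j\le k} w_j$ for $k\in[n]$, and the candidate potential
\[
\Phi(S) = \E{W(X_S)} - \sum_{i\in S} c_i .
\]
The key step is to compute the marginal $\Phi(S\cup\{i\})-\Phi(S)$ for $i\notin S$. Write $X_{S\cup\{i\}} = X_S + B_i$, where $B_i\sim\mathrm{Bernoulli}(q_i)$ is independent of $X_S$. Conditioning on $B_i$ gives
\[
\Phi(S\cup\{i\})-\Phi(S) = q_i\,\E{W(X_S+1)-W(X_S)} - c_i = q_i\,\E{w_{X_S+1}}-c_i ,
\]
which is exactly agent $i$'s unilateral deviation gain. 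So $\Phi$ is an exact potential.

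Finally, let $S^*\in\argmax_{S\subseteq[n]}\Phi(S)$, which exists because there are finitely many subsets. No agent in $S^*$ gains by leaving, since that would strictly increase $\Phi$. No agent outside $S^*$ gains by joining, for the same reason. Hence $S^*$ is a pure Nash equilibrium for every $\bw\in\R^n$; negative entries are allowed because nothing used the sign of $w_j$.

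I do not expect any step to be a genuine obstacle. The only care needed is in the index bookkeeping: the $j-1$ shift and $W(0)=0$ must be handled so that the marginal of $\Phi$ matches \eqref{eq:agent-utility} exactly, including the boundary case $|S|=n-1$.
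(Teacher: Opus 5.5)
Your proof is correct, and it takes a genuinely different and much shorter route than the paper.

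The paper never writes down a potential. It proves the finite improvement property directly:
\begin{enumerate}
\item It assumes a best-response cycle and shows that incrementing and decrementing deviations carry equal total cost (Claim~\ref{cl:cost-cycle}).
\item It uses the exchange identity $g_{i^+}(S)-\ell_{i^-}(S)=g_{i^+}(S\setminus\{i^-\})-\ell_{i^-}(S\setminus\{i^+\})$ (Claim~\ref{cl:reward-reduction}).
\item It decomposes the cycle into peak walks and shows that all gains and losses cancel to zero, contradicting the strict inequalities.
\item It obtains existence from the Monderer--Shapley lemma that the finite improvement property implies a pure equilibrium.
\end{enumerate}

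That exchange identity is exactly the symmetric cross-difference condition that characterizes exact potential games. Your $\Phi(S)=\E{W(X_S)}-\sum_{i\in S}c_i$ is the function it integrates to. It can be read as an expected Rosenthal potential for a single resource whose load is the random number of successes.

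Your verification via $X_{S\cup\{i\}}=X_S+B_i$, with $B_i$ independent of $X_S$, is complete. It covers the boundary case $|S|=n-1$, since then $X_S+1\le n$. It uses neither the sign of $\bw$ nor the reward function $f$.

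Your route gives the following:
\begin{itemize}
\item a short, transparent proof;
\item the stronger conclusion that the game is an exact potential game, so every improvement step strictly increases $\Phi$ and the finite improvement property comes for free;
\item a canonical equilibrium, the potential maximizer, which may be useful for equilibrium-selection questions.
\end{itemize}
The paper's route yields only the finite improvement property, and hence only an implicit generalized ordinal potential. It also requires delicate bookkeeping across peak walks, which your argument avoids entirely.
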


Observe that the above theorem holds for all payment vectors $\bw \in \R^n$ (which can have positive or negative components), i.e., whether or not the principal has to satisfy limited liability.

To establish \Cref{thm:equilibrium}, we first prove that the anonymous contract defined by $\bw$ has the finite improvement property. 
By \citep[Lemma 2.3]{MondererShapley96}, this implies that the contract has a (pure) Nash equilibrium, for every $\bw$.

\begin{restatable}{lemma}{lempnecyle}\label{lem:pne-cycle}
    For all payment vectors $\bw \in \R^n$, if the finite improvement property holds, then a (pure) Nash equilibrium implementing $\bw$ exists.
\end{restatable}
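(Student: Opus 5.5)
This lemma is essentially the standard fact, due to Monderer and Shapley, that a finite game with the finite improvement property (FIP) has a pure Nash equilibrium. The plan is to spell that argument out in our setting rather than merely cite it.

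First, I will fix $\bw \in \R^n$ and view the anonymous contract as a finite normal-form game. The players are the $n$ agents, each with strategy set $\{\text{effort}, \text{no effort}\}$. A strategy profile is identified with the set $S \subseteq [n]$ of agents who exert effort, and utilities are given by \eqref{eq:agent-utility}. The profile space is $2^{[n]}$, which is finite with $2^n$ elements. An \emph{improvement step} moves from $S$ to $S \cup \{i\}$ or $S \setminus \{i\}$ for a single agent $i$ whose utility strictly increases. An \emph{improvement path} is a sequence of such steps. The FIP means every improvement path is finite.

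Second, I will build a maximal improvement path. Start from an arbitrary profile, for instance $S_0 = \emptyset$. Whenever the current profile $S_t$ admits some agent with a strictly profitable unilateral deviation, take that step to obtain $S_{t+1}$. By the FIP this process cannot continue forever, so it ends at some profile $S_T$ from which no improvement step exists.

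Third, I will check that $S_T$ is a Nash equilibrium. Since no improvement step exists, $u_i(S_T,\bw) \ge u_i(S_T \setminus \{i\},\bw)$ for every $i \in S_T$, and $u_i(S_T,\bw) \ge u_i(S_T \cup \{i\},\bw)$ for every $i \notin S_T$. These are exactly the two equilibrium conditions of Section~\ref{sec:model}, or equivalently constraints \eqref{eq:principal-program2}--\eqref{eq:principal-program3} for this $\bw$. Hence $S_T$ is a pure Nash equilibrium of the contract $\bw$, as claimed.

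There is no real obstacle here. The only point needing care is matching definitions: improvements must be \emph{strict}, so that termination of the path exactly means the weak-inequality equilibrium conditions hold. The substantive work, verifying the FIP for anonymous contracts, belongs to the subsequent argument for \Cref{thm:equilibrium}, not to this lemma.
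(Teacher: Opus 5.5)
Your proposal is correct and follows essentially the same route as the paper. Both arguments run strict improvement (best-response) steps from some starting profile and use the FIP to conclude the sequence terminates. Both then observe that at the terminal set no agent wants to join or leave, which is exactly constraints \eqref{eq:principal-program2}--\eqref{eq:principal-program3}. Your explicit remark that improvements are strict, so that termination yields the weak equilibrium inequalities, is a point the paper leaves implicit.
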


The proof of this fact is standard, but we provide one for completeness in Appendix~\ref{app:equilibria}.\footnote{Note that by \citep[Lemma 2.5]{MondererShapley96} having the finite improvement property is equivalent to admitting a generalized ordinal potential.} To conclude the proof of Theorem~\ref{thm:equilibrium}, we then establish the finite improvement property.

\begin{restatable}{lemma}{lemnocyle}\label{lem:nocycle}
    For all payment vectors $\bw \in \R^n$, the anonymous contract defined by $\bw$ has the finite improvement property.
\end{restatable}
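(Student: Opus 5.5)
The plan is to show that the game induced by an anonymous contract $\bw$ is an \emph{exact potential game}. Finite improvement then follows, because the strategy space $2^{[n]}$ is finite and every improvement step strictly increases the potential. By \eqref{eq:agent-utility}, agent $i$'s gain from switching from ``no effort'' to ``effort'', with the others' effort set $R = S_{-i}$ fixed, is
\[
\Delta_i(R) = q_i \, g(R) - c_i, \qquad\text{where } g(R) := \sum_{j \in [n]} Q_{j-1}(R)\, w_j .
\]
Here $g(R)$ is the expected payment per success to an agent joining the set $R$. It suffices to find a set function $H$ with $H(R \cup \{i\}) - H(R) = q_i\, g(R)$ for all $R$ and all $i \notin R$. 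Then
\[
\Phi(S) := H(S) - \sum_{i \in S} c_i
\]
changes under any unilateral deviation by exactly the deviating agent's change in utility.

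For the candidate, I would use the cumulative payments $W_k := \sum_{j \le k} w_j$ (with $W_0 = 0$) and set
\[
H(S) := \sum_{k \ge 0} Q_k(S)\, W_k .
\]
The verification uses the standard recursion for adding agent $i \notin R$:
\[
Q_k(R \cup \{i\}) = (1-q_i)\, Q_k(R) + q_i\, Q_{k-1}(R).
\]
Substituting this into $H$ gives
\[
H(R\cup\{i\}) - H(R) = q_i \sum_k \bigl(Q_{k-1}(R) - Q_k(R)\bigr) W_k .
\]
After re-indexing the second sum, this becomes
\[
q_i \sum_k Q_{k-1}(R)\,(W_k - W_{k-1}) = q_i \sum_k Q_{k-1}(R)\, w_k = q_i\, g(R).
\]
The sums are finite, since $Q_k(R) = 0$ for $k > |R|$ or $k < 0$, so the boundary terms vanish. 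As a sanity check on consistency, one can also verify directly that the two-step increments commute: writing $g(R \cup \{i\}) = (1-q_i) g(R) + q_i g^{+}(R)$, where $g^{+}$ uses the shifted payments $w_{j+1}$, both orders of adding $i$ and $j$ give $(q_i + q_j - q_iq_j)\, g(R) + q_iq_j\, g^{+}(R)$. This expression is symmetric in $i$ and $j$, as an exact potential requires.

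With $\Phi$ in hand, the argument concludes as follows. Any improvement step moves from $S$ to $S \triangle \{i\}$ and strictly increases $u_i$. In either direction, joining or leaving, this is exactly $\pm\Delta_i(S_{-i}) > 0$, hence a strict increase of $\Phi$. Since $\Phi$ takes finitely many values, every improvement path is finite, which is the finite improvement property; \Cref{lem:pne-cycle} then yields the equilibrium. Nothing in the argument uses the sign of the $w_j$, so it holds for all $\bw \in \R^n$. The principal's reward function also never enters agents' utilities, which is consistent with the ``any reward function'' clause of \Cref{thm:equilibrium}.

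The main obstacle is guessing the potential. Once the telescoping identity with $W_k$ is spotted, everything else is routine bookkeeping. If the closed form failed, my fallback would be to define $H$ by summing the marginals along a fixed ordering of $S$, and to use the symmetric two-step commutation identity above to show that the result is independent of the ordering.
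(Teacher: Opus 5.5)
Your proof is correct, and it takes a genuinely different and cleaner route than the paper's. Your identity checks out. Writing $X_S$ for the number of successes among $S$, you have $H(S)=\mathbb{E}[W_{X_S}]$. Adding an independent success of probability $q_i$ then gives $H(R\cup\{i\})-H(R)=q_i\,\mathbb{E}[w_{X_R+1}]=q_i\,g(R)$. So $\Phi$ is an exact potential, and every strict improvement step raises it.

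The paper never writes down a potential. It argues by contradiction on a hypothetical best-response cycle, in three steps:
\begin{itemize}
\item It shows that the costs of joining and leaving deviations balance around the cycle.
\item It proves the exchange identity $g_{i^+}(S)-\ell_{i^-}(S)=g_{i^+}(S\setminus\{i^-\})-\ell_{i^-}(S\setminus\{i^+\})$.
\item It applies this identity repeatedly, first within ``peak walks'' and then across consecutive peak walks. This shows that total gains minus total losses around the cycle sum to zero, contradicting strict improvement.
\end{itemize}

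That exchange identity is exactly your commutation check, i.e.\ the symmetry of $q_iq_j\bigl(g^{+}(R)-g(R)\bigr)$ in $i$ and $j$. This is the Monderer--Shapley four-cycle condition. In effect, the paper re-derives by hand that payoff changes vanish around every closed path, which characterizes exact potential games. Your closed-form potential makes all the bookkeeping over peak walks unnecessary.

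Your route also gives more:
\begin{itemize}
\item It shows the game is an exact potential game. This is stronger than the finite improvement property, or the generalized ordinal potential mentioned in the paper's footnote.
\item It identifies global maximizers of $\Phi$ as equilibria.
\item It makes transparent why the result does not depend on the signs of $\bw$ or on the principal's reward.
\end{itemize}
The paper's approach avoids having to guess a potential, but pays for it with a long combinatorial argument that is hard to verify.
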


\paragraph{Notation.} Let us fix a payment vector $\bw$, so that all the subsequent notation implicitly depends on this fixed $\bw$. Moreover, let $S(0)$ be an arbitrary set of agents that exert effort. At step $t \geq 1$ an agent $i(t)$ that is not best-responding in $S(t-1)$ updates their action leading to $S(t)$ which is 
\begin{enumerate}
\item[(i)] $S(t) = S(t-1) \cup \{i(t)\}$ if agent $i(t)$ starts to exert effort at step $t$; 
\item[(ii)] $S(t) = S(t-1) \setminus \{i(t)\}$ if agent $i(t)$ stops exerting effort at step $t$.
\end{enumerate}

We call any sequence of such sets $S(t)$ a best-response sequence, and say that is finite if at some finite step $T$ all agents best respond. The \emph{finite improvement property} asserts that all best-response sequences are finite. In contrast, every infinite best-response sequence has to contain a cycle.

In addition, for any $S$, let us denote by $g_i(S)$ the expected payment agent $i$ gains by joining a set $S \setminus \{i\}$ of agents forming $S$, and by $\ell_i(S)$ the expected payment agent $i$ loses
by leaving a set $S$ of agents forming $S \setminus \{i\}$. Formally, we have:
\begin{align*}
&g_i(S) = u_i(S) + c_i &&\text{if $i$ joins $S \setminus \{i\}$}\\
&\ell_i(S) = u_i(S) + c_i &&\text{if $i$ leaves $S$}.
\end{align*}
Specifically, in a best-response sequence, considering two subsequent sets, $S(t-1)$ and $S(t)$,
\begin{align*}
&g_{i(t)}(S(t)) > c_{i(t)} &&\text{if $i(t)$ joins $S(t-1)$ at step $t$ forming $S(t)$, and }\\
&\ell_{i(t)}(S(t-1)) < c_{i(t)} &&\text{if $i(t)$ leaves $S(t-1)$ at step $t$ forming $S(t)$}.
\end{align*}
We refer to the former as an incrementing deviation, and the latter as a decrementing deviation.

\paragraph{Proof strategy.} The rest of the section is dedicated to proving Lemma~\ref{lem:nocycle}, which implies Theorem~\ref{thm:equilibrium}, together with Lemma~\ref{lem:pne-cycle}. Specifically, we need to show that there cannot be any cycle (and, hence, the finite improvement property is satisfied). We proceed by contradiction: Suppose there exists a best-response sequence forming a cycle. We will prove that the total cost over increasing deviations ($\ID$) equals the total cost over decreasing deviations ($\DD$): $C := \sum_{t \in \ID} c_{i(t)} = \sum_{t \in \DD} c_{i(t)}$.
This leads to the inequality:
\[
\EqStyle \sum_{t \in \ID} g_{i(t)}(S(t)) > C > \sum_{t \in \DD} \ell_{i(t)}(S(t)) \Longrightarrow
\sum_{t \in \ID} g_{i(t)}(S(t)) - \sum_{t \in \DD} \ell_{i(t)}(S(t)) > 0.
\]
At the same time, we will show that this latter difference is actually $0$, leading to a contradiction. Formally, we have the following claims:

\begin{restatable}{claim}{clcostcycle}\label{cl:cost-cycle}
    In a best-response sequence forming a cycle, it holds that
    \[
       \EqStyle \sum_{t \in \ID} c_{i(t)} = \sum_{t \in \DD} c_{i(t)}.
    \]
\end{restatable}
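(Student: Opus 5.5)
The plan is a counting argument on the agents' effort indicators around the cycle. Fix a best-response sequence that forms a cycle, say $S(t_0) = S(t_0 + L)$ for some $L \geq 1$, and restrict attention to the steps $t \in \{t_0+1, \ldots, t_0+L\}$. Each step belongs to exactly one of the two classes: an incrementing deviation ($t \in \ID$), where $i(t)$ joins, or a decrementing deviation ($t \in \DD$), where $i(t)$ leaves.

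The key observation concerns a single agent $j$. Along the cycle, its membership indicator $\mathbf{1}\{j \in S(t)\}$ changes only at steps $t$ with $i(t) = j$. At those steps it flips from $0$ to $1$ if $t \in \ID$, and from $1$ to $0$ if $t \in \DD$. The indicator starts and ends at the same value because $S(t_0) = S(t_0+L)$. Its changes therefore sum to zero:
\[
\EqStyle \sum_{t \in \ID,\, i(t)=j} 1 \;-\; \sum_{t \in \DD,\, i(t)=j} 1 \;=\; \mathbf{1}\{j \in S(t_0+L)\} - \mathbf{1}\{j \in S(t_0)\} \;=\; 0 .
\]
This gives that agent $j$ joins exactly as many times as it leaves within the cycle. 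In fact its joins and leaves must alternate, since an agent already in the set cannot join and an agent outside cannot leave.

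We then regroup the cost sums by agent:
\[
\EqStyle \sum_{t \in \ID} c_{i(t)} \;=\; \sum_{j \in [n]} c_j \cdot \#\{t \in \ID : i(t)=j\} \;=\; \sum_{j \in [n]} c_j \cdot \#\{t \in \DD : i(t)=j\} \;=\; \sum_{t \in \DD} c_{i(t)}.
\]
The middle equality follows from the per-agent count just established, which proves the claim.

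There is no substantive obstacle here. The only care needed is to define the cycle precisely, as a closed segment of the sequence whose first and last sets coincide, so that the telescoping is legitimate. One also has to note that every step is a genuine membership flip of exactly one agent, which holds by the definition of a best-response step.
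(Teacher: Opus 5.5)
Your proof is correct and takes essentially the same route as the paper: both show that each agent joins exactly as many times as it leaves along the cycle, then regroup the two cost sums agent by agent. The paper gets the per-agent balance from the alternation of joins and leaves plus the cycle closing up, while your telescoping of $\mathbf{1}\{j \in S(t)\}$ makes that step a bit more explicit and rigorous.
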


We also have the following useful property which asserts how the gain arising from an incrementing deviation and the loss of an immediately successive decrementing deviation change.

\begin{restatable}{claim}{clrewardreduction}\label{cl:reward-reduction}
    For all $\bw \in \R^n$ and $S \subseteq [n]$, if $i^+$ causes an incrementing deviation from $S \setminus \{i^+\}$ to $S$ and $i^-$ causes a decrementing deviation from $S$ to $S \setminus \{i^-\}$, then
    \[
        g_{i^+}(S) - \ell_{i^-}(S) = g_{i^+}(S \setminus \{i^-\}) - \ell_{i^-}(S \setminus \{i^+\}).
    \]
\end{restatable}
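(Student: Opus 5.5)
The plan is to write every gain and loss term through a single auxiliary quantity and then use the one-agent recursion for the success-count distribution. For a set $A \subseteq [n]$, I will set $P(A) := \sum_{j \in [n]} Q_{j-1}(A)\, w_j$. By \eqref{eq:agent-utility}, when agent $i$ is a member of a set $S$, we have $u_i(S,\bw) + c_i = q_i \cdot P(S \setminus \{i\})$. This gives the following identifications.
\begin{align*}
g_{i^+}(S) &= q_{i^+} P(S\setminus\{i^+\}), & \ell_{i^-}(S) &= q_{i^-} P(S\setminus\{i^-\}),\\
g_{i^+}(S\setminus\{i^-\}) &= q_{i^+} P(R), & \ell_{i^-}(S\setminus\{i^+\}) &= q_{i^-} P(R),
\end{align*}
where $R := S \setminus \{i^+, i^-\}$. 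If $i^+ = i^-$, both sides of the claim are of the form $q_i P(\cdot) - q_i P(\cdot)$ with the same argument, so both equal $0$. I will therefore assume $i^+ \neq i^-$. Then $S\setminus\{i^+\} = R \cup \{i^-\}$ and $S\setminus\{i^-\} = R\cup\{i^+\}$. After rearranging, the claim is equivalent to
\[
q_{i^+}\bigl(P(R\cup\{i^-\}) - P(R)\bigr) = q_{i^-}\bigl(P(R\cup\{i^+\}) - P(R)\bigr).
\]

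The key step is to compute the marginal effect on $P$ of adding one agent $k$ to $R$. Conditioning on whether $k$ succeeds gives the standard recursion
\[
Q_t(R\cup\{k\}) = (1-q_k)\,Q_t(R) + q_k\,Q_{t-1}(R),
\]
with the convention $Q_{-1}(R) = 0$. Hence
\[
Q_t(R\cup\{k\}) - Q_t(R) = q_k\bigl(Q_{t-1}(R) - Q_t(R)\bigr).
\]
Summing against the payments yields
\[
P(R\cup\{k\}) - P(R) = q_k \cdot D(R), \qquad D(R) := \sum_{j\in[n]} w_j\bigl(Q_{j-2}(R) - Q_{j-1}(R)\bigr).
\]
The point is that $D(R)$ does not depend on the identity of the added agent $k$.

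Substituting $k = i^-$ makes the left-hand side of the reduced identity equal to $q_{i^+} q_{i^-} D(R)$. Substituting $k = i^+$ makes the right-hand side equal to $q_{i^-} q_{i^+} D(R)$. The two sides coincide, which proves the claim. Note that the hypothesis that the deviations are incrementing or decrementing is not needed: the identity is purely algebraic.

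The only delicate points are bookkeeping. One must check that $g$ and $\ell$ are evaluated with the deviating agent inside the set in each of the four terms, and handle the edge indices $j=1$ (where $Q_{-1}=0$) and $j=n$ (where $Q_{n-1}(R)=0$ because $|R|\le n-2$). I expect none of this to be a genuine obstacle.
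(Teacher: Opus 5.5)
Your proof is correct and is essentially the paper's argument. The paper also splits $g_{i^+}(S)$ and $\ell_{i^-}(S)$ by conditioning on whether the other deviating agent succeeds, which is exactly your recursion $Q_t(R\cup\{k\}) = (1-q_k)Q_t(R) + q_k Q_{t-1}(R)$. Its equalities (b) and (c) say that the leftover terms $q_{i^+}q_{i^-}\sum_j w_j Q_{j-1}(R)$ and $q_{i^+}q_{i^-}\sum_j w_j Q_{j-2}(R)$ are symmetric in $i^+$ and $i^-$, which is your $q_{i^+}q_{i^-}D(R)$ in more compact form.

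As you note, neither proof uses the deviation hypotheses. Those hypotheses actually rule out $i^+=i^-$, since that would require $g_i(S) > c_i > \ell_i(S)$ while $g_i(S)=\ell_i(S)$.
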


We now partition the cycle into \emph{peak walks} $P_1, \ldots, P_m$, that is walks that contain incrementing deviations until a decrementing deviation, and then decrementing deviations until (and excluding) the next incrementing deviation. For a peak walk $P_r$, $\ID(P_r), \DD(P_r)$ are respectively the sets of incrementing and decrementing deviations within that peak walk.

\begin{restatable}{claim}{clpeakwalk}\label{cl:peak-walk}
    Within a peak walk $P_r$, it holds that
    \begin{align*}
        &\EqStyle \sum_{t \in \ID(P_r)} g_{i(t)}(S(t)) - \sum_{t \in \DD(P_r)} \ell_{i(t)}(S(t)) \\
         = & \EqStyle \sum_{t \in \ID^\star(P_r)} g_{i(t)}\left(S(t) \setminus \bigcup_{t^\prime \in \DD(P_r)} \{i(t^\prime)\}\right) - \sum_{t \in \DD^\star(P_r)} \ell_{i(t)}\left(S(t) \setminus \bigcup_{t^\prime \in \ID(P_r)} \{i(t^\prime)\}\right),
     \end{align*}
     where $\ID^\star(P_r), \DD^\star(P_r)$ are, respectively, the sets of all incrementing and decrementing deviations within peak walk $P_r$, excluding those deviations such that agents $i(t)$ cause both an incrementing and a decrementing deviation within $P_r$.
\end{restatable}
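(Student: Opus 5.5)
The plan is to prove the identity by repeatedly applying Claim~\ref{cl:reward-reduction}, the exchange identity for one incrementing deviation followed by one decrementing deviation. Fix a peak walk $P_r$. By definition it consists of incrementing deviations by agents $a_1,\dots,a_p$ (in order), followed by decrementing deviations by agents $b_1,\dots,b_s$ (in order). Let $S^\star$ be the peak set, reached after the last incrementing deviation. The gain of $a_k$ is $g_{a_k}(S^\star\setminus\{a_{k+1},\dots,a_p\})$, and the loss of $b_k$ is $\ell_{b_k}(S^\star\setminus\{b_1,\dots,b_{k-1}\})$. I will read the sets on the right-hand side of the statement this way: for $t\in\ID(P_r)$, $S(t)$ with the $P_r$-decrementers removed; for $t\in\DD(P_r)$, the set $S(t)$ from which agent $i(t)$ departs, with the $P_r$-incrementers removed.

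\textbf{First reduction.} Claim~\ref{cl:reward-reduction} only uses the algebraic form of $g$ and $\ell$ under an anonymous contract, not that the deviations are improving. So I will first record the underlying identity. Writing $\Delta_j=w_{j+1}-w_j$ and expanding $Q_{j-1}(S_{-i})$ in the presence or absence of a second agent shows that
\[
g_i(S)-g_i(S\setminus\{k\}) \;=\; q_iq_k\sum_j Q_{j-1}(S\setminus\{i,k\})\,\Delta_j .
\]
This expression is symmetric in $i$ and $k$. Hence $\ell_k(S)-\ell_k(S\setminus\{i\})$ is the same quantity, which is exactly the content of Claim~\ref{cl:reward-reduction} for any $S\ni i,k$ with $i\neq k$. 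Since $g$ and $\ell$ are the same function ($u_i(S)+c_i$ evaluated at a set containing $i$), I will write $h_i(S)$ for both.

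\textbf{Peeling.} I will induct on the number of (incrementing, decrementing) pairs $(a_k,b_m)$ with $a_k\neq b_m$. The goal is to rewrite
\[
\sum_k h_{a_k}(\cdot)-\sum_m h_{b_m}(\cdot)
\]
so that each $a$-term loses all $b$'s from its set and each $b$-term loses all $a$'s from its set. For a pair with distinct agents $a_k\neq b_m$, both present in the current evaluation set, I apply the symmetric identity: removing $b_m$ from the set of $a_k$ and removing $a_k$ from the set of $b_m$ changes both terms by the same amount. Applied to $h_{a_k}-h_{b_m}$, the difference is therefore preserved. The bookkeeping step is to check that after handling one pair, the sets for the next pair still differ only in the agents not yet processed, so that both terms are again evaluated at a common set $S'$ (with $S'\setminus\{\cdot\}$ forms) to which the identity applies. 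Processing pairs in the order $k=p,p-1,\dots$ for incrementers and $m=1,2,\dots$ for decrementers should make the evaluation sets nested consistently.

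\textbf{Main obstacle: repeated agents.} An agent $i$ that appears both as some $a_k$ and as some $b_m$ within $P_r$ has no partner to swap against itself. I will show its two terms cancel. After all removals of the other agents, $h_i$ as an incrementer is evaluated at $S^\star$ minus the later incrementers and minus all other decrementers. As a decrementer it is evaluated at $S^\star$ minus the earlier decrementers and minus all other incrementers. Both sets reduce to the same set, namely the base set of $P_r$ restricted appropriately plus $i$ itself. Thus $h_i-h_i=0$, and these agents drop out, leaving exactly the sums over $\ID^\star(P_r)$ and $\DD^\star(P_r)$.

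The delicate point I expect is verifying that the removals for pairs involving the repeated agent $i$ are correctly accounted for, since $i$ should not be removed from its own set. I will handle this by excluding $i$ from the pairing before peeling. This mismatch of evaluation sets for repeated agents is where I expect the real work.
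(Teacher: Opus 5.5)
Your overall route (peeling with Claim~\ref{cl:reward-reduction} over the grid of (joiner, leaver) pairs, with joiners taken as $a_p,a_{p-1},\dots$ and leavers as $b_1,b_2,\dots$) is the one the paper uses, and your symmetric form of the exchange identity is correct. The gap is at the step you defer, the repeated agents, and the plan you give there fails.

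Write $\mathcal A=\{a_1,\dots,a_p\}$, $\mathcal B=\{b_1,\dots,b_s\}$, and let $i=a_k=b_m$. You claim that after the other removals the two $i$-terms sit at $S^\star\setminus\{a_{k+1},\dots,a_p\}\setminus(\mathcal B\setminus\{i\})$ and at $S^\star\setminus\{b_1,\dots,b_{m-1}\}\setminus(\mathcal A\setminus\{i\})$, and that these coincide. They do not. The first still contains the non-repeated joiners $a_j$, $j<k$, and has lost the base agents $b_{m'}$, $m'>m$; the second is the reverse. These removals are also not reachable, since the exchange identity needs both terms at a common set containing both agents.

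Concretely, take base set $\{x\}$ and let $i$ join, $y$ join, $i$ leave, $x$ leave. The left-hand side is $h_i(\{x,i\})+h_y(\{x,i,y\})-h_i(\{x,i,y\})-h_x(\{x,y\})$, and the target is $h_y(\{y\})-h_x(\{x\})$.
\begin{itemize}
\item Your prescription puts the two $i$-terms at $\{i\}$ and $\{x,i\}$. In general $h_i(\{i\})=q_iw_1\neq q_i\bigl((1-q_x)w_1+q_xw_2\bigr)=h_i(\{x,i\})$, so they do not cancel.
\item Peeling only removes agents, so $x$'s term only ever lives at subsets of $\{x,y\}$. It can therefore never be exchanged against an $i$-term.
\item Excluding $i$ from the pairing does not help. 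The term $h_y(\{x,i,y\})$ must end at $\{y\}$, and the only way to strip $i$ from it is to exchange it with $i$'s leaving term. Without that exchange, $h_y(\{x,i,y\})$ and $h_x(\{x,y\})$ share no common set, so nothing applies.
\end{itemize}

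The missing idea is that the two $i$-terms cancel in the middle of the sweep, at the cell $(k,m)$ itself. Once $b_m=i$ has been exchanged past $a_p,\dots,a_{k+1}$, and $a_k=i$ past $b_1,\dots,b_{m-1}$, both are evaluated at $S^\star\setminus\{a_{k+1},\dots,a_p,b_1,\dots,b_{m-1}\}$ and cancel. All later cells involving $i$ are then vacuous, because $i$ is already absent from the sets of $a_{k'}$, $k'<k$, and of $b_{m'}$, $m'>m$. So $i$ must take part in every exchange on the near side of its own cell.

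In the example, this means:
\begin{enumerate}
\item Exchange $i$'s leaving term with $y$, which gives $h_y(\{x,y\})-h_i(\{x,i\})$.
\item Cancel this against $h_i(\{x,i\})$.
\item Exchange $y$ with $x$, which gives $h_y(\{y\})-h_x(\{x\})$.
\end{enumerate}
This is the paper's proof: each leaver is swept backward through the joiners until it reaches its own joining step, where $S(t^+)$ equals its current evaluation set.

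Alternatively, you can skip the bookkeeping entirely. Note that $h_i(S)=\Phi(S)-\Phi(S\setminus\{i\})$ for $\Phi(S)=\sum_{t\ge 0}Q_t(S)\sum_{j\le t}w_j$; your symmetric identity is exactly the symmetry of its mixed differences. Both sides of the claim then telescope to $\Phi$ of the end set of $P_r$ minus $\Phi$ of its base set.
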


We defer the proof of Claim~\ref{cl:peak-walk} to Appendix~\ref{app:equilibria}. Here, we exemplify its meaning through Figure~\ref{fig:cycle}. In particular, consider the first peak walk, going from set $\{2,3\}$ until set $\{2,3,4,5\}$. In this peak walk, the first leaving agent, agent 1, has initially an expected payment he is giving up equal to $\ell_1(\{1,2,3,4,5\})$. All other agents in this peak walk have caused incrementing deviations, and so, by Claim~\ref{cl:reward-reduction}, the expected payment of agent 1 reduces with that of agent 5, namely $\ell_1(\{1,2,3,4,5\})$ becomes $\ell_1(\{1,2,3,4\})$ and, symmetrically, $g_5(\{1,2,3,4,5\})$ becomes $g_5(\{2,3,4,5\})$. Then, the same happens with agent 4, where  $\ell_1(\{1,2,3,4\})$ becomes $\ell_1(\{1,2,3\})$ and, symmetrically, $g_4(\{1,2,3,4\})$ becomes $g_4(\{2,3,4\})$. This continues until the first agent in the peak walk causing an incrementing deviation, which in this case is precisely agent $1$. Indeed, in this case, the expected payment $\ell_1(\{1,2,3\})$ given up by agent 1 leaving at the end of the peak walk and remaining from previous reductions, cancels out with that of agent 1 that has just joined the peak walk, i.e., $g_1(\{1,2,3\})$. For all those agents that have not joined or left within the same peak walk, we say that their expected payments are in \emph{irreducible form within the peak walk}.

\begin{figure}[htbp]
\centering
\resizebox{0.8\linewidth}{!}{
\tikzset{every picture/.style={line width=0.75pt}}        

\begin{tikzpicture}[x=0.75pt,y=0.75pt,yscale=-1,xscale=1, draw=blue]

\draw[draw=blue]    (51,170) .. controls (2.98,162.16) and (12.58,149.52) .. (51.58,146.19) ;
\draw [shift={(54,146)}, rotate = 175.82, line width=0.08, draw=blue, fill=blue]
      (8.93,-4.29) -- (0,0) -- (8.93,4.29) -- cycle ;

\draw[draw=blue]    (79,138) .. controls (30.98,130.16) and (50.19,124.24) .. (89.56,121.18) ;
\draw [shift={(92,121)}, rotate = 175.82, line width=0.08, draw=blue, fill=blue]
      (8.93,-4.29) -- (0,0) -- (8.93,4.29) -- cycle ;

\draw[draw=blue]    (128,112) .. controls (79.98,104.16) and (103.03,99.2) .. (142.56,96.18) ;
\draw [shift={(145,96)}, rotate = 175.82, line width=0.08, draw=blue, fill=blue]
      (8.93,-4.29) -- (0,0) -- (8.93,4.29) -- cycle ;

\draw[draw=blue]    (189,87) .. controls (178.17,73.21) and (236.22,49.72) .. (239.87,108.27) ;
\draw [shift={(240,111)}, rotate = 268.15, line width=0.08, draw=blue, fill=blue]
      (8.93,-4.29) -- (0,0) -- (8.93,4.29) -- cycle ;

\draw[draw=red]    (259,113) .. controls (256.06,109.08) and (236.79,93.64) .. (276.49,97.72) ;
\draw [shift={(279,98)}, rotate = 186.63, line width=0.08, draw=red, fill=red]
      (8.93,-4.29) -- (0,0) -- (8.93,4.29) -- cycle ;

\draw[draw=red]    (325,87) .. controls (293.64,83.08) and (311.26,72.44) .. (350.57,69.19) ;
\draw [shift={(353,69)}, rotate = 175.82, line width=0.08, draw=red, fill=red]
      (8.93,-4.29) -- (0,0) -- (8.93,4.29) -- cycle ;

\draw[draw=red]    (421,58) .. controls (410.17,44.21) and (468.22,20.72) .. (471.87,79.27) ;
\draw [shift={(472,82)}, rotate = 268.15, line width=0.08, draw=red, fill=red]
      (8.93,-4.29) -- (0,0) -- (8.93,4.29) -- cycle ;

\draw[draw=red]    (493,86) .. controls (482.17,72.21) and (540.22,48.72) .. (543.87,107.27) ;
\draw [shift={(544,110)}, rotate = 268.15, line width=0.08, draw=red, fill=red]
      (8.93,-4.29) -- (0,0) -- (8.93,4.29) -- cycle ;

\draw[draw=red]    (554,112) .. controls (543.17,98.21) and (588.6,78.6) .. (591.88,137.27) ;
\draw [shift={(592,140)}, rotate = 268.15, line width=0.08, draw=red, fill=red]
      (8.93,-4.29) -- (0,0) -- (8.93,4.29) -- cycle ;

\draw[draw=red]    (585,163) .. controls (467.59,206.78) and (161.09,216.9) .. (57.55,192.37) ;
\draw [shift={(56,192)}, rotate = 13.77, line width=0.08, draw=red, fill=red]
      (8.93,-4.29) -- (0,0) -- (8.93,4.29) -- cycle ;

\draw (32,172.4) node [anchor=north west][inner sep=0.75pt] {$\{2,3\}$};
\draw (58,139.4) node [anchor=north west][inner sep=0.75pt] {$\{1,2,3\}$};
\draw (92,114.4) node [anchor=north west][inner sep=0.75pt] {$\{1,2,3,4\}$};
\draw (145,89.4) node [anchor=north west][inner sep=0.75pt] {$\{1,2,3,4,5\}$};
\draw (215,115.4) node [anchor=north west][inner sep=0.75pt] {$\{2,3,4,5\}$};
\draw (277,88.4) node [anchor=north west][inner sep=0.75pt] {$\{2,3,4,5,6\}$};
\draw (355,62.4) node [anchor=north west][inner sep=0.75pt] {$\{2,3,4,5,6,7\}$};
\draw (440,88.4) node [anchor=north west][inner sep=0.75pt] {$\{2,3,5,6,7\}$};
\draw (502,115.4) node [anchor=north west][inner sep=0.75pt] {$\{2,3,5,7\}$};
\draw (555,140.4) node [anchor=north west][inner sep=0.75pt] {$\{2,3,7\}$};
\draw (21,131.4) node [anchor=north west][inner sep=0.75pt] {$+$};
\draw (64,101.4) node [anchor=north west][inner sep=0.75pt] {$+$};
\draw (120,78.4) node [anchor=north west][inner sep=0.75pt] {$+$};
\draw (257,78.4) node [anchor=north west][inner sep=0.75pt] {$+$};
\draw (329,51.4) node [anchor=north west][inner sep=0.75pt] {$+$};
\draw (210,51.4) node [anchor=north west][inner sep=0.75pt] {$-$};
\draw (438,22.4) node [anchor=north west][inner sep=0.75pt] {$-$};
\draw (511,52.4) node [anchor=north west][inner sep=0.75pt] {$-$};
\draw (562,79.4) node [anchor=north west][inner sep=0.75pt] {$-$};
\draw (306,180.4) node [anchor=north west][inner sep=0.75pt] {$-$};

\end{tikzpicture}
}
\caption{We portray a cycle composed of two peak walks: One going from set $\{2,3\}$ until set $\{2,3,4,5\}$ (in blue), and the other from set $\{2,3,4,5\}$ until set $\{2,3\}$ (in red).}
\label{fig:cycle}
\end{figure}
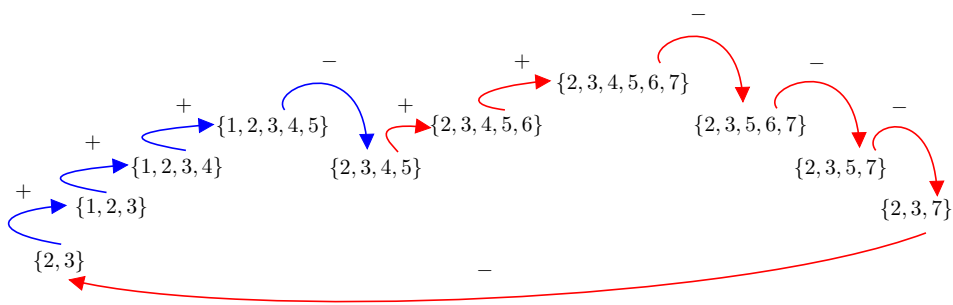

We would like to stress that Claim~\ref{cl:peak-walk} only shows how, if an agent joins and leaves within the same peak walk, then his expected payments will reduce until canceling due to the aforementioned procedure. In general, however, an agent need not join and leave within the same peak walk. Lemma~\ref{lem:nocycle} handles such cases and shows that, iteratively, this reduction-until-cancellation behavior still occurs. Then, if a best-response sequence were to contain a cycle, this would yield a contradiction. We proceed with the proof sketch of Lemma~\ref{lem:nocycle}, to show that if a best-response sequence contained a cycle, this would yield a contradiction. The full proof is deferred to Appendix~\ref{app:equilibria}.

\begin{proof}[Proof Sketch of Lemma~\ref{lem:nocycle}]
    To prove that there cannot be any cycle (and, hence, the finite improvement property is satisfied), we proceed by contradiction. Suppose there exists a best-response sequence forming a cycle. By Claim~\ref{cl:cost-cycle}, the total cost over increasing deviations equals the total cost over decreasing deviations: $C := \sum_{t \in \ID} c_{i(t)} = \sum_{t \in \DD} c_{i(t)}$.
This leads to the inequality:
\[
\EqStyle \sum_{t \in \ID} g_{i(t)}(S(t)) > C > \sum_{t \in \DD} \ell_{i(t)}(S(t)) \Longrightarrow
\sum_{t \in \ID} g_{i(t)}(S(t)) - \sum_{t \in \DD} \ell_{i(t)}(S(t)) > 0.
\]
However, we aim to show that this latter difference is actually $0$, leading to a contradiction. The deviation sums can be expressed as:
\begin{align}
  \EqStyle \sum_{r \in [m]} \left(\sum_{t \in \ID(P_r)} g_{i(t)}(S(t)) - \sum_{t \in \DD(P_r)} \ell_{i(t)}(S(t))\right) > 0.\label{eq:cycle-contradiction}
\end{align}
By Claim~\ref{cl:peak-walk}, we reduce the problem to analyzing only irreducible deviations within a peak, meaning agents that join in one peak walk and leave in another.

We consider an arbitrary agent \( i^\star \) who (i) joins at time \( t^+_a \) in peak walk \( P_a \) (incrementing deviation), (ii) leaves at time \( t^-_b \) in peak walk \( P_b \) (decrementing deviation), where \( P_a \neq P_b \), (iii) has no other deviation in between.

We analyze the reward differences across peak walks and show that they cancel. Specifically, we show that
\[
    g_{i^\star}\left(S(t^+_a) \setminus X \right) - \ell_{i^\star}\left(S(t^-_b) \setminus Y \right) = 0,
\]
where \( X \) is the set of agents that left before \( i^\star \) and \( Y \) is the set of agents that joined after \( i^\star \). 

To do so, we first consider two consecutive peak walks \( P_r \) and \( P_{r+1} \). Let \( t^+_1(P_r) \) be the time of the last agent joining in \( P_r \), and let \( t^-_1(P_{r+1}) \) be the first agent leaving in \( P_{r+1} \). We show that the set of agents present just before \( t^+_1(P_r) \) is identical to the set just before \( t^-_1(P_{r+1}) \), leading to:
\[
    g_{i(t^+_1(P_r))}(S(t^+_1(P_r)) \setminus \text{previous leavers}) = \ell_{i(t^-_1(P_{r+1}))}(S(t^-_1(P_{r+1})) \setminus \text{previous joiners}).
\]
This process repeats iteratively across all peak walks, ensuring that each joining agent's reward cancels with their corresponding leaving reward. Since this cancellation holds for all agents,
 \[
 \EqStyle \sum_{r \in [m]} \left(\sum_{t \in \ID(P_r)} g_{i(t)}(S(t)) - \sum_{t \in \DD(P_r)} \ell_{i(t)}(S(t))\right) = 0.
 \] 
which contradicts our earlier inequality in~\eqref{eq:cycle-contradiction}.
Thus, no best-response sequence can form a cycle, completing the proof sketch. 
\end{proof}

\subsection{Gap between PNEs under Anonymous Contracts}\label{app:pne-observations}

We conclude this section by making a few observations about the structure and complexities of (pure) Nash equilibria for anonymous contracts with or without limited liability. First, we next show that, under limited liability, the converse of Theorem~\ref{thm:equilibrium} does not hold:
\begin{proposition}
    Under limited liability, there exists $S$  such that all $\bw \in \R^n_+$ are infeasible.
\end{proposition}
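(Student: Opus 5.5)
We will prove the proposition with an explicit two-agent instance. The instance is chosen so that the incentive constraints \eqref{eq:principal-program2}--\eqref{eq:principal-program3} for a particular set $S$ contradict each other once $\bw \ge \bzero$. The natural target is a set $S$ that leaves out a cheap, capable agent while including an expensive one. Under limited liability the payment an outside agent would receive by joining cannot be made small while the payment to an inside agent is kept large. The reason is that both are nonnegative combinations of the same entries $w_j$, with comparable weights.

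Concretely, take $n=2$ with $q_1=q_2=q\in(0,1)$, cost $c_1 > 0$, and $c_2 = 0$ (or a tiny $c_2<c_1$), and set $S=\{1\}$. By \eqref{eq:agent-utility}, the constraint that agent $1$ stays in is
\[
q\cdot\bigl(Q_0(\emptyset)\, w_1\bigr) = q\, w_1 \ge c_1 .
\]
The constraint that agent $2$ stays out is
\[
q\cdot\bigl(Q_0(\{1\})\, w_1 + Q_1(\{1\})\, w_2\bigr) = q\bigl((1-q)w_1 + q\, w_2\bigr) \le c_2 = 0 .
\]
Since $w_2 \ge 0$, the second constraint forces $(1-q)w_1 \le 0$, so $w_1 = 0$ because $q<1$. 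This contradicts $q w_1 \ge c_1 > 0$. Hence no $\bw\in\R^2_+$ implements $S$.

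If one prefers strictly positive costs for both agents, take $c_2=\varepsilon$ small. Agent $2$ then requires $(1-q)w_1 \le \varepsilon/q$, while agent $1$ requires $w_1\ge c_1/q$. Both can be violated simultaneously by choosing $\varepsilon < (1-q)c_1$. This is the only step needing a short check, and it is a single inequality. We should also remark why limited liability matters: without it, a sufficiently negative $w_2$ makes the second constraint satisfiable. This shows the statement genuinely depends on $\bw \ge \bzero$, and it contrasts with \Cref{thm:equilibrium}, which guarantees that some equilibrium exists for every $\bw$ but not that every $S$ is an equilibrium of some $\bw$.
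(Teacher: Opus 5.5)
Your proof is correct and follows essentially the paper's route: a two-agent instance with a singleton target set that excludes an agent with a much smaller cost-to-probability ratio, so that this agent's no-join constraint, a nonnegative combination of $w_1,w_2$, caps $w_1$ below what the included agent needs. The paper uses $(q_1,c_1)=(\varepsilon,\varepsilon^2/2)$, $(q_2,c_2)=(1/2,1/3)$ and $S=\{2\}$, which gives $w_1\ge 2/3$ against $w_1+w_2\le\varepsilon$; your equal-$q$, $c_2=0$ instance is a slightly cleaner choice, though in the $c_2=\varepsilon$ variant you should say the two constraints cannot both be satisfied, not that ``both can be violated simultaneously.''
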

\begin{proof}
    The statement follows because we could consider an instance with two agents $(q_1,c_1) = (\varepsilon, \frac{\varepsilon^2}{2})$ and $(q_2,c_2) = (\frac{1}{2}, \frac{1}{3})$, where we want to incentivize $S = \{2\}$. Then, the following must hold:
    \begin{align*}
        q_2w_1 \geq c_2 ~\text{and}~ (1-q_2)q_1w_1 + q_1q_2w_2 \leq c_1 \Longleftrightarrow w_1 \geq {2}/{3} ~\text{and}~ w_1 + w_2 \leq \varepsilon.
    \end{align*}
    As long as $\bw \in \R^n_+$, these are impossible to satisfy for any $\varepsilon \leq \tfrac{2}{3}$. 
\end{proof}

By Theorem~\ref{thm:equilibrium}, every $\bw \in \R^n$ induces a set $S(\bw)$ of agents which constitutes a (pure) Nash equilibrium. We next observe that there are $\bw \in \R^n_+$ such that the gap in principal utility between the best and the worst equilibrium is unbounded. 

\begin{proposition}\label{prop:unbounded}
    For any $\varepsilon > 0$, there exists an instance $\mathcal{I}$ and a contract $\bw \in \R^n_+$ such that the anonymous contract game defined by $\bw$ admits at least two (pure) Nash equilibria $S$ and $S'$ such that
    \[
        {u_p(S, \bw)}/{ u_p(S', \bw)} = ({1-\varepsilon})/{2\varepsilon}
    \]
    and thus $u_p(S, \bw)/u_p(S', \bw) \rightarrow \infty$ as $\varepsilon \rightarrow 0.$
\end{proposition}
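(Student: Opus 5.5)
The plan is to give an explicit two-agent instance with deterministic successes, so that every expected payment in \eqref{eq:agent-utility} is a single payment entry. We then pick $w_1,w_2$ so that both the grand coalition and a singleton are equilibria, with the principal paying almost all of the reward in one of them.

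\textbf{Instance and contract.} Take $n=2$ with $q_1=q_2=1$ and $c_1=c_2=c$. Since $Q_{j-1}(S_{-i})$ is then an indicator, an agent in $S$ receives $w_{|S|}$ deterministically. The reward is $f(S)=|S|$. The principal's utilities are therefore
\[
u_p(\{1,2\},\bw)=2-2w_2 \quad\text{and}\quad u_p(\{1\},\bw)=1-w_1 .
\]

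\textbf{Equilibrium conditions.} These follow from \eqref{eq:principal-program2}--\eqref{eq:principal-program3}. For $S=\{1,2\}$ to be an equilibrium, each agent must weakly prefer to stay, i.e.\ $w_2\ge c$. For $S'=\{1\}$ to be an equilibrium, agent $1$ must stay ($w_1\ge c$) and agent $2$ must not want to join, i.e.\ $w_2\le c$. So we set $w_2=c$ and only need $w_1\ge c$; tie-breaking is weak, consistent with the paper's definition of equilibrium.

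\textbf{Choosing parameters.} We want $2-2c=1-\varepsilon$ and $1-w_1=2\varepsilon$. This gives $c=w_2=(1+\varepsilon)/2$ and $w_1=1-2\varepsilon$. Then $w_1\ge c$ holds whenever $\varepsilon\le 1/5$, and $\bw\in\R^2_+$, so limited liability is satisfied. The ratio is exactly
\[
\frac{u_p(S,\bw)}{u_p(S',\bw)}=\frac{1-\varepsilon}{2\varepsilon},
\]
which diverges as $\varepsilon\to 0$.

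\textbf{Main obstacle.} The only delicate step is checking that the two sets are simultaneously equilibria: this forces $w_2=c$ exactly, with agent $2$ indifferent in $S'$ and both agents indifferent in $S$. For larger $\varepsilon$ (above $1/5$), one can rescale the costs or use $q_i<1$, but the statement is really about $\varepsilon\to 0$, so restricting to small $\varepsilon$ suffices.
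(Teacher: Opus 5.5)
Your construction is correct, and it is a genuinely different example from the paper's.

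\textbf{The paper's route.} The paper uses two asymmetric agents with random outcomes, $(q_1,c_1)=(2\varepsilon,\varepsilon)$ and $(q_2,c_2)=(1-\varepsilon,\tfrac{1-\varepsilon}{2})$, together with $\bw=(\tfrac12,0)$. Its two equilibria are the singletons $\{1\}$ and $\{2\}$. Because $w_2=0$, whichever agent is active crowds the other out. Moreover, $w_1=\tfrac12$ is simultaneously the cheapest payment implementing either singleton. So the gap comes purely from which agent is incentivized (reward $2\varepsilon$ versus $1-\varepsilon$), not from overpayment.

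\textbf{Your route.} You take identical deterministic agents ($q_i=1$ is allowed by the model). Every expected payment then collapses to $w_{|S|}$, so the equilibrium conditions become one-line inequalities, and your verification is more elementary. Your equilibria $\{1,2\}$ and $\{1\}$ differ in size. The bad one is bad because of overpayment: the off-path payment $w_1=1-2\varepsilon$ is never paid in $\{1,2\}$, but it far exceeds $c$ in $\{1\}$.

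\textbf{What the paper's route buys.} Its multiplicity is non-degenerate. The outsiders' constraints are slack, so both singletons remain equilibria for a whole interval of $w_1$. Yours requires $w_2=c$ exactly. Raising $w_2$ slightly above $c$ (keeping $w_1>c$) leaves $\{1,2\}$ as the unique equilibrium at negligible cost $2\delta$. So your example exhibits multiplicity only at a non-generic contract.

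For the existential statement as written, this does not matter. Both arguments only cover small $\varepsilon$ (yours $\varepsilon\le 1/5$, the paper's $\varepsilon\le 1/2$ since $q_1=2\varepsilon$), which suffices for the divergence as $\varepsilon\to 0$.
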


\begin{proof}    
    Consider an instance with two agents with $(q_1,c_1)=(2\varepsilon, \varepsilon)$, $(q_2,c_2)=(1-\varepsilon, \frac{1-\varepsilon}{2})$, and assume $\bw=(w_1,0)$. We show that this payment structure is implementable in two equilibria, but the two equilibria yield vastly different utilities for the principal. Note also that, for all singleton sets, the payment structure $\bw=(w_1,0)$ is optimal, as it is never profitable to have $w_2 > 0$.
    
    First, suppose we want to incentivize set $S=\{1\}$. Then, the following constraints must hold:
    \begin{align*}
        q_1w_1 \geq c_1 ~\text{and}~ (1-q_1)q_2w_1 \leq c_2 \Longleftrightarrow {1}/{2} \leq w_1 \leq {1}/\left(({2(1-2\varepsilon)}\right).
    \end{align*}
    Conversely, suppose we want to incentivize $S=\{2\}$. Then, we must have
    \begin{align*}
        q_2w_1 \geq c_2 ~\text{and}~ (1-q_2)q_1w_1 \leq c_1 \Longleftrightarrow {1}/{2} \leq w_1 \leq {1}/({2\varepsilon}).
    \end{align*}
    In both cases, setting $w_1 = \tfrac{1}{2}$ is the minimal payment that satisfies the respective constraints.  
    Since $u_p\left(\{1\}, \left(\tfrac{1}{2}, 0\right)\right) =\varepsilon$ and $u_p\left(\{2\}, \left(\tfrac{1}{2}, 0\right)\right) = \tfrac{1-\varepsilon}{2}$ the claim follows.  
\end{proof}

\subsection{Uniqueness of PNE under Uniform Anonymous Contracts}
\label{sec:uniqueness_of_pne_under_uniform_anonymous_contracts}

We conclude this section by identifying a desirable property of uniform anonymous contracts, namely that they admit unique equilibria. This is a desirable property, because it immediately implies that any approximation guarantee for uniform anonymous contracts is a robust approximation guarantee that applies under \emph{any} equilibrium.

\begin{proposition}\label{prop:equal}
     Consider an instance $\cI = (\bq,\bc)$ and assume indices are ordered such that $c_i/q_i$ is non-decreasing. Then any %
     $w = w_1 = w_2 = \ldots = w_n$ will incentivize a unique set $S$. Moreover, $S = [k]$ for some $k\in \{0\} \cup[n]$, and
     \[
       \EqStyle \EQ(\cI) = \max_{k \in [n]} \left(1 - {c_k}/{q_k}\right) \cdot \sum_{i \in [k]} q_i. 
    \]
\end{proposition}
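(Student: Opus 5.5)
The plan is to use the key simplification that, under a uniform contract $w_1=\dots=w_n=w$, the interaction between agents disappears. By \eqref{eq:agent-utility}, an agent $i\in S$ gets expected payment
\[
q_i \sum_{j\in[n]} Q_{j-1}(S_{-i})\, w = q_i w \sum_{t=0}^{|S|-1} Q_t(S_{-i}) = q_i w ,
\]
since the probabilities $Q_t(S_{-i})$ over all $t$ sum to one. The same identity applies to the constraint \eqref{eq:principal-program3} for $i\notin S$: there $\sum_j Q_{j-1}(S)=1$. Hence agent $i$'s payoff from exerting effort is $q_i w - c_i$ no matter what the others do. So each agent has a dominant action: exert effort iff $q_iw > c_i$, equivalently $c_i/q_i < w$ when $q_i>0$.

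Consequently the equilibrium is the set $\{i : c_i/q_i < w\}$, up to ties. This set is unique and, in the assumed sorted order, is a prefix $[k]$.

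The main delicate point is ties, where $q_iw = c_i$ and the agent is indifferent. I would handle this with the paper's convention that an indifferent agent exerts effort, which is the principal-favorable tie-breaking also used implicitly in \eqref{eq:principal-program2}--\eqref{eq:principal-program3}. Under that convention, $S=\{i: c_i/q_i\le w\}$, and because the ratios are non-decreasing this is still a prefix. Agents with $q_i=0$ are treated as having ratio $+\infty$ when $c_i>0$. If $c_i=0$ as well, they contribute nothing to either side and can be placed last. I would add a remark saying this explicitly.

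For the value formula, note that when $S=[k]$ is incentivized, the principal's utility from \eqref{eq:principal-utility} is
\[
f([k]) - \sum_{i\le k} q_i w = (1-w)\sum_{i\in[k]} q_i ,
\]
using $f(S)=\sum_{i\in S} q_i$. To incentivize $[k]$ one needs $w \ge c_k/q_k$, which covers the constraint for all $i\le k$ by monotonicity. Since the utility is decreasing in $w$, the best choice is $w=c_k/q_k$. Under the tie convention this induces exactly the prefix up to the last index sharing that ratio, which is a weakly larger prefix and yields weakly more utility. Therefore
\[
\max_{w}\, u_p = \max_{k}\,(1-c_k/q_k)\sum_{i\in[k]} q_i .
\]
Finally, $k=0$, giving value $0$, is dominated unless every term is negative; in that case I would note that the maximum is taken together with $0$, or equivalently that the formula is meant as stated with this caveat.
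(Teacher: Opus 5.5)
Your proposal is correct and follows essentially the same route as the paper: the identity $\sum_{j} Q_{j-1}(\cdot)=1$ decouples the agents, so the equilibrium conditions reduce to $\max_{i\in S} c_i/q_i \le w \le \min_{i\notin S} c_i/q_i$, which forces a prefix $[k]$, and the principal's utility $(1-w)\sum_{i\in[k]} q_i$ is then maximized at $w=c_k/q_k$. You treat ties and the $k=0$ case more explicitly than the paper's footnote does, with one small slip: an agent with $q_i=c_i=0$ is indifferent for every $w$, so under your ``indifferent agents exert effort'' convention it is always in $S$ and must be placed first, not last, to keep $S$ a prefix.
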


\begin{proof} 
    First, in a uniform anonymous contract, constraints \eqref{eq:principal-program2} and \eqref{eq:principal-program3}, respectively become
    \begin{align*}
        &\EqStyle \sum_{j \in [k]} Q_{j-1}(S \setminus\{i\}) \cdot {w_j}\geq {c_i}/{q_i} \Longrightarrow w \cdot \sum_{j \in [k]} Q_{j-1}(S \setminus\{i\}) \geq {c_i}/{q_i} \qquad &\forall i \in S,\\
        &\EqStyle \sum_{j \in [k+1]} Q_{j-1}(S) \cdot {w_j} \leq{c_i}/{q_i} \Longrightarrow w \cdot \sum_{j \in [k+1]} Q_{j-1}(S) \leq {c_i}/{q_i} \qquad &\forall i \notin S.
    \end{align*}
    Since $w$ is fixed and $\sum_{j \in [k]} Q_{j-1}(S \setminus\{i\}) = 1$ and $\sum_{j \in [k+1]} Q_{j-1}(S) = 1$ by definition, we may simplify the program formulation in \eqref{eq:principal-program1}-\eqref{eq:principal-program3} as
    \begin{align*}
        \EqStyle \max \left\{(1-w) \cdot \sum_{i \in S} q_i \;\big\lvert\; \max_{i \in S} {c_i}/{q_i} \leq w \leq \min_{i \notin S} {c_i}/{q_i}, S \subseteq [n] \right\}.
    \end{align*}
    Hence, the incentivizable set $S$ is not arbitrary once its size is fixed to $k$. Indeed, recalling that agents are sorted by $\frac{c_i}{q_i}$, the constraints dictate $w \geq \frac{c_k}{q_k}$, the largest density in $S$. This means that all the agents with lower density than $\frac{c_k}{q_k}$ are unavoidably incentivized. Moreover, all agents with higher or equal density to $\frac{c_{k+1}}{q_{k+1}}$ are never incentivized.\footnote{In case of ties, we break them lexicographically.} This means that the above program is further simplified to
     \begin{align*}
        \EqStyle \max \left\{(1-w) \cdot \sum_{i \in [k]} q_i \;\big\lvert\; {c_k}/{q_k} \leq w \leq {c_{k+1}}/{q_{k+1}} \right\},
    \end{align*}
     which shows the first part. For what concerns the second part, we notice, from the above program, that $w$ always belongs to a non-empty interval. Hence, the optimal uniform anonymous contract %
     achieves a utility of
    \[
        \EqStyle \max_{k \in [n]} \left(1 - {c_k}/{q_k}\right) \cdot \sum_{i \in [k]} q_i. \qedhere
    \]
\end{proof}

\begin{remark}
    Note that in the above proposition, one need not require limited liability explicitly. Indeed, $w$ is already constrained to be nonnegative by the constraints $\max_{i \in S} \tfrac{c_i}{q_i} \leq w \leq \min_{i \notin S} \tfrac{c_i}{q_i}$.
\end{remark}

\section{Anonymous Contracts With Limited Liability}\label{sec:limited_liabbility}

In this section, we examine the performance of anonymous contracts with limited liability in general settings. Our results identify the maximum spread in success probabilities as the driving force between the ability of anonymous contracts to approximate optimal unrestricted contracts.

\subsection{Lower Bound Governed by the Success-Probability Spread}
\label{sec:general_instances}

We start with an impossibility result that applies to any anonymous contract. This impossibility shows a logarithmic lower bound in terms of the maximum spread of success probabilities, and shows that in the worst-case the loss from using an anonymous contract rather than a general one can be linear in the number of agents.

\begin{restatable}[Negative Result for General Instances, Anonymous Contracts with LL]{theorem}{negativegeneral}\label{thm:negativegeneral}
Let $\OPT(\mathcal{I})$ denote the principal's utility from the optimal anonymous contract with limited liability on instance $\mathcal{I}$.
For any $Q \geq 1$ and $n$, there exists an instance $\cI$ with $n$ agents with success probabilities $q_1, \ldots, q_n$ such that $(\max_i q_i)/(\min_i q_i) \leq Q$ and:
\[\SW(\cI)=\Omega(\min\{n,\log Q\}) \cdot \OPT(\cI).\]
\end{restatable}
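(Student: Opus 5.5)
We construct an instance with $k=\Theta(\min\{n,\log Q\})$ relevant agents whose success probabilities form a geometric sequence, $q_i = 2^{-i}$ for $i\in[k]$. The remaining $n-k$ agents are dummies with $c_i$ huge (or $q_i$ equal to some existing value and cost above $q_i$), so they never exert effort and do not change the spread. Costs are tuned so that each agent contributes the same welfare; concretely $c_i = q_i(1-\delta_i)$ with $\delta_i$ chosen so that $q_i\delta_i$ makes the $c_i/q_i$ ratios distinct. The simplest target is costs with $q_i - c_i = \Theta(2^{-k})$ for every $i$, giving $\SW(\cI)=\Theta(k\,2^{-k})$. Alternatively we could mimic the classical Bayesian single-contract lower bound: an equal-revenue-type instance where agent $i$ has $q_i=2^{-i}$ and $c_i/q_i = 1-2^{i-k}$ (up to constants), so every agent has welfare $q_i-c_i = 2^{-k}$. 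I would commit to the latter. Then $\max q/\min q = 2^{k-1}\le Q$ for $k\approx\log Q$.

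The key step is upper bounding $\OPT(\cI)$ by $O(2^{-k})$ for every anonymous $\bw\ge 0$ and every equilibrium $S$. The principal's utility is $\sum_{i\in S}(q_i - P_i)$, where $P_i = q_i\sum_j Q_{j-1}(S_{-i})w_j \ge c_i$ is agent $i$'s expected payment. I would first note that each agent in $S$ yields at most $q_i - c_i = 2^{-k}$, which alone is too weak since it only gives $|S|2^{-k}$. The real constraint comes from anonymity. Define $W_i := \sum_j Q_{j-1}(S_{-i})w_j$, the expected per-success payment seen by $i$. Because the $q_i$ are geometrically small, $S_{-i}$ and $S_{-i'}$ induce nearly the same success-count distribution: the total success mass $\sum_{i\in S} q_i\le 1$ is dominated by the smallest index in $S$. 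I would show that if $i^\ast=\min S$, then for every other $i\in S$, $W_i$ is at least roughly $W_{i^\ast}$ times a constant, or more carefully $W_i \ge (1-q_{i^\ast})\cdot(\text{something}) $. This gives the principal utility
\[
\sum_{i\in S} q_i(1-W_i) \le q_{i^\ast}(1-W_{i^\ast}) + \sum_{i\in S,\, i>i^\ast} q_i(1-W_i),
\]
and $W_i\ge c_i/q_i$ forces the large-$W$ regime. The aim is for the per-agent term $q_i(1-W_i)$ to scale like $q_i\cdot 2^{i-k}\cdot$const $=O(2^{-k})$ for the binding agent and to decay geometrically in the others, giving total $O(2^{-k})$.

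The main obstacle is exactly this comparison of $W_i$ across agents, because non-uniform $\bw$ can discriminate through the count of successes. In particular, paying mostly on $w_1$ favors agents whose co-workers rarely succeed. I would handle it by conditioning on whether agent $i^\ast$ succeeds. For $i\ne i^\ast$, $W_i$ mixes with weight $1-q_{i^\ast}\ge 1/2$ the conditional payment given $i^\ast$ fails, which is essentially the same quantity seen by every agent up to the tiny masses $q_j$ of the others. A potentially cleaner route is an LP-duality or averaging argument. Sum the IC constraints with weights $q_i$ and use the identity $\sum_{i\in S} q_i W_i = \E{N w_N}$, where $N$ is the number of successes, to show total payment $\ge \sum_i q_i\cdot\max_{i'\ge i} c_{i'}/q_{i'}$ minus lower-order terms. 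This would be the uniform-contract bound of Proposition~\ref{prop:equal}, and that bound gives $\max_k (1-c_k/q_k)\sum_{i\le k}q_i = \max_k 2^{k'-k}\cdot 2\cdot2^{-k'}=O(2^{-k})$. Finally $\SW/\OPT = \Omega(k)=\Omega(\min\{n,\log Q\})$.
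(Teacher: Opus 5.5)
\textbf{There is a genuine gap.} The step you flag as the main obstacle is exactly where the argument breaks, and with $q_i = 2^{-i}$ it cannot be repaired.

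Every agent's slack is $1 - c_i/q_i = 2^{i-k}$, which is exponentially small. So a bound of the form ``$W_i$ is at least a constant times $W_{i^\ast}$'' gives nothing. You would need an additive bound $W_i \ge W_j - O(2^{-k}/q_i)$.

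The best pairwise comparison that limited liability provides is the following. Write $W_i = q_j x_1 + (1-q_j)x_2$ and $W_j = q_i x_1 + (1-q_i) x_2$, where $x_1 = \E{w_{2+N_{-i,j}}} \ge 0$, $x_2 = \E{w_{1+N_{-i,j}}}$, and $N_{-i,j}$ counts the successes of the other agents. For $q_i<q_j$ this gives $W_i \ge W_j(1-q_j)/(1-q_i)$, with equality at $x_1=0$. This loses roughly $q_iq_j$ in agent $i$'s expected payment. That loss is harmless only if $q_iq_j$ is at most of the order of the per-agent welfare, and it fails badly when $q_1 = 1/2$.

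Your claim that $S_{-i}$ and $S_{-i'}$ induce nearly the same count distribution is also false here: for $i \neq 1$, $N_{-i}$ contains a Bernoulli$(1/2)$ term that $N_{-1}$ lacks.

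The LP/averaging fallback does not close the gap either. Summing the IC constraints with weights $q_i$ only yields total payment $\ge \sum_i c_i$. Recovering the uniform-contract bound is the entire content of the theorem, not a consequence of $\sum_i q_i W_i = \E{N w_N}$. That route never invokes limited liability, and without limited liability the claim is false by Theorem~\ref{thm:invertible-noLL}.

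\textbf{Your instance itself fails: it has $\SW/\OPT = O(\sqrt{k})$.} Near $w_1=\dots=w_k=1$ limited liability is slack, and the principal can discriminate through the success count. Fix $M$, set $\gamma = 2^{M+1/2}$ and $\Pi = \prod_{l\in[k]}(1-\gamma q_l)$, and define
\[
w_{m+1} = 1 + 2^{-k}\theta\gamma(1-\gamma)^m/\Pi \quad \text{for } m \le M + O(\sqrt{M}), \qquad w_{m+1} = 1 \text{ otherwise.}
\]
Since $\E{(1-\gamma)^{N_{-i}}} = \Pi/(1-\gamma q_i)$, agent $i$'s per-success rate is, up to a negligible truncation error, $1 - 2^{-k}\theta\gamma/(\gamma q_i - 1)$. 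With $\theta = 1/4$:
\begin{itemize}
\item all IC constraints for $S=[k]$ hold;
\item each of the $M$ agents with $\gamma q_i>1$ yields at least $\theta 2^{-k}$;
\item the remaining agents cost $O(2^{-k})$ in total;
\item $|\Pi| \ge 2^{M^2/2}/60$, so every $w_{m+1}\ge 0$ as long as $M^2/2 + O(M^{3/2}) \le k$.
\end{itemize}
Taking $M=\Theta(\sqrt{k})$ gives $\OPT = \Omega(\sqrt{k}\,2^{-k})$ against $\SW = k2^{-k}$.

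\textbf{What the paper does instead.} It keeps all success probabilities tiny relative to the welfare: $q_i = 2^{-(2\ell-i+1)}$ for $i\in[\ell]$, $c_i = q_i-\varepsilon$, and $\varepsilon = 2^{-(2\ell+1)} \ge 2q_iq_j$. Let $j$ be the largest index in $S$. The point where both IC constraints are tight has $x_1^\star = 1-\varepsilon/(q_iq_j) < 0$. Hence limited liability forces
\[
q_iW_i \ge \frac{q_i(1-q_j)(q_j-\varepsilon)}{q_j(1-q_i)}, \qquad \text{so} \qquad q_i(1-W_i) \le 2q_iq_j + \varepsilon q_i/q_j.
\]
Summing over $i \in S$ with $i<j$, and using $\sum_{i<j} q_i < q_j$ and $q_j^2 \le \varepsilon$, bounds the principal's utility by $4\varepsilon$, while $\SW = \ell\varepsilon$.
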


In order to prove \Cref{thm:negativegeneral}, we define an instance with $n$ agents as follows. 
By scaling $Q$ appropriately, we can assume without loss of generality that $Q \leq 2^n$.
  Let $\ell = \lfloor \log_2Q\rfloor$. 
   For $i\in [\ell ]$, the success probabilities and costs are given by:
    \begin{align*}
        q_1 = \frac{1}{2^{2\ell }}, \quad q_2 = \frac{1}{2^{2\ell -1}}, \quad \ldots, \quad q_\ell = \frac{1}{2^{\ell +1}}, \quad \text{and}\quad c_i = q_i - \varepsilon \quad \text{ where } \quad \varepsilon = \frac{1}{2^{2\ell +1}}.
    \end{align*}
    If $\ell <n$, for all $i = \ell +1,\ldots,n$, let $q_i = 1/2^{\ell+2}$ and $c_i = 1/2^{\ell +1}$.
    Note that we may assume without loss of generality that $\OPT(\cI)$ does not incentivize any agent $i>\ell$, since for these agents the success probability is strictly smaller than the cost.

The following key lemma shows that if the  the principal incentivizes two agents $i < j$, then the expected  transfer for the smaller agent $i$ is quite large compared to the expected reward $q_i$ generated by that agent.

\begin{lemma}[Lower Bound on Expected Transfer]\label{lem:lower_bound_on_expected_transfer}
Fix an anonymous contract $\bw = (w_1, \ldots, w_{\ell})$ and a corresponding Nash equilibrium $S \subseteq [\ell]$. Let $j \in S$ denote the largest index in $S$. Then, for any $i \in S \setminus \{j\}$ it holds that 
\begin{align*}
    \EqStyle q_i \cdot \sum_{r \leq |S|} Q_{r-1}(S_{-i}) \cdot w_r \geq  \left( {q_i \cdot (1-q_j) \cdot  (q_j - \varepsilon)} \right) / \left( {q_j \cdot (1-q_i)} \right).
\end{align*}
\end{lemma}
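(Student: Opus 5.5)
We compare the expected payments of agents $i$ and $j$ by writing both in terms of the distribution of successes among the remaining agents. Let $R = S \setminus \{i,j\}$ and write $P_t := Q_t(R)$ for the probability that exactly $t$ agents of $R$ succeed. Conditioning on whether $j$ (resp. $i$) succeeds gives
\begin{align*}
A_i &:= \sum_{r} Q_{r-1}(S_{-i})\, w_r = \sum_{t} P_t \bigl((1-q_j) w_{t+1} + q_j w_{t+2}\bigr),\\
A_j &:= \sum_{r} Q_{r-1}(S_{-j})\, w_r = \sum_{t} P_t \bigl((1-q_i) w_{t+1} + q_i w_{t+2}\bigr).
\end{align*}
The quantities $A_i$ and $A_j$ are the expected payments per success of $i$ and $j$. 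The equilibrium constraint \eqref{eq:principal-program2} for $j \in S$ gives $q_j A_j \ge c_j = q_j - \varepsilon$, that is, $A_j \ge (q_j-\varepsilon)/q_j$.

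The key step is to lower bound $A_i$ by $A_j$ using limited liability. Set $X := \sum_t P_t w_{t+1} \ge 0$ and $Y := \sum_t P_t w_{t+2} \ge 0$. Then $A_i = (1-q_j)X + q_j Y$ and $A_j = (1-q_i)X + q_i Y$. We need
\[
A_i \ge \frac{1-q_j}{1-q_i}\, A_j .
\]
The coefficient of $X$ matches exactly on both sides. The coefficient of $Y$ on the right is $q_i(1-q_j)/(1-q_i)$, which is at most $q_j$ because $q_i \le q_j$ (indeed $q_i(1-q_j) \le q_j(1-q_i)$ is equivalent to $q_i \le q_j$). Since $Y \ge 0$, the inequality follows.

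We need $q_i \le q_j$, which holds because $j$ is the largest index in $S \subseteq [\ell]$ and $q$ is increasing in the index on $[\ell]$.

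Combining the two bounds,
\[
q_i A_i \ge q_i \cdot \frac{(1-q_j)(q_j-\varepsilon)}{q_j(1-q_i)},
\]
which is exactly the claim. The main point of care is the conditioning decomposition. The sums should run over all $t$ with $w_r$ defined for $r \le |S|$, and since $P_t = 0$ for $t > |R|$ the indices $t+2 \le |S|$ are consistent. Beyond that, the only substantive ingredient is nonnegativity of $Y$, which uses limited liability. Note that the equilibrium condition is used only for agent $j$, and no condition on agent $i$ is needed.
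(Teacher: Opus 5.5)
Your proof is correct, and it reaches the bound by a different route from the paper's. The first step is the same: your $Y$ and $X$ are exactly the paper's auxiliary variables $x_1$ and $x_2$, so $q_iA_i$ and $q_jA_j$ are the paper's $g_i(x_1,x_2)$ and $g_j(x_1,x_2)$.

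The paper then argues geometrically. It takes the point $x^\star=(1-\varepsilon/(q_iq_j),1)$ at which both incentive constraints bind, and notes that $x_1^\star<0$ because $\varepsilon=2q_\ell^2>q_iq_j$. It then interpolates between $x$ and $x^\star$ to reach a point $\tilde x$ with $\tilde x_1=0$. This uses limited liability ($x_1\ge 0$), agent $j$'s constraint to lower bound $\tilde x_2$, and agent $i$'s constraint to get $g_i(x)\ge g_i(\tilde x)$.

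You instead give a one-line certificate:
\[
q_iA_i-\frac{q_i(1-q_j)}{q_j(1-q_i)}\,q_jA_j=\frac{q_i(q_j-q_i)}{1-q_i}\,Y\ \ge 0 .
\]
This needs only limited liability, agent $j$'s constraint, and the ordering $q_i\le q_j$. Your version is shorter. It does not use agent $i$'s incentive constraint or cost, and it does not depend on the calibration $\varepsilon>q_iq_j$. The paper's version, in exchange, never uses $q_i\le q_j$, so it applies to any pair in $S$ with $q_iq_j<\varepsilon$, whichever of the two has the larger success probability. In this instance both hypotheses hold (the $q$'s increase with the index and $q_\ell^2<\varepsilon$), so either argument suffices.
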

\begin{proof}
    Fix any agent $i \in S \setminus \{j\}$.
    We will prove a lower bound on the expected transfer received by agent $i$ from the principal under contract $\boldsymbol{w}$. We rewrite the expected transfer received by agent $i$ in terms of two auxiliary variables. Define
    \begin{align*}
        \EqStyle x_1 = \sum_{r \in [|S|]} Q_{r-2}(S_{-i,j}) \cdot w_{r}
        \quad\text{and}\quad
         x_2 = \sum_{r \in [|S|]} Q_{r-1}(S_{-i,j}) \cdot w_{r} 
    \end{align*}
    Then the expected transfer to agent $i$ can be written as
    \begin{align*}
    \EqStyle q_i \cdot \sum_{r \in [|S|]} Q_{r-1}(S_{-i}) \cdot w_{r}        &= q_i \cdot q_j \cdot x_1 + q_i \cdot (1-q_j) \cdot x_2 \eqcolon g_i(x_1, x_2) .
    \end{align*}
    Similarly, the expected transfer to agent $j$ can be written using the same $(x_1,x_2)$ as
    \begin{align*}
         \EqStyle q_j \cdot \sum_{r \in [|S|]} Q_{r-1}(S_{-j}) \cdot w_{r}        &= q_i \cdot q_j \cdot x_1 + (1-q_i) \cdot q_j \cdot x_2 \eqcolon g_j(x_1, x_2).
    \end{align*}
 
By the incentive compatibility constraints, we have:
    \begin{align}
        &0 \leq u_i(S, \bw) = g_i(x_1, x_2)  - c_i =  g_i(x_1,x_2) - (q_i - \varepsilon) \quad\Longrightarrow\quad g_i(x_1,x_2) \geq q_i - \varepsilon \label{ineq:xei}\\
        &0 \leq u_j(S, \bw) = g_j(x_1, x_2) - c_j = 
        g_j(x_1,x_2) - ( q_j - \varepsilon) \quad\Longrightarrow\quad g_j(x_1,x_2) \geq q_j - \varepsilon \label{ineq:xej}.
    \end{align}

Define $x^\star = (x_1^\star, x_2^\star) = \left(1 - {\varepsilon}/({q_i \cdot q_j}), 1\right)$. 
Note that:
    \begin{align}
        g_i(x_1^\star, x_2^\star) = q_i \cdot q_j \cdot x^\star_1 + q_i \cdot (1-q_j) \cdot x^\star_2 &= q_i \cdot q_j - \varepsilon + q_i \cdot (1-q_j) = q_i - \varepsilon
        \label{ineq:xstarei}\\
        g_j(x_1^\star, x_2^\star) = q_i \cdot q_j \cdot x^\star_1 + q_j \cdot (1-q_i) \cdot x^\star_2 &= q_i \cdot q_j - \varepsilon + q_j \cdot (1-q_i) = q_j - \varepsilon.
        \label{ineq:xstarej}
    \end{align}
        
Moreover, observe that:
    \begin{align*}
        x_1^\star = 1-\varepsilon/(q_i \cdot q_j) \leq 1-\varepsilon/q_\ell^2 = 1- (1/2^{2\ell +1})/(1/2^{2\ell +2}) = 1-2 < 0.
    \end{align*}
    
Since $x_1 \geq 0$ due to the limited liability assumption, there exists $0 \leq \gamma \leq 1$ such that $\gamma \cdot x_1 + (1 - \gamma) \cdot x_1^\star = 0$. Define $\widetilde{x} = (\widetilde{x}_1, \widetilde{x}_2) = \gamma \cdot x + (1 - \gamma) \cdot x^\star$. Then by \Cref{ineq:xej,ineq:xstarej}:
     \begin{align*}
         g_j(\widetilde{x}_1, \widetilde{x}_2) = \gamma \cdot g_j(x_1, x_2) + (1-\gamma) \cdot g_j(x^\star_1, x^\star_2) \geq q_j - \varepsilon.
    \end{align*}
Since $\widetilde{x}_1 = 0$ by choice of $\gamma$, we have $g_j(\widetilde{x}_1, \widetilde{x}_2) = q_j \cdot (1-q_i) \cdot \widetilde{x}_2$, which implies:
    \begin{align*}
        \widetilde{x}_2 \geq \left( {q_j - \varepsilon} \right) / \left( {q_j \cdot (1-q_i)} \right). 
    \end{align*}
Now, using \Cref{ineq:xei,ineq:xstarei}, we get:
    \begin{align*}
    g_i(x_1, x_2) &\geq
         \gamma \cdot g_i(x_1,x_2) + (1-\gamma) \cdot g_i(x^\star_1,x^\star_2) 
         =  g_i(\widetilde{x}_1, \widetilde{x}_2) 
         = q_i \cdot (1-q_j) \cdot \widetilde{x}_2 
         \\
         &\geq \left( {q_i \cdot (1-q_j) \cdot  (q_j - \varepsilon)} \right) / \left( {q_j \cdot (1-q_i)} \right). \qedhere
    \end{align*}
\end{proof}

\Cref{thm:negativegeneral} follows from \Cref{lem:lower_bound_on_expected_transfer}:

\begin{proof}
   Fix an anonymous contract $\bw = (w_1, \ldots, w_\ell)$ and a corresponding Nash equilibrium $S$. We will prove that $u_p(S, \bw) \leq (4/\ell) \cdot \SW$. Let $j \in S$ denote the largest index in $S$.
Using \Cref{lem:lower_bound_on_expected_transfer}, we can now bound the principal's utility of contract $w$ as follows:
    \begin{align*}
        u_p(S, \bw) 
        &= \sum_{i \in S} q_i - \sum_{i \in S} q_i \cdot \sum_{r \in [k]} Q_{r-1}(S_{-i}) \cdot w_{r} \\
        &\leq (q_j - c_j) + \sum_{i \in S_{-j}} \left( q_i - \frac{q_i \cdot (1-q_j) \cdot (q_j - \varepsilon)}{q_j \cdot (1-q_i)} \right) \\
        &=  (q_j -c_j) + \sum_{i \in S_{-j}} \frac{q_i \cdot q_j \cdot (1-q_i) - q_i \cdot (1-q_j) \cdot q_j + q_i \cdot (1-q_j) \cdot \varepsilon}{q_j \cdot (1-q_i)} \\
        &=  (q_j -c_j) + \sum_{i \in S_{-j}} \frac{q_i \cdot q_j \cdot (q_j-q_i) + q_i \cdot (1-q_j) \cdot \varepsilon}{q_j \cdot (1-q_i)}  \\
        &\leq (q_j-c_j) + \sum_{i \in S_{-j}} 2 \cdot q_i \cdot q_j + \sum_{i \in S_{-j}} \frac{q_i}{q_j} \cdot \varepsilon \\
        &\leq (q_j-c_j) + 2 \cdot q_j^2 + \frac{q_j}{q_j} \cdot \varepsilon \\
        &\leq \varepsilon + 2 \cdot \varepsilon + \varepsilon,
    \end{align*}
where the second-to-last inequality follows from the fact that $\sum_{i \in S_{-j}} q_i \leq \sum_{i=1}^{j-1} q_i < q_j$ by construction of $q$, and the last inequality follows from the bound $q_j^2 \leq q_\ell ^2 = 1/2^{2\ell +2} \leq 1/2^{2\ell +1} = \varepsilon$.
    Note that the social welfare is:
    \begin{align*}
        \SW  = \sum_{i \in [\ell ]} q_i - \sum_{i \in [\ell ]} c_i = \ell \cdot \varepsilon.
    \end{align*}
    This completes the proof.
\end{proof}

\subsection{Upper Bound Governed by the Success-Probability Spread}
\label{sec:EQ_analysis}

Next we show a positive result, namely that anonymous contracts achieve an approximation guarantee that (essentially) matches the impossibility in \Cref{thm:negativegeneral}. In fact, we show that this is possible with the special class of \emph{uniform} anonymous contracts.

Namely, we show that the gap between uniform anonymous contracts and optimal contracts is of order $\min\{n,\log(nQ)\}$ where $Q$ is the maximum ratio between success probabilities. In \Cref{thm:Cbound} in \Cref{app:costs} we show an analogous result in terms of the ratios of costs.

\begin{theorem}\label{thm:Qbounds}
Let $\EQ(\cI)$ denote the principal's utility from an optimal uniform anonymous contract with limited liability on instance $\cI$.
Given any $0<a<b\leq 1$, let $$\mathcal{F}(a,b) = \left\{(\bq,\bc) \ : \ q_i \in [a,b], \ 0 \leq c_i \leq q_i \ \forall i \in [n]\right\}.$$
Denote by $Q := {b}/{a}$. 
Then we get
\[
\min\left\{\frac{n+1}{2}, 1+\frac{1}{2}\cdot \log\Big(\frac{Qn}{2\log(Qn)}\Big)\right\} \leq  \sup_{\cI \in \mathcal{F}(a,b)} \frac{\SW(\cI)}{\EQ(\cI)} \leq \min\left\{n,1+\log(Qn)\right\}.
\]
\end{theorem}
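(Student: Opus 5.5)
Both bounds rest on the closed form of \Cref{prop:equal}. Sort agents so that $d_i := c_i/q_i$ is non-decreasing. Since $c_i \le q_i$ in $\mathcal{F}(a,b)$, the welfare is $\SW(\cI) = \sum_i q_i(1-d_i)$, and $\EQ(\cI) = \max_k (1-d_k)\,P_k$ with $P_k := \sum_{i \le k} q_i$. Write $U := \EQ(\cI)$.

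\textbf{Upper bound.} The bound $n$ is immediate: $\SW = \sum_k q_k(1-d_k) \le \sum_k (1-d_k)P_k \le nU$. For the logarithmic bound, each $k$ gives $1-d_k \le U/P_k$, hence
\[
\SW \le \sum_k q_k \min\{1-d_k,\,U/P_k\} \le \sum_k \min\{q_k,\ U q_k/P_k\}.
\]
I would split the indices according to whether $P_k \le U$ or $P_k > U$. The first group contributes at most $\sum_{k: P_k \le U} q_k \le U$, since the $P_k$ are increasing. The second group contributes $U \sum_{k: P_k > U} q_k/P_k$, and I would bound this by the integral comparison $q_k/P_k \le \int_{P_{k-1}}^{P_k} dx/x$, valid once $P_{k-1}>0$. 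The term with $P_{k-1} \le U$ gives at most $1$, and the remaining terms telescope to at most $\ln(P_n/U)$.

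Now $P_n \le nb$, and $U \ge (1-d_k)P_k \ge$ a quantity I need to control in terms of $a$. This is the main obstacle: $U$ can be tiny, for example when all $d_i$ are near $1$, so $P_n/U$ has no useful bound. I would handle it by measuring $\SW$ relative to $U$ more carefully. Agents whose $q_k$ is small relative to $U$ need care, but the real fix is to restrict the sum to $k$ with $P_k \le nbU/(a\,\cdot)$.

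Instead, I would bound $U \ge (1-d_k)q_1 \ge \dots$ by a cleaner route: drop agents with $1-d_k < U/(nb)$. These contribute at most $\sum_k q_k \cdot U/(nb) \le U$ in total. For the remaining agents, $1-d_k \ge U/(nb)$ combined with $1-d_k \le U/P_k$ forces $P_k \le nb$, which we already knew. So the useful lower endpoint comes from the first kept index $k_0$, where $U \ge (1-d_{k_0})P_{k_0} \ge (1-d_{k_0})a$.

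The cleanest version of the argument works with the weights $1-d_k$ directly. Rewrite $\SW \le \sum_k q_k \min\{1, U/P_k\}$, drop terms with $P_k<a$ (there are none), and obtain $\SW \le U(1 + \ln(P_n/P_1)) \le U(1+\ln(nb/a))$ whenever $U \ge P_1$. Otherwise I would treat the small-$U$ regime by the threshold truncation above, which keeps the bound at $1+\log(Qn)$ after adjusting constants. I expect this bookkeeping to be the main difficulty.

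\textbf{Lower bound.} I would use a harmonic-type instance that makes every prefix tie, so that $(1-d_k)P_k = 1$ for all $k$. Choose $q_k$ geometric-like in $[a,b]$ with $P_k$ growing, and set $1-d_k = 1/P_k$. Then $\SW = \sum_k q_k/P_k \approx 1+\ln(P_n/P_1)$, while $\EQ = 1$. With $n$ agents and ratio $Q$, I would choose the $q_k$ so that $P_k$ grows by a constant factor per step while $q_k$ stays in $[a,b]$, and pad with equal $q$'s. This gives $\min\{(n+1)/2,\ 1+\tfrac12\log(\cdot)\}$. For the all-equal case, $q_k/P_k = 1/k$ gives a harmonic sum of order $\log n$, while growing $q_k$ gives $\Theta(1)$ per step for up to $\log Q$ steps. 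The $\log(Qn/(2\log Qn))$ term comes from optimising how many steps to allocate to each phase. I would also verify $c_k \le q_k$ and that the $d_k$ are non-decreasing: $P_k$ increasing makes $1-d_k$ decreasing, so this holds.
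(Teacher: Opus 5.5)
Your upper-bound reduction is the paper's: $1-d_k\le U/P_k$ is the content of \Cref{prop:equal_scaled_Q} with $S=[n]$, and your integral comparison $\sum_k q_k/P_k\le 1+\ln(P_n/P_1)$ is an equivalent replacement for the AM--GM step in \Cref{claim:approx_sum_characterization}. Your bound by $n$ is fine. But as written the logarithmic bound is not closed, and the obstacle you fight is self-inflicted. In your split, the telescoping sum starts at the first index $k_1$ with $P_{k_1}>U$, so it is at most $\ln(P_n/P_{k_1})\le\ln(P_n/P_1)$. Replacing $P_{k_1}$ by the smaller quantity $U$ is what makes small $U$ look dangerous. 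The simplest fix is to drop the minimum: $\SW\le U\sum_k q_k/P_k\le U\,(1+\ln(P_n/P_1))\le U\,(1+\ln(Qn))$ for every $U$. The condition $U\ge P_1$ is spurious, and $U<P_1$ is in fact the regime where nothing is lost, since then $\min\{1,U/P_k\}=U/P_k$ for all $k$. Deferring that case to a threshold truncation you had just shown to be vacuous, ``after adjusting constants'', is not a proof. It also could not deliver the exact constant $1+\log(Qn)$.

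The lower bound is the substantive gap: you describe a strategy but never establish the stated expression. First, the normalization $(1-d_k)P_k=1$ is infeasible unless $q_1=1$, because $c_1\ge0$ forces $1-d_1\le1$. Use a common value $Z\le q_1$ as in \Cref{lem:lower_h_bound}, which gives $\SW/\EQ=h(\bq)=\sum_k q_k/P_k$ exactly. More importantly, you never fix the $q_k$, never evaluate $h(\bq)$, and never show it is at least $\min\{\frac{n+1}{2},1+\frac12\log(\frac{Qn}{2\log(Qn)})\}$. Your heuristic $\sum_k q_k/P_k\approx1+\ln(P_n/P_1)$ is false under constant-factor growth: a step of ratio $\rho$ contributes $1-1/\rho<\ln\rho$ (for doubling, $1/2$ rather than $\ln 2$), so the estimate must actually be carried out. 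The paper takes geometric prefix sums $F_i=a\rho^{i-1}$ with $\rho^{n-2}(\rho-1)=Q$, so that $h=n-(n-1)/\rho$. It then proves $\rho^{n-1}\ge Qn/(2\log(Qn))$ (\Cref{lem:lemmarholowerbound}) and finishes with $1-e^{-y}\ge y/(1+y)$ and $\frac{xy}{x+y}\ge\frac12\min\{x,y\}$. Be aware, though, that this witness has $q_2=a(\rho-1)$, which lies in $[a,b]$ only if $\rho\ge2$, i.e.\ $Q\ge2^{n-2}$. The $(n+1)/2$ branch binds only in that regime, but for smaller $Q$ a different witness is needed. The two-phase kind you gesture at would serve, e.g.\ prefix sums doubling from $a$ until the probabilities reach $b$, then agents with probability $b$. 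Its value still has to be checked against $1+\frac12\log(\frac{Qn}{2\log(Qn)})$, and that computation is exactly what your proposal leaves out.
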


Before we provide the proof of Theorem \ref{thm:Qbounds}, we argue that uniform anonymous contracts are in fact more robust than the theorem might suggest. It is possible that a single outlier (e.g., an agent with very small success probability) arbitrarily inflates $Q$, rendering the bound weak. However, uniform anonymous contracts are not affected by such outliers provided they do not constitute the bulk of the potential welfare. We formalize this in the following corollary (whose proof is deferred to \Cref{app:proofsfromfour}), which captures a wide range of instances where most agents have reasonable cost-to-probability ratios.

\begin{restatable}{corollary}{cornondegenerate}\label{cor:non-degenerate}
Let $\EQ(\cI)$ denote the principal's utility from an optimal uniform anonymous contract with limited liability on instance $\cI$.
Let
$T = \left\{i \in [n] \ | \  {c_i}/{q_i} < 1-{1}/{n}\right\}.$
If $2\sum_{i\in T} (q_i - c_i)  >(1+\delta) \cdot \SW(\cI) $ for some constant $\delta>0$, then  
\[\SW(\cI) = \Theta(\log n) \cdot \EQ(\cI).\]
\end{restatable}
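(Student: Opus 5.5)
We plan to derive both directions of $\SW(\cI) = \Theta(\log n)\cdot \EQ(\cI)$ from the closed form of \Cref{prop:equal}, namely $\EQ(\cI) = \max_k (1 - c_k/q_k)\sum_{i\in[k]} q_i$, with agents sorted by nondecreasing $c_i/q_i$.

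\textbf{Upper bound.} We first show $\SW(\cI) = O(\log n)\cdot\EQ(\cI)$. Write $\rho_i = c_i/q_i$, so that $\SW(\cI) = \sum_{i:\rho_i<1} q_i(1-\rho_i)$. Because $\rho_i<1-1/n$ exactly for $i\in T$, and ratios are sorted, $T$ is a prefix $[m]$. Let $U=\EQ(\cI)$. Every $k$ satisfies $(1-\rho_k)\sum_{i\le k}q_i \le U$.

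\emph{Welfare on $T$.} Set $s_k=\sum_{i\le k}q_i$. Then
\[
\sum_{k\in T} q_k(1-\rho_k) \le U\sum_{k\in T}\frac{q_k}{s_k} \le U\left(1+\ln\frac{s_m}{s_1}\right).
\]
The ratio $s_m/s_1$ could be huge, so this alone does not give $\log n$. We instead use a threshold decomposition. Pick a level $\theta\in\{1/n,2/n,\dots\}$ or dyadic levels $1-\rho\in(2^{-j-1},2^{-j}]$. Within one dyadic band $j$ (with $j\le \log_2 n$ for $k\in T$), the last index $k$ of the band gives
\[
\sum_{\text{band } j} q_i(1-\rho_i) \le 2^{-j}\, s_k \le 2\,(1-\rho_k)\, s_k \le 2U.
\]
There are $O(\log n)$ such bands, so the welfare from $T$ is at most $O(\log n)\,U$.

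\emph{Using the hypothesis.} The contribution of $T$ is $W_T := \sum_{i\in T}(q_i-c_i) > \tfrac{1+\delta}{2}\SW(\cI)$. Hence $\SW(\cI) < \tfrac{2}{1+\delta}\,W_T = O(\log n)\,U$. So the hypothesis only serves to discard the agents with $\rho_i\ge 1-1/n$, which is exactly where the band count would otherwise be unbounded.

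\textbf{Lower bound.} We read $\Theta$ as tightness of the $\log n$ factor over the class of instances satisfying the hypothesis, matching \Cref{thm:Qbounds}. Take the standard instance with $q_i = 1/n$ and $c_i/q_i = 1 - 1/(i+1)$, so that $1-\rho_i = 1/(i+1)$; reverse the indexing so that ratios are nondecreasing. Then:
\begin{itemize}
\item $\SW(\cI) = \Theta(\tfrac{1}{n}\log n)$;
\item $\EQ(\cI) = \max_k \tfrac{k}{n}\cdot\tfrac{1}{n-k+1} \cdot$ (after reindexing), which is $O(1/n)$.
\end{itemize}
It remains to check that most welfare lies in $T$. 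Since $\rho_i<1-1/n$ holds for all but one agent, $2W_T \approx 2\SW(\cI) > (1+\delta)\SW(\cI)$ for $\delta<1$ and $n$ large.

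\textbf{Main obstacle.} The key step is the dyadic-band argument: we must check that the band endpoint $k$ both controls the band's mass via $s_k$ and has $1-\rho_k$ within a factor $2$ of every ratio in the band. Sortedness gives this. Everything else is bookkeeping.
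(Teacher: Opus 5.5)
Your proof is correct, and for the upper bound it takes a genuinely different and simpler route than the paper.

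\textbf{Upper bound.} The paper defines $S=\{i:\max_j(q_j-c_j)\le n(q_i-c_i)\}$ and notes that $S$ carries at least half of $\SW$. The hypothesis then forces $S\cap T$ to carry a $\delta/2$ fraction of the welfare. On $S\cap T$ the success probabilities have spread at most $n^2$, so the paper applies Proposition~\ref{prop:equal_scaled_Q} to the subset $S\cap T$ together with the bound $h(\bq)\le 1+\log(Qn)$ from Theorem~\ref{thm:Qbounds}, with $Q=n^2$.

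You instead bucket the margins $1-c_i/q_i\in(1/n,1]$ of agents in $T$ dyadically. You charge each bucket to the single term $(1-\rho_k)s_k\le\EQ$ of its last index $k$. This uses only the closed form in Proposition~\ref{prop:equal}, avoids the $h(\bq)$ analysis entirely, and gives $\sum_{i\in T}(q_i-c_i)\le 2\lceil\log_2 n\rceil\,\EQ$, hence $\SW\le\frac{4}{1+\delta}\lceil\log_2 n\rceil\,\EQ$.

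This is stronger than what the paper's route yields. Your constant does not degrade as $\delta\to 0$, whereas the paper's is of order $\frac{1}{\delta}\log n$. In fact you only need $T$ to carry some constant fraction of the welfare, while the paper needs $\delta>0$ to make $S\cap T$ nonempty. What the paper's route buys is the conceptual link to Theorem~\ref{thm:Qbounds}: an effective spread $Q\le n^2$.

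\textbf{Lower bound.} The paper points to the equal-cost instance of Lemma~\ref{lem:last_prem_lower_bound}. That instance has the same margins $1/(i+1)$ as your equal-probability instance. You additionally verify the hypothesis, which the paper does not do.

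That verification contains two slips, neither affecting the conclusion.
\begin{enumerate}
\item No reindexing is needed, since $\rho_i=1-1/(i+1)$ is already nondecreasing. The correct value is $\EQ=\max_k\frac{k}{n(k+1)}=\frac{1}{n+1}$. The formula $\max_k\frac{k}{n(n-k+1)}$ you wrote equals $1$ at $k=n$, which contradicts your own claim that $\EQ=O(1/n)$.
\item Because the inequality defining $T$ is strict, both $i=n-1$ (margin exactly $1/n$) and $i=n$ are excluded from $T$, not just one agent. This does not change $W_T/\SW\to 1$.
\end{enumerate}
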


In other words, as long as those agents who do not operate on very thin margins drive the social welfare, the principal achieves a logarithmic approximation, regardless of how inefficient any outlier agents are.

If we make the stronger assumption of $c_i/q_i < \alpha$ $\forall i\in [n]$, for some constant $\alpha <1$, then it is easy to make the following, stronger conclusion (whose proof is deferred to \Cref{app:proofsfromfour}). 

\begin{restatable}{proposition}{propconstantfactorbounds}\label{prop: constant factor bounds}
Let $\mathcal{I} = \{(q_i, c_i)\}_{i \in [n]}$. If there exists a constant $\alpha \in [0, 1)$ such that $\frac{c_i}{q_i} \leq \alpha$ for all $i \in [n]$, then the utility of the optimal uniform anonymous contract, $\EQ(\cI)$, satisfies:
\[
\EQ(\cI) \geq (1 - \alpha) \cdot \SW(\cI).
\]
\end{restatable}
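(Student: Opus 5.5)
We prove the bound by comparing $\EQ(\cI)$ directly with the closed form in \Cref{prop:equal}. Sort agents so that $c_i/q_i$ is non-decreasing, and let $k^\star$ be the largest index with $c_{k^\star}/q_{k^\star} \le \alpha$. By hypothesis every agent satisfies this, so $k^\star = n$. \Cref{prop:equal} gives
\[
\EQ(\cI) = \max_{k \in [n]} \left(1 - {c_k}/{q_k}\right) \cdot \sum_{i \in [k]} q_i \;\geq\; \left(1 - {c_n}/{q_n}\right) \cdot \sum_{i \in [n]} q_i \;\geq\; (1-\alpha) \cdot \sum_{i \in [n]} q_i .
\]
The choice $k=n$ is what makes this work: it uses the single uniform payment $w = c_n/q_n \le \alpha$, which incentivizes all agents at once.

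Next we upper bound the social welfare. Since costs are nonnegative and the reward is additive,
\[
\SW(\cI) = \max_S \Bigl(\sum_{i\in S} q_i - \sum_{i\in S} c_i\Bigr) \le \sum_{i\in[n]} q_i .
\]
Chaining the two displays gives $\EQ(\cI) \ge (1-\alpha)\cdot \SW(\cI)$.

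Two checks remain. Agents with $q_i = 0$ must have $c_i = 0$ for the ratio condition to make sense; such agents contribute nothing to either side, so they can be dropped. The contract $w = c_n/q_n$ also satisfies limited liability, since $w \ge 0$, and the equilibrium it induces is unique by \Cref{prop:equal}, so the bound holds in every equilibrium. There is no real obstacle here; the only step needing care is justifying that the index $n$ is admissible in the maximum, which follows directly from the uniform ratio bound.
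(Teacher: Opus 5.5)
Your proof is correct and is essentially the paper's argument: evaluate the closed form of \Cref{prop:equal} at $k=n$ (the uniform payment $w=\max_i c_i/q_i\le\alpha$ incentivizes every agent) to get $\EQ(\cI)\ge(1-\alpha)\sum_{i\in[n]}q_i$, then use $\SW(\cI)\le\sum_{i\in[n]}q_i$. The side remarks about $k^\star$ and equilibrium uniqueness are not needed; the uniqueness claim also only holds under the tie-breaking convention of \Cref{prop:equal}, since at $w=c_n/q_n$ agent $n$ is indifferent and $[n-1]$ is also an equilibrium.
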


\subsection{Proof of Theorem \ref{thm:Qbounds}}\label{sec:EQ_worst_case}

In the remainder of this section, we give the proof of Theorem \ref{thm:Qbounds}. We recall the setting: given some interval $[a,b]\subset (0,1]$, consider any instance $\cI=(\bq,\bc)$ such that $q_i \in [a,b]$ $\forall i\in [n]$. Our main theorems (\Cref{thm:Qbounds} and \Cref{thm:Cbound}) bound the worst case multiplicative ratio of uniform anonymous contracts to social welfare. 
As an application of the upper bound, in Corollary \ref{cor:non-degenerate} we obtain an upper bound under complementary conditions, where $c_i/q_i<1-1/n$ for a significant proportion of agents.

As a first step, we derive some bounds that suggest a concrete path to proving Theorem \ref{thm:Qbounds}. We bound the ratio $\SW(\cI)/\EQ(\cI)$ by fixing the success probabilities and finding the worst-case costs. The complementary analysis fixing costs is deferred to Appendix \ref{app:costs}. To this end, Proposition \ref{prop:equal_scaled_Q} establishes a general bound for any subset of agents $S \subseteq [n]$ (whose proof is deferred to \Cref{app:proofsfromfour}). While setting $S=[n]$ yields a direct upper bound on the ratio, the flexibility to choose any subset is crucial for the proof of Corollary \ref{cor:non-degenerate}. Lemma \ref{lem:lower_h_bound} subsequently demonstrates that for any fixed probabilities, there exist costs that make the bound in Proposition \ref{prop:equal_scaled_Q} tight (whose proof is deferred to \Cref{app:proofsfromfour}).
Finally, Lemma \ref{claim:approx_sum_characterization} analyzes the function governing this tight bound to establish the final logarithmic approximation.

\begin{restatable}{proposition}{propequalscaledQ}\label{prop:equal_scaled_Q}
        Let $\cI = (\bq,\bc)$ be an instance of $n$ agents, with the ordering of indices such that the sequence $(c_i/q_i)$ is non-decreasing. Then for any $S \subseteq [n]$,
    \[
       \EqStyle \sum_{\ell\in S} (q_\ell - c_\ell) \leq   \sum_{\ell \in S} \QLEXPR \cdot \EQ(\mathcal{I}).
    \]
\end{restatable}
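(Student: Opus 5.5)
The plan is to use the closed-form expression for $\EQ(\cI)$ from \Cref{prop:equal}. With indices ordered so that $c_i/q_i$ is non-decreasing, \Cref{prop:equal} says
\[
\EQ(\cI) = \max_{k\in[n]} \left(1 - c_k/q_k\right)\cdot \sum_{i\in[k]} q_i .
\]
The argument is then a term-by-term bound. Fix any $\ell \in S$ and take $k=\ell$ in this maximum. This gives
\[
\EQ(\cI) \;\geq\; \left(1 - \frac{c_\ell}{q_\ell}\right)\sum_{i\in[\ell]} q_i \;=\; \frac{q_\ell - c_\ell}{q_\ell}\sum_{i\in[\ell]} q_i .
\]

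Next, rearrange this inequality for each $\ell$. If $q_\ell>0$ and $\sum_{i\in[\ell]} q_i>0$, it reads
\[
q_\ell - c_\ell \;\leq\; \frac{q_\ell}{\sum_{i\in[\ell]} q_i}\cdot \EQ(\cI).
\]
Summing over $\ell\in S$ yields the claimed inequality immediately. Note that agents outside $S$ play no role: the bound is proved separately for each index, and only the prefix sum $\sum_{i\in[\ell]} q_i$, taken in the global ordering, enters.

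The only point needing care is the degenerate cases, and this is the (mild) main obstacle. If $q_\ell = 0$, the ratio $c_\ell/q_\ell$ and the expression $\QLEXPR$ need a convention. In the setting of \Cref{thm:Qbounds} we have $q_i\in[a,b]$ with $a>0$, so I would simply assume $q_i>0$ throughout, and everything is well defined. Agents with $c_\ell > q_\ell$ also need a remark. For them the left-hand term $q_\ell - c_\ell$ is negative while the right-hand term is nonnegative, since $\EQ(\cI)\ge 0$: the principal can always offer $w=0$, or take $k$ with $c_k\le q_k$. So their contribution to the inequality is trivially fine even though the corresponding $k$ in the maximum gives a negative value. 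With these remarks the proof is just the one-line specialization of \Cref{prop:equal} summed over $S$.
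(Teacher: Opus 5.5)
Your proposal is correct and is essentially the paper's proof. The paper also specializes the closed form $\EQ(\cI)=\max_{k}(1-c_k/q_k)\sum_{i\in[k]}q_i$ from Proposition~\ref{prop:equal} at each index $\ell$ (phrased via the maximizing index $k$), rearranges to $q_\ell-c_\ell\le \frac{q_\ell}{\sum_{i\in[\ell]}q_i}\cdot\EQ(\cI)$, and sums over $\ell\in S$. Your separate treatment of agents with $c_\ell>q_\ell$ is harmless but unnecessary: multiplying by the positive factor $q_\ell/\sum_{i\in[\ell]}q_i$ preserves the inequality whatever its sign.
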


Furthermore, we show that for any $\cI \in \mathcal{F}(a,b)$, there exist some costs $\bc = (c_1,\ldots,c_n)$ such that the bound given in Proposition \ref{prop:equal_scaled_Q} is actually tight. More precisely, we show that if we have an instance where probabilities of success $q_1 \leq \ldots \leq q_n$ of agents are given, then for any quantity $Z\leq q_1$, there exist costs $c_1,\ldots,c_n$ such that ${c_1}/{q_1} \leq \ldots \leq {c_n}/{q_n}$, $\EQ = Z$ and the inequality in Proposition \ref{prop:equal_scaled_Q} holds with equality for all subsets $S\subseteq [n]$.

\begin{restatable}{lemma}{lemlowerhbound}\label{lem:lower_h_bound}
    Let $q_1 \leq \ldots \leq q_n$ be some set of success probabilities of agents and let $Z \leq q_1$. Then there exists $c_1,\ldots,c_n$ such that $\forall S \subseteq [n]$,
    \begin{equation}\label{eq:EQ_tight_bound}
       \EqStyle \sum_{\ell\in S} (q_\ell - c_\ell) =  \sum_{\ell \in S} \QLEXPR \cdot Z.
    \end{equation}
    Moreover, ${c_1}/{q_1} \leq \ldots \leq {c_n}/{q_n}$ and $\EQ = Z$ is achieved by any prefix subset $\{1,\ldots ,i\}$ for $i\in [n]$.
\end{restatable}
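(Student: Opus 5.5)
We will define the costs explicitly so that each agent's margin equals exactly the right-hand side term, i.e. set
\[
q_\ell - c_\ell = \frac{q_\ell}{\sum_{i\in[\ell]} q_i}\cdot Z, \qquad\text{equivalently}\qquad \frac{c_\ell}{q_\ell} = 1 - \frac{Z}{\sum_{i\in[\ell]} q_i},
\]
for every $\ell\in[n]$. With this choice, \eqref{eq:EQ_tight_bound} holds for every $S\subseteq[n]$ simply by summing the defining identity over $\ell\in S$, since both sides are additive over $S$. So the whole content of the lemma lies in checking the remaining properties: that these are legitimate costs, that they are ordered as claimed, and that $\EQ = Z$.

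\emph{Legitimacy.} Since $Z\le q_1\le \sum_{i\in[\ell]} q_i$, we get $0\le c_\ell/q_\ell\le 1$. Hence $0\le c_\ell\le q_\ell$, so the instance is valid and lies in $\mathcal{F}(a,b)$ whenever the $q_i$ do.

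\emph{Monotonicity.} The partial sums $\sum_{i\in[\ell]} q_i$ are nondecreasing in $\ell$. Therefore $Z/\sum_{i\in[\ell]}q_i$ is nonincreasing, and so $c_\ell/q_\ell$ is nondecreasing. This gives ${c_1}/{q_1}\le\cdots\le{c_n}/{q_n}$, and in particular the indexing agrees with the ordering required in \Cref{prop:equal}.

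\emph{Value of $\EQ$.} We invoke \Cref{prop:equal}, which gives
\[
\EQ(\cI) = \max_{k} \Bigl(1 - \frac{c_k}{q_k}\Bigr)\cdot \sum_{i \in [k]} q_i .
\]
Substituting the definition of the costs, the $k$-th term equals
\[
\frac{Z}{\sum_{i\in[k]} q_i}\cdot \sum_{i\in[k]} q_i = Z
\]
for every $k$. Thus the maximum is $Z$, and it is attained by every prefix $[k]$, $k\in[n]$.

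The only point requiring care is ties in $c_i/q_i$ (for instance when $q_1=q_2$ is impossible to tie here only if $q_{\ell+1}=0$). With the lexicographic tie-breaking from \Cref{prop:equal}, setting $w=c_k/q_k$ still implements the prefix $[k]$. We expect no real obstacle: this construction is exactly the equality case of \Cref{prop:equal_scaled_Q}, which the lemma is designed to mirror.
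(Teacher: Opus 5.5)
Your construction is exactly the paper's. Its recursion $(1-c_\ell/q_\ell)\sum_{i\le\ell}q_i=(1-c_{\ell+1}/q_{\ell+1})\sum_{i\le\ell+1}q_i$, started from $q_1-c_1=Z$, unrolls to your closed form $c_\ell/q_\ell=1-Z/\sum_{i\in[\ell]}q_i$, and your direct checks (additivity for \eqref{eq:EQ_tight_bound}, monotone partial sums, and \Cref{prop:equal} for $\EQ=Z$) replace the paper's intermediate-value and tightness arguments. Note that $c_\ell\le q_\ell$ also needs $Z\ge 0$, which both you and the paper assume implicitly; your garbled remark on ties can be dropped, since for $q_i>0$ and $Z>0$ the ratios $c_\ell/q_\ell$ are strictly increasing.
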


It follows that analyzing the ratio ${\SW(\cI)}/{\EQ(\cI)}$ is equivalent to analyzing the function
\[\EqStyle h(\bq) := \sum_{\ell \in [n]} \QLEXPR .\]

Our next lemma provides a nearly tight analysis of the function $h(\bq)$. Specifically, we bound the maximum value of $h(\bq)$ over an interval $[a,b]$, establishing that it scales logarithmically with respect to $n\cdot (b/a)$.
\begin{lemma}\label{claim:approx_sum_characterization}
    Consider the interval $[a,b] \subset(0,1]$ and denote the ratio $Q = {b}/{a}$. The quantity
    \begin{equation}\label{eq:max_h(q)}
       \EqStyle \max_{q_i \in [a,b] \ \forall i \in [n]} h(\bq)
    \end{equation}
    is achieved by a unique $\bq^{\star}$, with $q^{\star}_i$ non-decreasing.
    We show that
    \begin{equation}\label{eq:h(q)_bound}
     \min\left\{\frac{n+1}{2}, 1+\frac{1}{2}\cdot \log\left(\frac{Qn}{2\log(Qn)}\right)\right\} \leq  h(\bq^{\star})\leq \min\left\{n,1+\log(Qn)\right\}.
    \end{equation}
\end{lemma}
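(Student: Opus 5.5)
The plan has four parts: an upper bound via a telescoping inequality, existence and monotonicity of the maximizer via an exchange argument, uniqueness via first-order conditions, and explicit constructions for the lower bound. Throughout, write $P_\ell := \sum_{i\in[\ell]} q_i$, so that $h(\bq) = \sum_{\ell\in[n]} q_\ell/P_\ell$ and the first term equals $1$.

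\emph{Upper bound.} Each term satisfies $q_\ell/P_\ell \le 1$, which gives $h\le n$ immediately. For the logarithmic bound I would use, for $\ell\ge 2$,
\[
\frac{q_\ell}{P_\ell} = 1-\frac{P_{\ell-1}}{P_\ell} \le \ln\frac{P_\ell}{P_{\ell-1}},
\]
which is the inequality $1-1/y\le \ln y$. Summing over $\ell\ge 2$ telescopes, so
\[
h(\bq)\le 1+\ln\frac{P_n}{P_1}\le 1+\ln\frac{nb}{a}=1+\log(Qn).
\]
This holds for every $\bq$ in the box, in particular for $\bq^\star$.

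\emph{Existence and monotone structure of the maximizer.} The function $h$ is continuous on the compact box $[a,b]^n$, so a maximizer exists. To show that any maximizer is non-decreasing, I would use an adjacent-swap exchange argument. Suppose $q_\ell=x>y=q_{\ell+1}$ and let $P=P_{\ell-1}$. Swapping the two values leaves every other term unchanged, since $P_{\ell+1}$ is unchanged. A direct computation gives
\[
\Bigl(\tfrac{y}{P+y}+\tfrac{x}{P+x+y}\Bigr)-\Bigl(\tfrac{x}{P+x}+\tfrac{y}{P+x+y}\Bigr)=\frac{(x-y)\,xy}{(P+x)(P+y)(P+x+y)}>0 .
\]
So a strict descent can never occur at a maximizer.

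\emph{Uniqueness.} This is the step I expect to be the main obstacle. My first attempt would go through the KKT conditions. Consider the partial derivative
\[
\partial h/\partial q_k = \frac{P_{k-1}}{P_k^2} - \sum_{\ell>k} \frac{q_\ell}{P_\ell^2}.
\]
I would use these conditions to show that a maximizer must have the form $a,\dots,a$, followed by interior values, followed by $b,\dots,b$. On the interior block, the stationarity equations determine each $q_k$ recursively from $P_{k-1}$ and the tail. From this I would argue that the whole vector is pinned down by a single one-dimensional parameter, namely the tail value. Showing that this parameter is unique, for instance by a monotonicity argument in that parameter, then gives uniqueness of $\bq^\star$. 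If this turns out to be too messy, the bounds in \eqref{eq:h(q)_bound} do not actually depend on uniqueness, and I would prove them first.

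\emph{Lower bound.} Since $h(\bq^\star)\ge h(\bq)$ for any feasible $\bq$, it suffices to exhibit good choices of $\bq$. I would split into two cases.

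\begin{itemize}
\item \textbf{Case $2^{n-1}\le Q$.} Take the doubling sequence $q_\ell = 2^{\ell-1}a$. Then $P_\ell=(2^\ell-1)a$, so every term with $\ell\ge 2$ is at least $1/2$. This gives $h\ge 1+(n-1)/2=(n+1)/2$.
\item \textbf{Case $2^{n-1}>Q$.} Use a two-phase construction. In the first phase, take roughly $m$ agents with polynomially (or geometrically) growing $q_\ell$, starting at $a$ and reaching $b$. In the second phase, set the remaining $n-m$ agents to $b$.
\end{itemize}

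For the second case, the first phase contributes about a constant times $\log$ of the growth, and the second phase contributes roughly $\ln\bigl((P_m+(n-m)b)/P_m\bigr)$, which is a harmonic-type sum. I would then choose the switch point $m$ of order $\log(Qn)$ to balance the two contributions, accepting a factor of $1/2$ loss. The aim is to verify that the total is at least $1+\tfrac12\log\bigl(Qn/(2\log(Qn))\bigr)$. The subtracted $\log\log$ term inside the bound should come exactly from the roughly $\log(Qn)$ agents spent in the growth phase.
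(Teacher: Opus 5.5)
Your upper bound and monotonicity steps are correct. The telescoping estimate $q_\ell/P_\ell=1-P_{\ell-1}/P_\ell\le\log(P_\ell/P_{\ell-1})$ reaches the paper's bound $1+\log(F_n/F_1)$ more directly than its AM--GM step followed by $x^{1/(n-1)}\ge 1+\frac{\log x}{n-1}$. Your swap identity is the paper's local-swap argument made explicit. The proposal has genuine gaps in both remaining parts.

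\textbf{Uniqueness.} Uniqueness of $\bq^\star$ is part of the statement, but you only sketch it. The block form ($a$'s, then interior values, then $b$'s) is automatic from monotonicity. The real content, that the KKT system has exactly one solution across all block configurations, is asserted rather than shown. This is not a formality. The function $h$ is not concave on the box: already for $n=3$, $\partial^2 h/\partial q_2^2=-2q_1/P_2^3+2q_3/P_3^3>0$ at $(a,b,b)$ when $b\ge 10a$. So stationarity alone does not single out one point. Subtracting consecutive stationarity conditions gives $P_{k-1}P_{k+1}=P_k^2$ on the interior block, so each configuration reduces to a single ratio. You would still have to show that the remaining scalar equation, together with the sign conditions on $\partial h/\partial q_k$ on the two boundary blocks, has exactly one solution overall. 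You cannot borrow the paper's argument here either. It claims all ratios $F_{i+1}/F_i$ are equal at the maximizer by the equality case of AM--GM. But AM--GM only minimizes $\sum_i F_i/F_{i+1}$ for \emph{fixed} $F_n$, and $F_n$ is free. For $n=3$ and $Q=2$, that recipe produces $(a,a,2a)$ with $h=2$, whereas $(a,2a,2a)$ gives $h=31/15$.

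\textbf{Lower bound for $Q<2^{n-1}$.} This is where the statement has content, and it is not proved. ``Roughly $m$ agents'', ``balance the two contributions'', and the guess that the $\log\log$ term comes from the growth phase are heuristics; no inequality is actually verified. Constants matter here. The target is $1+\frac12\log X$ with $X=\frac{Qn}{2\log(Qn)}$, and for small $n$ both sides are $O(1)$. Crude estimates, such as the floor in $\log_2 Q$ or replacing harmonic-type sums by logarithms, can break the inequality.

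Your plan of evaluating $h$ at explicit feasible vectors is the right one, and it is more robust than the paper's route. The paper evaluates $h$ at the vector with $F_i=a\rho^{i-1}$ and $\rho^{n-1}-\rho^{n-2}=Q$. That vector has $q_2=a(\rho-1)$, so it lies in $[a,b]^n$ only when $\rho\ge 2$, i.e.\ $Q\ge 2^{n-2}$.

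One way to finish:
\begin{itemize}
\item If $Q\le n$, take the all-$a$ vector. It gives $h=H_n=\sum_{i\le n}1/i>\log n+0.57$, which already exceeds $1+\frac12\log X$ because $Qn\le n^2$.
\item If $n<Q<2^{n-1}$, set $L=\lfloor\log_2 Q\rfloor$ and $q_\ell=\min\{2^{\ell-1}a,b\}$. The doubling phase has mass $(2^{L+1}-1)a<2b$, so $h\ge 1+\frac L2+\sum_{j=1}^{n-1-L}\frac{1}{j+2}$. Compare this with $\log Q<(L+1)\log 2$ and $\log\log(Qn)>\log(2\log n)$, splitting on whether the tail has at least $n/2$ terms.
\end{itemize}
Until a computation of this kind is written out, the lower bound is not established.
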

\begin{proof}
    We start by proving the upper bound on $h(\bq^\star)$. Let $F_i = \sum_{\ell \in [i]} q_\ell$, then 
    \begin{equation}\label{eq:approx_factor_equality}
        \EqStyle h(\boldsymbol{q}) = \sum_{\ell \in [n]} \QLEXPR = 1+\sum_{i \in [n-1]} \left({F_{i+1}-F_i}\right)/{F_{i+1}} = n - \sum_{i \in [n-1]} {F_i}/{F_{i+1}}.
    \end{equation}
    By the AM-GM inequality, we know that
    \begin{equation}\label{eq:am-gm}
         \EqStyle \sum_{i \in [n-1]} {F_i}/{F_{i+1}} \geq (n-1)\left(\prod_{i \in [n-1]}{F_i}/{F_{i+1}}\right)^{\frac{1}{n-1}} = (n-1)\left({F_1}/{F_n}\right)^{\frac{1}{n-1}}.
    \end{equation}
    where equality is attained when $F_i/F_{i+1}$ are equal for all $i\in [n-1]$. We know that $\forall x$, $x^{\frac{1}{n-1}}\geq 1+ \frac{1}{n-1}\log(x)$. Thus, from equation (\ref{eq:am-gm}),
    \[
    (n-1)\left({F_1}/{F_n}\right)^{\frac{1}{n-1}}
    \geq (n-1) - \log\left({F_n}/{F_1}\right).
    \]
    Substituting into (\ref{eq:approx_factor_equality}), we get
    \[
      \EqStyle \sum_{\ell \in [n]} \QLEXPR \leq 1+ \log\left({F_n}/{F_1}\right),
    \]
    The upper bound on $h(\bq)$ in (\ref{eq:h(q)_bound}) then follows simply from observing that ${F_n}/{F_1} \leq Qn$. We now proceed to find a lower bound on $h(\bq^{\star})$. A lower bound is more difficult to obtain. For a relatively tight result, we need to understand the structure of $\bq^{\star}$.

     We first show that for any fixed set $\{q_1,\ldots,q_n\}$ the permutation $\sigma$ that maximizes $h(\bq)$ is of increasing order, that is $h(\bq)\leq h(\sigma(\bq))$ where $q_{\sigma(1)}\leq \ldots \leq q_{\sigma(n)}$. This holds by a local swap argument.
     Let $F_0=0$. For any $t\in [n-1]$,
     \begin{align}
       q_t < q_{t+1}  & \Longleftrightarrow F_t < F_{t-1}+q_{t+1} \nonumber\\
         & \Longleftrightarrow  q_{t+1}\left(\frac{q_t}{(F_{t-1}+q_{t+1})F_{t+1}}\right) < q_t\left(\frac{q_{t+1}}{F_tF_{t+1}}\right) \nonumber\\
         & \Longleftrightarrow q_{t+1}\left( \frac{1}{F_{t-1}+q_{t+1}} - \frac{1}{F_{t+1}} \right) < q_t\left(\frac{1}{F_t} - \frac{1}{F_{t+1}}\right) \nonumber\\
         & \Longleftrightarrow \frac{q_{t+1}}{F_{t-1}+q_{t+1}} + \frac{q_t}{F_{t+1}} < \frac{q_t}{F_t} + \frac{q_{t+1}}{F_{t+1}}. \label{eq:swap_difference}
     \end{align}

    In other words, for any consecutive indices $t,t+1 \in [n-1]$, the difference by swapping them in $h(\bq)$ is given by the left and right hand sides of (\ref{eq:swap_difference}). This is because only summands $t$ and $t+1$ in $h(\bq)$ are affected by such a swap. We conclude that the claim on monotone increasing order of $q$ maximizing $h(\bq)$ holds.
    Hence, in order to find the solution to (\ref{eq:max_h(q)}), we may assume $q_1\leq q_2 \leq \ldots \leq q_n$. Then clearly any solution must have $q_1=a$ and $q_n=b$.
    
    We may give an explicit formulation. By (\ref{eq:am-gm}) and the unique condition for equality in the arithmetic-geometric mean inequality, we have $\frac{F_2}{F_1} = \frac{F_3}{F_2} = \frac{F_{n}}{F_{n-1}}$. Denoting this ratio $\rho = \frac{F_2}{F_1}$, we get
    \[
    \rho^{i-1} = ({F_2}/{F_1}) \cdot \ldots \cdot ({F_{i}}/{F_{i-1}}) = {F_i}/{F_1} \Longleftrightarrow F_1\rho^{i-1} = F_i \quad \forall i\in [n].
    \]
    We can express $\rho$ in terms of $Q$.
    For all $i\in [n-1]$, $F_{i+1} - F_i = q_{i+1}$. Thus
    \[
    Q = {q_n}/{q_1} = ({F_{n} - F_{n-1}})/{F_1} = ({F_1\rho^{n-1} - F_1\rho^{n-2}})/{F_1} = \rho^{n-1} - \rho^{n-2}.
    \]
    Since the function $x^{n-1}-x^{n-2}$ is monotone increasing for $x\geq 1$, there is a unique solution.
    That is, we have found $\bq^\star$. However, it has no closed form. 
    For an approximation, we lower bound $\rho$.

    Define the term $X = \frac{Qn}{2\log(Qn)}$. As we show in \Cref{lem:lemmarholowerbound}, it holds that
    $X - X^{\frac{n-2}{n-1}} < Q$. 
    Using this, we obtain that
    \begin{equation}\label{eq:rho_bound}
    \rho^{n-1} \geq Qn/(2\log(Qn)).
    \end{equation}

    To complete the proof, we next use this lower bound on $\rho$ to obtain the claimed lower bound on $h(\bq^\star)$. By \Cref{eq:approx_factor_equality} and the definition of $\rho$, we have
    \[
    h(\bq^\star) = n - \sum_{i \in [n-1]} {F_i}/{F_{i+1}} = n - (n-1)(1/\rho) = n - (n-1)\left({1}/{\rho^{n-1}}\right)^{{1}/({n-1})}.
    \]
    We want to simplify the term $(n-1)\left({1}/{\rho^{n-1}}\right)^{{1}/({n-1})}$, so we approximate it as follows. 
    By (\ref{eq:rho_bound}), we know $\rho^{n-1} \geq X$. Since the expression for $h(\bq^\star)$ is monotonically increasing with respect to $\rho$, we can substitute this lower bound to obtain:
    \[
    h(\bq^\star) \geq n - (n-1)X^{-\frac{1}{n-1}}.
    \]
    Let $y = \frac{\log X}{n-1}$. For all $Q \ge 1$ and $n \ge 2$, it is guaranteed that $X > 1$ and thus $y > 0$. We can rewrite the expression as:
    \[
    h(\bq^\star) \geq n - (n-1)e^{-y} = 1 + (n-1)(1 - e^{-y}).
    \]
    For all $y \geq 0$, the standard exponential inequality $e^y \geq 1 + y$ holds. Rearranging this yields $e^{-y} \leq \frac{1}{1+y}$, which implies $1 - e^{-y} \geq \frac{y}{1+y}$. Substituting this inequality into our expression yields:
    \[
    h(\bq^\star) \geq 1 + (n-1)\left(\frac{y}{1+y}\right).
    \]
    Substituting $y = \frac{\log X}{n-1}$ back into the equation gives:
    \[
    h(\bq^\star) \geq 1 + (n-1)\frac{\frac{\log X}{n-1}}{1 + \frac{\log X}{n-1}} = 1 + \frac{(n-1)\log X}{(n-1) + \log X}.
    \]
    Now observe that for any $a, b > 0$, the algebraic inequality $\frac{ab}{a+b} \geq  \frac{1}{2}\min\{a, b\}$ holds.
    To see this, assume without loss of generality that $a \leq b$. Then $a+b\le 2b$ and therefore $
    \frac{ab}{a+b}\ge \frac{ab}{2b}=\frac a2=\frac12\min\{a,b\}$.
    Setting $a = n-1$ and $b = \log X$, we obtain:
    \[
    h(\bq^\star) \geq 1 + \frac{1}{2}\min\left\{n-1, \log X\right\} = \min\left\{\frac{n+1}{2}, 1 + \frac{1}{2}\log X\right\}.
    \]
    Therefore, expanding $X$, we obtain the lower bound:
    \[
    h(\bq^\star) \geq \min\left\{\frac{n+1}{2}, 1 + \frac{1}{2}\log\left(\frac{Qn}{2\log(Qn)}\right)\right\}.\qedhere
    \]
    \end{proof}

    We are now ready to prove \Cref{thm:Qbounds}.

    \begin{proof}[Proof of Theorem \ref{thm:Qbounds}]
    Given $0<a,b\leq 1$, and some $\bq=(q_1,\ldots,q_n)$ with $q_i \in [a,b] \ \forall i\in [n]$. Proposition \ref{prop:equal_scaled_Q} implies that for any costs $\bc = (c_1,\ldots,c_n)$, we have ${\SW((\bq,\bc)))}/{\EQ((\bq,\bc))} \leq h(\bq)$. Now let $Z = {q_1}/{2}$. Then we know by Lemma \ref{lem:lower_h_bound} that there exists costs $c'$ with $\EQ((\bq,\bc')) = Z$ and 
    \[
    {\SW((\bq,\bc'))}/{\EQ((\bq,\bc'))} = h(\bq).
    \]
    This shows that 
    \[
    \EqStyle \sup_{\cI \in \mathcal{F}(a,b)} {\SW(\cI)}/{\EQ(\cI)} = \max_{q_i \in [a,b] \ \forall i \in [n]} h(\bq).
    \]
    Then, Lemma \ref{claim:approx_sum_characterization} proves the claim.
\end{proof}

\section{Anonymous Contracts Without Limited Liability}\label{sec:Positive Result Without Limited Liability}

In this section, we explore anonymous contracts \emph{without} the limited liability requirement.
We show that if the limited liability assumption is removed (i.e., payments to agents are allowed to be negative), 
then anonymous contracts can achieve much better approximations to the social welfare. In particular, without requiring any restrictions on the agents' probabilities of success, anonymous contracts without limited liability yield a logarithmic approximation to the social welfare. Having all agents' success probabilities distinct, \emph{even just slightly}, is enough to devise an anonymous contract extracting the whole welfare.

\subsection{General Instances}

We start by showing a positive result that applies to anonymous contracts without the limited liability requirement, showing that these achieve a logarithmic approximation to the social welfare and, thus, the optimal unrestricted contract.

\begin{theorem}[Positive Result for General Instances, Anonymous Contracts without LL]\label{thm:positive_without_ll}
For any  instance $\cI$,
let $\OPTLL(\cI)$ denote the principal's utility from an optimal anonymous contract without limited liability.
   Then it holds that:
    \begin{align*}
        \SW(\cI) = O(\log n) \cdot \OPTLL(\cI).
    \end{align*}
\end{theorem}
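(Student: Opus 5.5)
The plan is to lower bound $\OPTLL(\cI)$ by a family of explicit contracts that use negative payments only to \emph{exclude} unwanted agents, and then compare this family with $\SW(\cI)$ through a layer-cake argument. The extra freedom without limited liability should let us incentivize an \emph{arbitrary} set $S$, not only a prefix in the $c_i/q_i$ order. Write $E=\{i : q_i>c_i\}$, so that $\SW(\cI)=\sum_{i\in E}(q_i-c_i)=\sum_{i\in E}q_i m_i$ with margins $m_i := 1-c_i/q_i$.

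\emph{Step 1 (exclusion device).} Fix any $S\subseteq E$ with $|S|=k$, and set $w_j=w:=\max_{i\in S}c_i/q_i$ for $j\le k$ and $w_{k+1}=-M$ for a large $M$. An insider $i\in S$ only faces $S_{-i}$, which has at most $k-1$ other successes. So its expected transfer is exactly $q_i w\ge c_i$, exactly as in the proof of Proposition~\ref{prop:equal}. An outsider joining $S$ instead receives $q_i\bigl(w\,(1-Q_k(S)) - M\,Q_k(S)\bigr)$. We have $Q_k(S)=\prod_{i\in S}q_i>0$ whenever all $q_i>0$; agents with $q_i=0$ are irrelevant to $\SW$. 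Hence a large $M$ makes joining unprofitable, and $S$ is an equilibrium. This gives
\[
\OPTLL(\cI)\ \ge\ A:=\max_{S\subseteq E}\Bigl(\min_{i\in S} m_i\Bigr)\sum_{i\in S}q_i=\max_{t>0}\ t\cdot F(\{i: m_i\ge t\}),
\]
where $F(T)=\sum_{i\in T}q_i$. Equilibrium selection is not an issue, since the principal chooses the equilibrium.

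\emph{Step 2 (layer cake).} By Fubini, $\SW(\cI)=\int_0^1 F(\{m_i\ge t\})\,dt$. I would split this integral at $t_0 := m_{\max}/n^2$. On $[t_0,1]$ the integrand is at most $A/t$, contributing at most $A\ln(n^2)=O(\log n)\cdot A$. On $[0,t_0]$ I need to control agents with tiny margins. Bounding their contribution by $q_i m_i\le q_i m_{\max}/n^2$ works when the $q_i$'s of those agents are not much larger than the $q$ of the agent attaining $m_{\max}$. In general it does not, and this is exactly the case $F_n/F_1$ large in the proof of Lemma~\ref{claim:approx_sum_characterization}. It is also the spread regime in which Theorem~\ref{thm:negativegeneral} shows that threshold/uniform-type contracts alone fail.

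\emph{Step 3 (main obstacle: exploiting non-uniform $\bw$).} The remaining case requires genuinely non-uniform payments that discriminate agents with very different $q_i$ through the distribution of the number of successes. My first attempt would be a two-parameter contract on a chosen set $S$. It combines a payment $w_1$ for a sole success with a payment $w_2$ for exactly two successes (or a general $w_j$ solved from a small triangular system), allowed to be negative, and it keeps the exclusion penalty at index $|S|+1$. The free sign lets us tune the transfer to agents with small $q_i$ relative to large ones, in the spirit of the $(x_1,x_2)$ decomposition in Lemma~\ref{lem:lower_bound_on_expected_transfer}. There the obstruction was precisely $x_1\ge 0$, with $x_1^\star<0$ being the cost-exact solution. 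The goal would be to show the following: after grouping the agents of $E$ into $O(\log n)$ classes by margin, the agents within each class can be paid approximately their cost, up to a constant factor loss, by such a contract, uniformly in the spread of their $q_i$'s. Taking the best class then yields $\SW(\cI)=O(\log n)\cdot\OPTLL(\cI)$. I expect verifying this per-class extraction to be the difficult step. If it resists a direct approach, the fallback is to prove it first for distinct $q_i$'s via the full-extraction construction announced in Theorem~\ref{thm:invertible-noLL}, and then to handle ties by grouping equal-$q$ agents, where Step 2 already gives the $O(\log n)$ bound.
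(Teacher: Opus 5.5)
Your proposal does not yet prove the statement. Step 3, which you correctly flag as the crux, is a research plan rather than an argument: no contract is specified, and the claimed constant-factor extraction within each margin class, ``uniformly in the spread of the $q_i$'s'', is never established.

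Steps 1--2 cannot help with this. The quantity $(\min_{i\in S} m_i)\sum_{i\in S} q_i$ is always maximized by a prefix in the $c_i/q_i$ order, so your $A$ coincides with $\EQ(\cI)$. The exclusion penalty therefore buys nothing over a uniform contract, and \Cref{thm:negativegeneral} already shows $\SW/\EQ$ can be $\Omega(\min\{n,\log Q\})$.

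The fallback is also incomplete. Full extraction for distinct $q_i$'s and the $H_n=\sum_{i\le n}1/i$ bound inside each equal-$q$ group do not by themselves combine into one anonymous contract, and naive gluings fail. Take $\sqrt n$ equal-$q$ groups of size $\sqrt n$, all $q_i-c_i=\varepsilon$, with group probabilities spread as in \Cref{thm:negativegeneral}. Then the best single group, one representative per group, and the best uniform contract each earn only $O(\sqrt n\,\varepsilon)$, while $\SW=n\varepsilon$. Rescuing the fallback would require two further steps. First, show that the distinct rows of $\bQ$ stay linearly independent when probabilities repeat, so that any group-constant vector of expected transfers is implementable. Second, extract the best $k\cdot s$ within every group simultaneously. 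Neither step is in your proposal.

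The idea you are missing is to rank agents by absolute surplus $s_i=q_i-c_i$ rather than by margin $m_i$. Sort $s_1\ge\dots\ge s_n$ and let $k^\star$ maximize $k\,s_k$. Set $w_j=1$ for $j<k^\star$, $w_{k^\star}=1-s_{k^\star}/\prod_{j\le k^\star}q_j$, and a prohibitive penalty for $j>k^\star$ (your exclusion device).

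Paying the full reward on every success gives each insider expected transfer exactly $q_i$. The clawback fires only when all of $[k^\star]$ succeed. Since $q_i\cdot Q_{k^\star-1}(S_{-i})=\prod_{j\in S}q_j$ is the same for every $i\in S$, the clawback lowers every insider's transfer by the same absolute amount $s_{k^\star}$, however spread the $q_i$'s are. Thus $u_i=s_i-s_{k^\star}\ge 0$ and the principal earns $k^\star s_{k^\star}$. The digital-goods argument then gives $\SW=\sum_i\max(s_i,0)\le H_n\cdot k^\star s_{k^\star}$.

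This is the paper's proof, and it needs no class decomposition and no invertibility. On the hard instance of \Cref{thm:negativegeneral} all $s_i$ coincide and the contract extracts the full welfare. That is exactly why a margin-based threshold contract is the wrong starting point.
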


Notably, in \Cref{lem:lb-log} (whose proof is deferred to \Cref{sec:equal_prob}), we show the logarithmic bound in \Cref{thm:positive_without_ll} is tight. 
Interestingly, the hard instance is when all agents have identical probabilities.

\begin{restatable}{lemma}{lemlblog}\label{lem:lb-log}
    Let $q_i=q$ $\forall i\in [n]$. Then the worst case instance $\cI = (\bq,\bc)$ over all $(c_i)_{i \in [n]}$ has that $\OPT(\cI) = \EQ(\cI)$ and also has the following asymptotic behavior in function of $n$:
    \[
    \SW(\cI) = \Omega(\log(n)) \cdot \OPT(\cI).
    \]
\end{restatable}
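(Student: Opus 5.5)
Here $\OPT$ is the optimal anonymous contract with limited liability. Since the statement is an existence claim about worst cases, we exhibit one family of costs and prove two things about it: anonymous contracts collapse to uniform ones, and uniform contracts lose a $\log n$ factor.

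\textbf{Step 1: with identical success probabilities, anonymity is uniformity.} Let $q_i=q$ for all $i$ and let $S$ be any equilibrium with $|S|=k$. Then $Q_{j-1}(S_{-i})$ does not depend on $i\in S$. So every agent in $S$ receives the same expected transfer
\[
P=q\sum_j Q_{j-1}(S_{-i})\,w_j .
\]
Every agent outside $S$ faces the same deviation payment $P'=q\sum_j Q_{j-1}(S)\,w_j$.

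The equilibrium constraints therefore read $c_i\le P$ for $i\in S$ and $c_i\ge P'$ for $i\notin S$, and the principal's utility is $kq-kP$. The uniform contract $w=P/q$ gives the same utility, since the payments sum to one. It remains to check that it implements $S$, which needs $P\le c_i$ for every $i\notin S$. We only need $\OPT\le\EQ$ on our specific cost family, so we argue as follows.

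Order agents by cost. The best equilibrium of size $k$ has $P\ge c_{(k)}$ and $S$ consisting of the $k$ cheapest agents, so its utility is at most $kq-k\,c_{(k)}$. By \Cref{prop:equal}, this quantity is exactly the utility of the uniform contract $w=c_{(k)}/q$ on the prefix $[k]$. Hence $\OPT\le\max_k k(q-c_{(k)})=\EQ$. The inequality $\EQ\le\OPT$ is trivial, so $\OPT=\EQ$, and this holds for every cost vector.

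\textbf{Step 2: choose costs making uniform contracts lose $\log n$.} Apply \Cref{lem:lower_h_bound} with $q_1=\dots=q_n=q$ and $Z=q/2$. It yields costs $c_1\le\dots\le c_n$ with $\EQ=Z$ and
\[
\SW \;\ge\; \sum_{\ell\in[n]}(q_\ell-c_\ell) \;=\; h(\bq)\,Z,
\qquad\text{where } h(\bq)=\sum_{\ell}\frac{q}{\ell q}=H_n=\Theta(\log n).
\]
Explicitly, $q-c_\ell=Z/\ell$, i.e.\ $c_\ell=q-q/(2\ell)$, which is nonnegative. Taking $S=[n]$ in \Cref{prop:equal_scaled_Q} shows the welfare is $\sum_\ell(q-c_\ell)$, because every term is positive.

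Combining the two steps gives $\SW=H_n\cdot\EQ=H_n\cdot\OPT=\Omega(\log n)\cdot\OPT$. We read ``worst case instance'' as this maximizing cost choice. Step 2 together with \Cref{prop:equal_scaled_Q} shows that $H_n$ is in fact the worst ratio for uniform contracts, and Step 1 transfers this to all anonymous contracts.

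\textbf{Main obstacle.} The delicate point is Step 1's bound $\OPT\le\max_k k(q-c_{(k)})$. One must verify that an equilibrium of size $k$ forces $P\ge\max_{i\in S}c_i\ge c_{(k)}$, which holds because $S$ has $k$ members. No other input is needed, since utility depends on $S$ only through $k$ and $P$. For the no-limited-liability variant, tightness of \Cref{thm:positive_without_ll} would need the same argument with $\bw$ unrestricted, and the reduction above never used $\bw\ge\bzero$, so it carries over.
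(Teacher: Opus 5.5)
Your proof is correct and follows the paper's route. You bound the utility of any size-$k$ equilibrium by $k(q-c_{(k)})$ using the common expected transfer and the IC constraint of the costliest member, note that uniform contracts attain this, and then take the harmonic costs $c_\ell=q-q/(2\ell)$. That family is exactly the paper's $c_i=(1-1/i)q+c/i$ with $c=q/2$, and your $\max_{i\in S}c_i\ge c_{(k)}$ step handles non-prefix $S$ a bit more carefully than the paper, which implicitly takes $S=[k]$.

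One small slip: $\SW=\sum_\ell(q-c_\ell)$ follows directly from every margin $q/(2\ell)$ being positive and $f$ being additive, not from Proposition~\ref{prop:equal_scaled_Q}. That proposition instead supplies the matching upper bound $\SW\le H_n\cdot\EQ$.
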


In the remainder of this section, we discuss the proof of \Cref{thm:positive_without_ll}. Let $s_i = q_i - c_i$ for all $i \in [n]$. Without loss of generality, assume that the agents are sorted such that $s_1 \geq s_2 \geq \ldots \geq s_n$. 
If $s_i < 0$ for all $i \in [n]$, then $\SW = 0$, and the principal's utility from $\bw = (0, 0, \ldots, 0)$ is $0$. Hence, we focus on the case where $s_1 \geq 0$.
   Fix any success probabilities $q_1, \ldots, q_n$ and costs $c_1, \ldots, c_n$. Define
\[
    \EqStyle k^\star \in \argmax_{k \in [n]} \, k \cdot s_k,
\]
and consider an anonymous contract $\bw = (w_1, \ldots, w_n)$ given by:
\[
    w_j = \begin{cases}
    1 & \text{if } j < k^\star,\\
    1 - s_{k^\star} / \prod_{j \in [k^\star]} q_j & \text{if } j = k^\star,\\
    -\infty & \text{if } j > k^\star.
    \end{cases}
\]

Intuitively, this contract has two components: (1) the principal pays a reward of $1$ to each agent in $[k^\star]$ upon success, and (2) if all agents in $[k^\star]$ succeed, then each agent in $[k^\star]$ pays a transfer of $s_{k^\star} / \prod_{j \in [k^\star]} q_j$ back to the principal. Note that $s_{k^\star} / \prod_{j \in [k^\star]} q_j$ may exceed $1$, and thus the contract violates limited liability.
In \Cref{app:proofsfromfive}, we show that $S = [k^\star]$ is a Nash equilibrium.

\begin{restatable}{lemma}{prefixequilibrium}\label{lem:prefixequilibrium}
The set $S = [k^\star]$ forms a Nash equilibrium under contract $\bw$.
\end{restatable}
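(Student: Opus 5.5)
We verify the two families of equilibrium constraints \eqref{eq:principal-program2}--\eqref{eq:principal-program3} for $S=[k^\star]$ directly from the agent utility formula \eqref{eq:agent-utility}. Since $w_j=-\infty$ for $j>k^\star$, we interpret this as a sufficiently large negative number $-M$ and take $M\to\infty$ at the end, or equivalently argue with $M$ large enough. We also note that $s_{k^\star}\ge 0$: we are in the case $s_1\ge 0$, and $k^\star$ maximizes $k\cdot s_k$, so $k^\star s_{k^\star}\ge s_1\ge 0$.

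\emph{Agents inside $S$.} Fix $i\in[k^\star]$. With $S_{-i}$ of size $k^\star-1$, the number of successes among the other agents is at most $k^\star-1$, so the payment index $j\le k^\star$ and the $-\infty$ entries never appear. Agent $i$'s expected payment is
\[
q_i\Big(1-\tfrac{s_{k^\star}}{\prod_{j\in[k^\star]}q_j}\,Q_{k^\star-1}(S_{-i})\Big)=q_i-s_{k^\star}.
\]
This uses $Q_{k^\star-1}(S_{-i})=\prod_{j\in[k^\star]\setminus\{i\}}q_j$ and $\sum_j Q_{j-1}(S_{-i})=1$. The constraint $q_i-s_{k^\star}\ge c_i$ is then equivalent to $s_i\ge s_{k^\star}$, which holds by the sorting $s_1\ge\dots\ge s_n$ since $i\le k^\star$.

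\emph{Agents outside $S$.} Fix $i>k^\star$. If $i$ joins, the set becomes $[k^\star]\cup\{i\}$, and with probability $q_i\prod_{j\in[k^\star]}q_j>0$ there are $k^\star+1$ successes, triggering the payment $-M$. Agent $i$'s expected payment is therefore at most
\[
q_i\Big(1-s_{k^\star}/\textstyle\prod_{j\in[k^\star]}q_j\cdot(\cdots)\Big)-M\,q_i\prod_{j\in[k^\star]}q_j,
\]
which is below $c_i\ge 0$ once $M$ is large. Strictly speaking, the paper's constraint \eqref{eq:principal-program3} bounds the joining payment, and all terms except the $-M$ term are bounded, so the condition holds for all large $M$.

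\emph{Main obstacle.} The delicate point is degenerate cases. If some $q_j=0$ for $j\le k^\star$, the term $s_{k^\star}/\prod q_j$ is undefined. I would handle this by noting that if $q_j=0$ then $s_j=-c_j\le 0$, so $s_{k^\star}\le s_j\le 0$ and hence $s_{k^\star}=0$; the correction term is then $0$ and we may set $w_{k^\star}=1$. Similarly, if $q_i=0$ for an outside agent $i$, the joining payment is $0\le c_i$. I would treat these two cases separately at the start and then run the computation above.
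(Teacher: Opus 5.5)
Your proof is correct and is essentially the paper's argument: agents inside $[k^\star]$ receive expected payment $q_i - s_{k^\star}$, so their constraint reduces to $s_i \ge s_{k^\star}$ by the sorting, and any outside agent who joins faces the $w_{k^\star+1}=-\infty$ penalty (your $-M$) on the event of $k^\star+1$ successes, which has positive probability. Your extra degenerate case goes beyond the paper, but there is one slip in it: when some $q_j=0$ with $j\le k^\star$, that event has probability zero, so "run the computation above" does not go through for outside agents; instead use that an outside agent's joining utility is $q_i-c_i=s_i\le s_{k^\star}=0$.
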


We are now ready to prove \Cref{thm:positive_without_ll}.
The idea behind the proof is similar to the analysis used for auctions of digital goods with unlimited supply \cite[e.g.,][]{DBLP:journals/geb/GoldbergHKSW06}.

\begin{proof}[Proof of \Cref{thm:positive_without_ll}]
We calculate the principal's expected utility under  $S = [k^\star]$, which is an equilibrium due to \Cref{lem:prefixequilibrium},
\begin{align*}
    u_p(S, \bw) &= \EqStyle \sum_{i \in [k^\star]} q_i - \sum_{j \in [k^\star]} j \cdot Q_j(S) \cdot w_j \\
    &= \EqStyle \sum_{i \in [k^\star]} q_i - \sum_{i \in [k^\star]} q_i + \left( \prod_{j \in [k^\star]} q_j \right) \cdot \left( \frac{{k^\star \cdot s_{k^\star}} }{{\prod_{j \in [k^\star]} q_j} }\right) \\
    &=  k^\star \cdot s_{k^\star}.
\end{align*}

By construction of $k^\star$, we have $k^\star \cdot s_{k^\star} \geq i \cdot s_i$ for all $i \in [n]$. Moreover, since $s_1 \geq 0$, we have that $k^\star \cdot s_{k^\star} \geq 0$, which implies that $k^\star \cdot s_{k^\star} \geq i \cdot \max(s_i,0)$ for all $i \in [n]$.  Therefore:
\[
   \EqStyle \SW = \sum_{i \in [n]} \max(s_i,0) = \sum_{i \in [n]} ({i \cdot \max(s_i,0)})/i \leq \sum_{i \in [n]} 
    ({k^\star \cdot s_{k^\star}})/{i}
    = u_p(S, \bw) \cdot \sum_{i \in [n]} 1/i .
\]
It follows that $\SW = \Theta(\log n) \cdot u_p(S, \bw)$, which completes the proof.
\end{proof}

\subsection{Distinct Success Probabilities}

Finally, we show that, under the mild assumption that the success probabilities $q_1,\dots,q_n$ are distinct, the entire social welfare can be extracted by an anonymous contract.
By potentially removing some agents from the instance, we may assume without loss of generality that $q_i > c_i$ for all $i \in [n]$, and we consider the optimization problem over anonymous contracts that incentivize $S = [n]$ as an equilibrium.\footnote{To ensure that the excluded agents do not want to exert effort, we can simply pad the contract in the reduced instance by setting $w_j = -\infty$ for all $j > n'$ where $n' \leq n$ is the number of agents in the reduced instance, similar to the contract used in the proof of \Cref{thm:positive_without_ll}.} Note that all incentive compatibility constraints are of type \eqref{eq:principal-program2}, and none are of type \eqref{eq:principal-program3}. Specifically, we have:
\begin{align}\label{eq:ic-for-all}
    \EqStyle \max_{\bw \in \mathbb{R}^n} \bone^\top(\bq - \bQ \bw) \quad \text{s.t.} \quad \bQ \bw \ge \bc, 
\end{align}
where the $(i,j)$-entry of matrix $\bQ$ is defined as $\bQ_{i,j}=q_i \cdot Q_{j-1}(S_{-i})$, and vectors $\bw, \bc$ are respectively payment and cost vectors. We can now state the following theorem:
\begin{theorem}\label{thm:invertible-noLL}
    For any  instance $\cI = (\bq,\bc)$ with $q_1, \ldots, q_n$ distinct,
    it holds that the largest principal's utility from an anonymous contract without limited liability is attained for $\bw =\bQ^{-1}\bc$ and gives
    \begin{align*}
        u_P(S,\bw) = \SW(\cI).
    \end{align*}    
\end{theorem}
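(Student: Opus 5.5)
Since $\bw=\bQ^{-1}\bc$ gives $\bQ\bw=\bc$ exactly, every agent's expected transfer equals $c_i$. Every incentive constraint in \eqref{eq:ic-for-all} is therefore tight. The principal's utility is $\bone^\top(\bq-\bc)=\sum_i(q_i-c_i)=\SW(\cI)$, because we reduced to the instance where $q_i>c_i$ for all $i$ and the optimal set is $[n]$. Optimality is immediate: any feasible $\bw$ has $\bQ\bw\ge\bc$ and hence utility $\bone^\top(\bq-\bQ\bw)\le\bone^\top(\bq-\bc)$. Moreover, no contract can exceed $\SW(\cI)$, since transfers to incentivized agents must cover costs (individual rationality). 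The equilibrium condition for $S=[n]$ needs only constraints of type \eqref{eq:principal-program2}, which hold with equality. For the padded instance, the $w_j=-\infty$ entries for $j>n'$ deter excluded agents, as in \Cref{thm:positive_without_ll}. So the whole proof reduces to showing that $\bQ$ is invertible when the $q_i$ are distinct.

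For invertibility I would use generating functions. Row $i$ of $\bQ$ is $q_i$ times the coefficient vector of $P_i(x)=\prod_{k\ne i}\bigl((1-q_k)+q_kx\bigr)$, with entry $j$ equal to the coefficient of $x^{j-1}$. Each $P_i$ is a polynomial of degree at most $n-1$. The $q_i$ are distinct, so at most one of them is $0$ and at most one is $1$; an agent with $q_i=0$ has $q_i>c_i\ge0$ violated, so we may assume every $q_i>0$. It then suffices to show that $P_1,\dots,P_n$ are linearly independent in the $n$-dimensional space of polynomials of degree $<n$.

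Set $L_k(x)=(1-q_k)+q_kx$, so $P_i=\prod_{k\ne i}L_k$. Each $L_k$ has root $r_k=-(1-q_k)/q_k=1-1/q_k$ when $q_k>0$. The map $q\mapsto 1-1/q$ is injective, so the roots $r_k$ are distinct. This is exactly the Lagrange-basis structure. Suppose $\sum_i\alpha_iP_i=0$ and evaluate at $r_m$. Every $P_i$ with $i\ne m$ contains the factor $L_m$ and vanishes there, while $P_m(r_m)=\prod_{k\ne m}L_k(r_m)\ne0$ because the roots are distinct. Hence $\alpha_m=0$ for every $m$, and the $P_i$ are independent.

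The main obstacle is the degenerate agent with $q_k=1$. Then $L_k=x$, and its root is $0$, which still fits the formula $1-1/1=0$ and stays distinct from the other roots, so the argument goes through without modification. I would nonetheless check this case explicitly, together with the scaling by $q_i>0$, which preserves invertibility. As a sanity check on the role of distinctness: if $q_i=q_j$ then $P_i=P_j$ only when the other factors coincide, but in any case $q_iP_i$ and $q_jP_j$ differ only by exchanging the factors $L_i$ and $L_j$, which are then equal. This matches the necessity remark and \Cref{lem:lb-log}.
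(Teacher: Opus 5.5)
Your proof is correct and takes the same route as the paper: the same generating polynomials with roots $1-1/q_k$ (evaluating at $r_m$ is just the standard proof of the Lagrange-basis independence the paper cites), and the same observation that $\bQ\bw \ge \bc$ caps the utility at $\bone^\top(\bq-\bc)$. Your extra remark that any anonymous contract, under any equilibrium, yields at most $\SW(\cI)$ is a useful addition: it justifies restricting to contracts that incentivize $S=[n]$, a step the paper leaves implicit.
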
 

In the next lemma, we first show that, when $q_i$'s are all distinct, $\bQ$ is invertible.
\begin{lemma}\label{lem:invertibility}
Let $q_1,\dots,q_n$ be distinct positive real numbers. Then, the matrix $\bQ$ is invertible.
\end{lemma}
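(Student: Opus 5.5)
The plan is to read each row of $\bQ$ as the coefficient vector of a polynomial, and then show these polynomials are linearly independent by a Lagrange-interpolation argument.

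\textbf{Step 1: rows as polynomials.} Take $S=[n]$. For each agent $k$ put $L_k(x) := (1-q_k) + q_k x$. By definition of $Q_t(\cdot)$, the number $Q_{j-1}(S_{-i})$ is the probability that exactly $j-1$ agents of $S_{-i}$ succeed. This is exactly the coefficient of $x^{j-1}$ in the probability generating polynomial $\prod_{k \neq i} L_k(x)$. Hence row $i$ of $\bQ$ is the coefficient vector, in the monomial basis $1, x, \ldots, x^{n-1}$, of
\[
P_i(x) := q_i \prod_{k \in [n]\setminus\{i\}} L_k(x),
\]
which has degree at most $n-1$. The monomial-coefficient map is an isomorphism from polynomials of degree at most $n-1$ onto $\R^n$. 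So $\bQ$ is invertible if and only if $P_1,\ldots,P_n$ are linearly independent.

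\textbf{Step 2: distinct roots.} Since $q_k>0$, each $L_k$ is a genuine linear polynomial with unique root $r_k = 1 - 1/q_k$. The map $q \mapsto 1-1/q$ is injective on positive reals, so distinct $q_k$ give pairwise distinct roots $r_1,\ldots,r_n$. For $k\neq m$ this yields $L_k(r_m) = q_k (r_m - r_k) \neq 0$.

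\textbf{Step 3: evaluation kills all but one term.} Suppose $\sum_i a_i P_i \equiv 0$. Evaluate at $x = r_m$. Every $P_i$ with $i \neq m$ contains the factor $L_m$ and therefore vanishes at $r_m$. What remains is
\[
a_m \, q_m \prod_{k\neq m} L_k(r_m) = 0.
\]
The product is nonzero by Step 2 and $q_m>0$, so $a_m = 0$. This holds for every $m$, so the $P_i$ are independent and $\bQ$ is invertible.

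\textbf{Main obstacle.} The only step needing care is Step 1: matching the indexing $Q_{j-1}$ to the coefficient of $x^{j-1}$, and checking that all $n$ columns are covered. This works because $|S_{-i}|=n-1$, so the degree is at most $n-1$. After that the argument is just the standard fact that the $P_i$ form a Lagrange-type basis at the nodes $r_1,\ldots,r_n$.

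Equivalently, one could write $\bQ = D\,V^{-1}$-type factorizations via evaluation at the $r_m$. Concretely, the matrix $(P_i(r_m))_{i,m}$ is diagonal with nonzero entries, and it equals $\bQ$ times a Vandermonde matrix in the distinct $r_m$. This gives the same conclusion.
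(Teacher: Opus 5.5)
Your proof is correct and follows the paper's argument. The paper also identifies row $i$ of $\bQ$ with the coefficient vector of $P_i(x)=q_i\prod_{k\neq i}(q_kx+1-q_k)$ and uses the distinct nodes $z_k=1-1/q_k$; it then concludes by showing that each $P_i$ is a nonzero scalar multiple of the Lagrange basis polynomial $\ell_i$ and citing the linear independence of that basis, whereas your evaluation at the nodes $r_m$ proves that independence directly.
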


\begin{proof}
    For $i \in [n]$, define the polynomial $\EqStyle P_i(x) \coloneq \EqStyle q_i \cdot \prod_{k \in [n] \setminus \{i\}} \left( q_k x + (1-q_k) \right)$. Note that:
    \begin{align}
       \EqStyle P_i(x) &= \EqStyle q_i \cdot \prod_{k \in [n] \setminus \{i\}} \left( q_k x + (1-q_k) \right) \nonumber \\
       &=  \EqStyle  \sum_{S \subseteq [n] \setminus \{i\}} q_i \cdot \prod_{k \in S} (q_{k}  x) \prod_{k \in ([n] \setminus \{i\}) \setminus S} (1-q_{k})  \nonumber \\ 
       &=  \EqStyle  \sum_{j=0}^{n-1} q_i \cdot \sum_{S \subseteq [n] \setminus \{i\},  \; |S| = j}  \left( \prod_{k \in S} q_{k} \prod_{k \in ([n] \setminus \{i\}) \setminus S} (1-q_{k}) \right) \cdot x^j \nonumber \\
       &=  \EqStyle \sum_{j=0}^{n-1} q_i \cdot Q_j(S_{-i}) \cdot x^j \nonumber \\
       &= \EqStyle \sum_{j=0}^{n-1} \boldsymbol{Q}_{i,j+1} \cdot x^j \label{eq:rows_x}.
    \intertext{For each $i \in [n]$, set $z_i = 1-1/q_i$, which is well-defined since $q_i > 0$. Because $q_1, \ldots, q_n$ are pairwise distinct, so are $z_1, \ldots, z_n$.
Consider the standard Lagrange basis for polynomials of degree at most $n-1$, denoted $\ell_1(x), \ldots, \ell_n(x)$, where $\ell_i(x) = \prod_{k \in [n] \setminus \{i\}} (x-z_k) / (z_i - z_k)$ for $i \in [n]$. Then:}
        P_i(x) &= \EqStyle q_i \cdot \prod_{k \in [n] \setminus \{i\}} \left( q_k x + (1-q_k) \right) \nonumber \\
        &= \EqStyle q_i \cdot \prod_{k \in [n] \setminus \{i\}} \left( q_{k}  \left( x - (1 - 1/q_k) \right) \right)  \nonumber \\
        &=  \EqStyle q_i \cdot \prod_{k \in [n] \setminus \{i\}} \left( q_{k}  \left( x - z_{k} \right) \right) \nonumber \\
        &= \EqStyle \prod_{k \in [n]} q_k \cdot \prod_{k \in [n] \setminus \{i\}} (z_i - z_k) \cdot  \prod_{k \in [n] \setminus \{i\}} \left( \left( x - z_{k} \right) / \left(z_i - z_k \right) \right) \nonumber \\
         &= \EqStyle \prod_{k \in [n]} q_k \cdot \prod_{k \in [n] \setminus \{i\}} (z_i - z_k) \cdot  \ell_i(x). \nonumber
    \end{align}
    Hence $P_i(x)$ is a non-zero scalar multiple of $\ell_i(x)$. Since $\ell_1(x), \ldots, \ell_n(x)$ are linearly independent (e.g., \citep[Chapter 1]{phillips2003interpolation}), it follows that $P_1(x), \ldots, P_n(x)$ are also linearly independent. Writing  $\boldsymbol{Q}_i$ for the $i$-th row of $\boldsymbol{Q}$, \Cref{eq:rows_x} gives  $P_i(x) = \boldsymbol{Q}_i \cdot (1, x, x^2, \ldots, x^{n-1})$, and hence $\boldsymbol{Q}_1, \ldots, \boldsymbol{Q}_n$ are linearly independent, implying that $\boldsymbol{Q}$ is invertible.
\end{proof}

With this fact at hand, we are ready to prove \Cref{thm:invertible-noLL}:
\begin{proof}[Proof of \Cref{thm:invertible-noLL}]
    Since, by \Cref{lem:invertibility}, $\bQ$ is invertible, every $\bw \in \R^n$ can be written as $\bw = \bQ^{-1}\bu$ for some other $\bu \in \R^n$. Hence, the constraint reads $\bu \ge \bc$, and the principal's objective $\bone^\top(\bq - \bu)$. The latter is maximized when $\bu = \bc$,  that is the constraint in \eqref{eq:ic-for-all} becomes $\bw = \bQ^{-1}\bc$ (where some entries of $\bw$ may be negative). Therefore, the objective is
    \[
        \EqStyle \bone^\top(\bq - \bQ\bQ^{-1}\bc) = \sum_{i \in [n]} (q_i - c_i) = \SW,
    \]
    which concludes the proof.
\end{proof}

\clearpage

\bibliographystyle{alpha}
\bibliography{references}

\clearpage

{\noindent \LARGE  \textbf{Appendix}}

\appendix

\section{Omitted Content From Section~\ref{sec:model}: General Contracts}\label{app:model}

In this appendix, we restate and prove the main proposition on general contract:

\propunrestricted*

\begin{proof}
Fix a target set $S\subseteq[n]$ and consider the action profile where agents in $S$ exert effort and agents in $[n]\setminus S$ do not. Recall that if agent $i$ exerts effort then $\Pr[\omega_i=1]=q_i$, and if they do not exert then $\omega_i=0$. Moreover, in what follows, recall that payments satisfy limited liability, i.e., $t_i(\omega)\ge 0$.

\medskip
\noindent (1) 
If $q_i=0$, then the distribution of $\omega$ is the same whether $i$ exerts effort or not. Hence, for every $a_{-i}$,
$\E{t_i(\omega)\mid a_i=1,a_{-i}} = \E{t_i(\omega)\mid a_i=0,a_{-i}}$.
Thus the incentive constraint for $i$ reduces to $-c_i\ge 0$, which is impossible when $c_i>0$.
So no contract can incentivize a set $S$ containing an agent with $q_i=0$ and $c_i>0$.
Conversely, if $S$ contains no such agent, the contract defined in (2) below makes
the profile ``exert effort if and only if $i\in S$'' an equilibrium.\\

\noindent (2)
Fix any general contract $t_i:\{0,1\}^n\to\R_{\ge 0}$. For each agent $i$, define
\[
\tilde t_i(\omega)\;:=\; t_i(\omega)\cdot \mathbf{1}\{\omega_i=1\},\qquad
\tilde t_j:=t_j\ \ (j\neq i).
\]
Under any action profile, $\tilde t_i(\omega)\le t_i(\omega)$ pointwise, so the principal's expected
payment weakly decreases; moreover when $a_i=0$ we have $\omega_i=0$, hence
$\E{\tilde t_i(\omega)\mid a_i=0,a_{-i}}=0$.
Therefore, to incentivize exertion of effort for $i$ it is optimal to concentrate all payments on outcomes with
$\omega_i=1$.

Now suppose $i\in S$ and $q_i>0$, and $t_i(\omega)>0$ only if $\omega_i=1$.
Then
\[
\E{t_i(\omega)\mid a_i=1,a_{-i}} =  q_i\cdot \E{t_i(\omega)\mid \omega_i=1,a_i=1,a_{-i}},
\qquad
\E{t_i(\omega)\mid a_i=0,a_{-i}}=0.
\]
The incentive constraint $ \E{t_i(\omega)\mid a_i=1,a_{-i}} - c_i \ge 0$ becomes
$q_i\cdot \E{t_i(\omega)\mid \omega_i=1,\cdot}\ge c_i$.
By limited liability, the minimum expected payment satisfying this is achieved by setting
$t_i(\omega)=c_i/q_i$ whenever $\omega_i=1$ and $0$ otherwise, i.e.
\[
t_i(T)=\mathbf{1}\{i\in T\}\cdot \frac{c_i}{q_i}\qquad (i\in S,\ q_i>0),
\]
and $t_i(\cdot)=0$ for $i\notin S$ (and also for $i\in S$ with $q_i=0$, which forces $c_i=0$ by (1)).
This proves the claimed optimal form.\\

\noindent (3)
Under the above contract and the induced profile, for each $i\in S$ with $q_i>0$,
$\E{t_i(\omega)}=q_i\cdot \tfrac{c_i}{q_i}=c_i$, and for all other $i$ the expected payment is $0$.
Hence, total expected payment is $\sum_{i\in S}c_i$, and the principal's expected value is $f(S)$.
Therefore the principal's expected utility is $f(S)-\sum_{i\in S}c_i$.
\end{proof}

\section{Omitted Content From Section~\ref{sec:insights}: Existence of PNE}\label{app:equilibria}

For convenience, we restate the main theorem of Section~\ref{sec:pne} we seek to show:
\thmequilibrium*

\subsection{Proof of Lemma~\ref{lem:pne-cycle}: Equivalence between PNE and FIP}\label{sec:equiv-fip-pne}

As mentioned, we first argue that showing Theorem~\ref{thm:equilibrium} 
is implied by showing the finite improvement property, i.e., that no best-response sequence can contain a cycle. 
\lempnecyle*
\begin{proof}
    Assume the finite improvement property holds. This implies that the graph of best-response dynamics is acyclic. Since the number of possible sets $S$ that can be incentivized is finite (at most $2^n$), every best-response sequence must be finite and terminate at a node (set) $S^\star$ with no outgoing edges.
    
    At this terminal node $S^\star$, no agent can profitably deviate. Specifically, no agent inside $S^\star$ wishes to leave (preventing decrementing deviations), and no agent outside $S^\star$ wishes to join (preventing incrementing deviations). Consequently, the constraints in~\eqref{eq:principal-program2} and in~\eqref{eq:principal-program3} are satisfied. By definition, $S^\star$ is a pure Nash equilibrium implementing $\bw$.
\end{proof}

\subsection{Proof of Lemma~\ref{lem:nocycle}: Establishing the Finite Improvement Property}\label{sec:fip}

We now turn to proving Lemma~\ref{lem:nocycle}:
\lemnocyle*

\clcostcycle*
\begin{proof}
    Let $S$ be the set from which we start the walk along the cycle and let us take an arbitrary agent $i \in S$. We analyze his possible deviations: In particular, we argue that if an agent causes an incrementing deviation then his next deviation must be decrementing (but it cannot happen immediately after), and vice versa.
    
    It is clear that agent $i$ cannot cause two successive (at times $t, t^\prime$ with $t < t^\prime$) incrementing or decrementing deviations, as he cannot respectively join a set he already belongs to and cannot leave a set he does not belong to. Only a decrementing deviation can follow an incrementing one and vice versa. Indeed, such a deviation has to exist as otherwise we would end up at a different set $T \neq S$, a contradiction. Moreover, if agent $i$ causes an incrementing deviation to leave $S$ at time $t$, it cannot cause a decrementing deviation to join $S \setminus \{i\}$ at time $t+1$, as otherwise he would not have joined $S$ in the first place. 
    
    Therefore, since $S, i$ have been taken arbitrarily, it must be the case that, for all agents contained in any of the sets of the cycle, their deviations have to alternate each other, and, thus, the number of their incrementing deviations is equal to the number of their decrementing deviations. This implies, in particular, that, in a cycle, the sum of costs on all incrementing deviations is equal to the sum of costs on all decrementing ones, as desired.
\end{proof}

\clrewardreduction*
    \begin{proof}
    Consider a set $S \subseteq [n]$ with $|S|=k+1$. For the incrementing deviation caused by $i^+$, we decompose the sum based on whether $i^-$ is active (in $T$) or inactive (not in $T$). Note that if $i^- \notin T$, it contributes a factor of $(1-q_{i^-})$ to the probability product:
    \allowdisplaybreaks
    \begin{align*}
        g_{i^+}(S) &= q_{i^+} \cdot \sum_{j \in [k+1]} Q_{j-1}(S \setminus \{i^+\}) \cdot w_j \\
        &= q_{i^+} \cdot \sum_{j \in [k+1]} w_j \cdot \sum_{\substack{T \subseteq S \setminus \{i^+\}\\ |T|=j-1}} \prod_{i \in T} q_i \cdot \prod_{i \in S \setminus (T \cup \{i^+\})} (1-q_i) \\
        &= q_{i^+} \cdot \sum_{j \in [k+1]} w_j \cdot \Bigg(q_{i^-} \cdot \sum_{\substack{T \subseteq S \setminus \{i^+\}:~ i^- \in T\\ |T|=j-1}} \prod_{i \in T \setminus \{i^-\}} q_i \cdot \prod_{i \in S \setminus (T \cup \{i^+\})} (1-q_i) \\
        &\hspace{3cm} + (1-q_{i^-}) \cdot \sum_{\substack{T \subseteq S \setminus \{i^+\}:~ i^- \notin T\\ |T|=j-1}} \prod_{i \in T} q_i \cdot \prod_{i \in S \setminus (T \cup \{i^+, i^-\})} (1-q_i)\Bigg) \\
        &= (1-q_{i^-}) \cdot \underbrace{q_{i^+} \cdot \sum_{j \in [k]} w_j \cdot \sum_{\substack{T \subseteq S \setminus \{i^+\}:~ i^- \notin T\\ |T|=j-1}} \prod_{i \in T} q_i \cdot \prod_{i \in S \setminus (T \cup \{i^+, i^-\})} (1-q_i)}_{=: g^{\prime}_{i^+}(S)}\\
        &\quad + \underbrace{q_{i^+}q_{i^-} \cdot \sum_{j \in [k+1]} w_j \cdot \sum_{\substack{T \subseteq S \setminus \{i^+\}:~ i^- \in T\\ |T|=j-1}} \prod_{i \in T \setminus \{i^-\}} q_i \cdot \prod_{i \in S \setminus (T \cup \{i^+\})} (1-q_i)}_{=: g^{\prime\prime}_{i^+}(S)}.
    \end{align*}
    Observe that in the step from the third to the fourth line above, when $i^- \notin T$, the summation index changes from $j \in [k+1]$ to $j \in [k]$: this is because $j = k+1$ requires $|T|=k$, but since $T \subseteq S \setminus \{i^+\}$ and $i^- \notin T$, this implies that $T \subseteq S \setminus \{i^+, i^-\}$, that is $|T|\le |S|-2=k-1$, which is a contradiction. Hence, no such set $T$ can exist and the term for $j = k+1$ becomes zero. 
    
    By an identical derivation for $\ell_{i^-}(S)$ (decomposing based on $i^+$ and noting the factor $(1-q_{i^+})$ when $i^+ \notin T$), we obtain:
    \begin{align*}
        \ell_{i^-}(S) &= (1-q_{i^+}) \cdot \underbrace{q_{i^-} \cdot \sum_{j \in [k]} w_j \cdot \sum_{\substack{T \subseteq S \setminus \{i^-\}:~ i^+ \notin T\\ |T|=j-1}} \prod_{i \in T} q_i \cdot \prod_{i \in S \setminus (T \cup \{i^+, i^-\})} (1-q_i)}_{=: \ell^{\prime}_{i^-}(S)}\\
        &\quad + \underbrace{q_{i^+}q_{i^-} \cdot \sum_{j \in [k+1]} w_j \cdot \sum_{\substack{T \subseteq S \setminus \{i^-\}:~ i^+ \in T\\ |T|=j-1}} \prod_{i \in T \setminus \{i^+\}} q_i \cdot \prod_{i \in S \setminus (T \cup \{i^-\})} (1-q_i)}_{=: \ell^{\prime\prime}_{i^-}(S)}.
    \end{align*}
    We observe three equalities:
    \begin{itemize}
        \item[(a)] By definition, $g^{\prime}_{i^+}(S) = g_{i^+}(S \setminus \{i^-\})$ and $\ell^{\prime}_{i^-}(S) = \ell_{i^-}(S \setminus \{i^+\})$.
        \item[(b)] Expanding the definitions of $q_{i^-} g^{\prime}_{i^+}(S), q_{i^+} \ell^{\prime}_{i^-}(S)$, we observe they are equal:
        \[
            q_{i^-} g^{\prime}_{i^+}(S) = q_{i^-} q_{i^+} \sum_{j} w_j Q_{j-1}(S \setminus \{i^+, i^-\}) = q_{i^+} \ell^{\prime}_{i^-}(S).
        \]
        \item[(c)] We have that $g^{\prime\prime}_{i^+}(S)$ and $\ell^{\prime\prime}_{i^-}(S)$ are identical by reducing both to a summation over subsets of $S' = S \setminus \{i^+, i^-\}$. Recall the definition of $g^{\prime\prime}_{i^+}(S)$:
        \[
            g^{\prime\prime}_{i^+}(S) = q_{i^+}q_{i^-} \sum_{j \in [k+1]} w_j \sum_{\substack{T \subseteq S \setminus \{i^+\}:~ i^- \in T\\ |T|=j-1}} \prod_{i \in T \setminus \{i^-\}} q_i \cdot \prod_{i \in S \setminus (T \cup \{i^+\})} (1-q_i).
        \]
        We apply the change of variable $R = T \setminus \{i^-\}$. Since $T \subseteq S \setminus \{i^+\}$ and contains $i^-$, it follows that $R \subseteq S \setminus \{i^+, i^-\}$. Furthermore, $|R| = |T|-1 = j-2$, and the complement set becomes $S \setminus (T \cup \{i^+\}) = S \setminus (R \cup \{i^-, i^+\})$. Substituting this into the equation:
        \begin{equation}
            g^{\prime\prime}_{i^+}(S) = q_{i^+}q_{i^-} \sum_{j \in [k+1]} w_j \sum_{\substack{R \subseteq S \setminus \{i^+, i^-\}\\ |R|=j-2}} \prod_{i \in R} q_i \cdot \prod_{i \in S \setminus (R \cup \{i^+, i^-\})} (1-q_i). \label{eq:common-form}
        \end{equation}
        Now consider $\ell^{\prime\prime}_{i^-}(S)$:
        \[
            \ell^{\prime\prime}_{i^-}(S) = q_{i^+}q_{i^-} \sum_{j \in [k+1]} w_j \sum_{\substack{T \subseteq S \setminus \{i^-\}:~ i^+ \in T\\ |T|=j-1}} \prod_{i \in T \setminus \{i^+\}} q_i \cdot \prod_{i \in S \setminus (T \cup \{i^-\})} (1-q_i).
        \]
        We apply the change of variable $R = T \setminus \{i^+\}$. Since $T \subseteq S \setminus \{i^-\}$ and contains $i^+$, we again have $R \subseteq S \setminus \{i^-, i^+\}$. With $|R|=j-2$ and the complement $S \setminus (T \cup \{i^-\}) = S \setminus (R \cup \{i^+, i^-\})$, this term reduces to precisely the same expression as \eqref{eq:common-form}. Thus,
        \[
            g^{\prime\prime}_{i^+}(S) = \ell^{\prime\prime}_{i^-}(S).
        \]
    \end{itemize}
    Finally, we compute the difference:
    \begin{align*}
        g_{i^+}(S) - \ell_{i^-}(S) &= \left( (1-q_{i^-}) g^{\prime}_{i^+}(S) + g^{\prime\prime}_{i^+}(S) \right) - \left( (1-q_{i^+}) \ell^{\prime}_{i^-}(S) + \ell^{\prime\prime}_{i^-}(S) \right) \\
        &= g^{\prime}_{i^+}(S) - \ell^{\prime}_{i^-}(S) + \underbrace{(g^{\prime\prime}_{i^+}(S) - \ell^{\prime\prime}_{i^-}(S))}_{=0 \text{ by (c)}} + \underbrace{(q_{i^+} \ell^{\prime}_{i^-}(S) - q_{i^-} g^{\prime}_{i^+}(S))}_{=0 \text{ by (b)}} \\
        &\stackrel{\text{(a)}}{=} g_{i^+}(S \setminus \{i^-\}) - \ell_{i^-}(S \setminus \{i^+\}).
    \end{align*}
    The claim follows.
\end{proof}

\clpeakwalk*
\begin{proof}
    We start by the agent that first leaves after the peak, calling this deviation $t^-_1$. By Proposition~\ref{cl:reward-reduction}, we know that, for all $t < t^-_1$ in $P_r$, after the removal of $i(t^-_1)$,
    \begin{align*}
        &\sum_{t \in \ID(P_r)} g_{i(t)}(S(t)) - \ell_{i(t^-_1)}(S(t^-_1)) \\
        = &\sum_{t \in \ID(P_r)} g_{i(t)}(S(t) \setminus \{i(t^-_1)\}) - \ell_{i(t^-_1)}\left(S(t^-_1) \setminus \bigcup_{t^\prime \in \ID(P_r)} \{i(t^\prime)\}\right).
    \end{align*}
    unless there exists a $t \in \ID(P_r)$ such that $i(t) = i(t^-_1)$ (i.e., $i(t^-_1)$ previously caused an incrementing deviation within $P_r$). In that case, denote such incrementing deviation as $t^+_1$, so that $i(t^+_1)=i(t^-_1)$. Then, we recognize that 
    \[
        S(t^+_1) = S(t^-_1) \setminus \bigcup_{t \in \ID(P_r): t > t^+_1} \{i(t)\},
    \]
    and, thus,
    \[
        g_{i(t^+_1)}(S(t^+_1)) - \ell_{i(t^-_1)}\left(S(t^-_1) \setminus \bigcup_{t \in \ID(P_r): t > t^+_1} \{i(t)\}\right) = 0.
    \]
    Therefore, in this case, the previous expression simplifies to
    \begin{align*}
        &\sum_{t \in \ID(P_r)} g_{i(t)}(S(t)) - \ell_{i(t^-_1)}(S(t^-_1)) \\
        = &\sum_{t \in \ID(P_r)} g_{i(t)}(S(t) \setminus i(t^-_1)) - \ell_{i(t^-_1)}\left(S(t^-_1) \setminus \bigcup_{t^\prime \in \ID(P_r)} \{i(t^\prime)\}\right)\\
        = &\sum_{t \in \ID(P_r): i(t) \neq i(t^-_1)} g_{i(t)}(S(t) \setminus \{i(t^-_1)\}).
    \end{align*}
    Now, consider the second agent that leaves after the peak, calling this deviation $t^-_2$. Again, by Proposition~\ref{cl:reward-reduction}, we know that, for all $t < t^-_2$ in $P_r$, after the removal of $i(t^-_1), i(t^-_2)$,
    \begin{align*}
        &\sum_{t \in \ID(P_r)} g_{i(t)}(S(t) \setminus i(t^-_1)) - \ell_{i(t^-_2)}(S(t^-_2)) \\
        = &\sum_{t \in \ID(P_r)} g_{i(t)}(S(t) \setminus \{i(t^-_1), i(t^-_2)\}) - \ell_{i(t^-_2)}\left(S(t^-_2) \setminus \bigcup_{t^\prime \in \ID(P_r)} \{i(t^\prime)\}\right),
    \end{align*}
    unless there exists a $t \in \ID(P_r)$ such that $i(t) = i(t^-_2)$ (i.e., $i(t^-_2)$ previously caused an incrementing deviation within $P_r$). In that case, the argument is identical to that of agent $i(t^-_1)$. The claim follows by induction.
\end{proof}

We finally proceed with the proof of Lemma~\ref{lem:nocycle}.

\begin{proof}[Proof of Lemma~\ref{lem:nocycle}]
    Fix $\bw \in \R^n$. For the sake of contradiction, assume that there is a best-response sequence forming a cycle. 
    First, by Claim~\ref{cl:cost-cycle}, we have that $C := \sum_{t \in \ID} c_{i(t)} = \sum_{t \in \DD} c_{i(t)}$.
    Hence,
    \[
        \sum_{t \in \ID} g_{i(t)}(S(t)) > C > \sum_{t \in \DD} \ell_{i(t)}(S(t)).
    \]
    In turn, this implies that 
    \begin{align*}
        \sum_{t \in \ID} g_{i(t)}(S(t)) - \sum_{t \in \DD} \ell_{i(t)}(S(t)) > 0.
    \end{align*}
    In the remainder of the proof, we show that, in fact, $\sum_{t \in \ID} g_{i(t)}(S(t)) - \sum_{t \in \DD} \ell_{i(t)}(S(t)) = 0$, yielding a contradiction. We do so by considering an arbitrary agent that joins (causes an incrementing deviation) in peak walk $P_a$ at time $t^+_a$ and leaves (causes a decrementing deviation) in peak walk $P_b$ at a later time $t^-_b$. Note that $i(t^+_a)=i(t^-_b)=:i^\star$ by definition. Given that we lie on a cycle, we know that, to every incrementing deviation of $i^\star$, there corresponds a later decrementing deviation of his (before any other possible incrementing deviation). This means that agent $i^\star$ causes an incrementing deviation at $t^+_a$, a decrementing one at $t^-_b$, and no other deviation in between. 
    
    First, let us rewrite the left hand side of the earlier inequality as a sum across peak walks: 
    \begin{align}\label{eq:rews-cost-cycle}
        \sum_{r \in [m]} \left(\sum_{t \in \ID(P_r)} g_{i(t)}(S(t)) - \sum_{t \in \DD(P_r)} \ell_{i(t)}(S(t))\right) > 0.
    \end{align}
    By Claim~\ref{cl:peak-walk}, we have that, for all $r \in [m]$,
    \begin{align*}
        &\sum_{t \in \ID(P_r)} g_{i(t)}(S(t)) - \sum_{t \in \DD(P_r)} \ell_{i(t)}(S(t)) \\
        = &\sum_{t \in \ID^\star(P_r)} g_{i(t)}\left(S(t) \setminus \bigcup_{t^\prime \in \DD(P_r)} \{i(t^\prime)\}\right) - \sum_{t \in \DD^\star(P_r)} \ell_{i(t)}\left(S(t) \setminus \bigcup_{t^\prime \in \ID(P_r)} \{i(t^\prime)\}\right).
     \end{align*}
    Hence, we do not need to take into account agents that join and leave within the same peak walk. We refer to these sums over $\ID^\star(P_r)$ and $\DD^\star(P_r)$ as \emph{irreducible-in-$P_r$}. Considering just irreducible-in-$P_r$ sums is not only without loss (as per Claim~\ref{cl:peak-walk}), but it also means that we can assume that $P_a \neq P_b$, for the agent $i^\star$ we are considering.
    
    We now prove that
    \begin{align}\label{eq:agent-cancel}
        g_{i(t^+_a)}\left(S(t^+_a) \setminus \bigcup_{r \in \{a, \ldots, b\}}\bigcup_{t^\prime \in \DD\left( P_r\right): t^\prime < t^-_b} \{i(t^\prime)\}\right) - \ell_{i(t^-_b)}\left(S(t^-_b) \setminus \bigcup_{r \in \{a, \ldots, b\}}\bigcup_{t^\prime \in \ID\left(P_r\right): t^\prime > t^+_a} \{i(t^\prime)\}\right) = 0.
    \end{align}
    Recall that $i(t^+_a)=i(t^-_b)=i^\star$ is taken arbitrarily, and, hence, the conclusion in~\eqref{eq:agent-cancel} holds for all agents in the cycle. Therefore,
    \[
        0 = \sum_{r \in [m]} \left(\sum_{t \in \ID(P_r)} g_{i(t)}(S(t)) - \sum_{t \in \DD(P_r)} \ell_{i(t)}(S(t))\right) > 0.
    \]
    a contradiction, and the lemma follows. 
    
    Thus, we need to show that~\eqref{eq:agent-cancel} holds. To do so, consider two consecutive walks $P_r, P_{r+1}$ with $a \leq r, r+1 \leq b$ in their respective irreducible-in-$P_r$ and irreducible-in-$P_{r+1}$ form. Let $t^+_1(P_r)$ be the time corresponding to the last agent joining in peak walk $P_r$ and let $t^-_1(P_{r+1})$ be the time corresponding to first agent leaving in peak walk $P_{r+1}$. In their irreducible forms, we have that their reward terms have become 
    \[
       g_{i(t^+_1(P_r))}\left(S(t^+_1(P_r)) \setminus \bigcup_{t \in \DD(P_r)} \{i(t)\}\right) ~\text{and}~ \ell_{i(t^-_1(P_{r+1}))}\left(S(t^-_1(P_{r+1})) \setminus \bigcup_{t \in \ID(P_{r+1})} \{i(t)\}\right).
    \]
    The first appears with a plus sign and the second with a minus sign, since one is joining and the other leaving. 
    
    We observe that $S(t^+_1(P_r)) \setminus \bigcup_{t \in \DD(P_r)} i(t)$ is the set of agents that joined before $i(t^+_1(P_r))$ did together $i(t^+_1(P_r))$ himself, removing all agents that left by the end of peak walk $P_r$. Similarly, $S(t^-_1(P_{r+1})) \setminus \bigcup_{t \in \ID(P_{r+1})} i(t)$ is the set of agents leaving after $i(t^-_1(P_{r+1}))$ does together with $i(t^-_1(P_{r+1}))$ himself, removing all agents that joined by the end of peak walk $P_{r+1}$. Therefore,
    \[
        S(t^+_1(P_r)) \setminus \bigcup_{t \in \DD(P_r)} \{i(t)\} = S(t^-_1(P_{r+1})) \setminus \bigcup_{t \in \ID(P_{r+1})} \{i(t)\},
    \]
    as the set on the right hand side is the set before any agent joined it in peak walk $P_{r+1}$, which is exactly the same as set $S(t^+_1(P_r))$ deprived of all agents that left before the start of peak walk $P_{r+1}$. Thus, the expected payments of $i(t^+_1(P_r))$ and $i(t^-_1(P_{r+1}))$ further reduce to
    \[
       g_{i(t^+_1(P_r))}\left(S(t^+_1(P_r)) \setminus \left(\bigcup_{t \in \DD(P_r)} \{i(t) \cup i(t^-_1(P_{r+1}))\}\right)\right),
    \]
    and
    \[
        \ell_{i(t^-_1(P_{r+1}))}\left(S(t^-_1(P_{r+1})) \setminus \left(\bigcup_{t \in \ID(P_{r+1})} \{i(t) \cup i(t^+_1(P_r))\}\right)\right).
    \]
    If it so happens that $i(t^+_1(P_r)) = i(t^-_1(P_{r+1}))$, then the two reduced reward expressions cancel. Otherwise, let us take agent $i(t^-_2(P_{r+1}))$, the second agent that leaves after the peak in $P_{r+1}$. Note that, in the irreducible-in-$P_{r+1}$ form, we have
    \begin{align*}
        S(t^-_2(P_{r+1})) \setminus \bigcup_{t \in \ID(P_{r+1})} \{i(t)\} &= S(t^-_1(P_{r+1})) \setminus \left(\bigcup_{t \in \ID(P_{r+1})} \{i(t) \cup i(t^-_1(P_{r+1}))\}\right)\\
        &= S(t^+_1(P_r)) \setminus \left(\bigcup_{t \in \DD(P_r)} \{i(t) \cup i(t^-_1(P_{r+1}))\}\right),
    \end{align*}
    because the former set is the set of agents that had joined before the start of peak walk $P_{r+1}$ and so is the latter. Again, if it so happens that $i(t^+_1(P_r)) = i(t^-_2(P_{r+1}))$, then the corresponding expected payments cancel out. Iterating backwards over the deviations in the peak walks $P_r, P_{r+1}$, we conclude that that the expected payments in the \emph{irreducible-in-$(P_r \cup P_{r+1})$} form read
    \begin{align*}
        g_{i(t)}\left(S(t) \setminus \bigcup_{t^\prime \in \DD(P_r, P_{r+1})} \{i(t^\prime)\}\right) ~\forall t \in \ID(P_r) \quad \text{and} \quad \ell_{i(t)}\left(S(t) \setminus \bigcup_{t^\prime \in \ID(P_r, P_{r+1})} \{i(t^\prime)\}\right) ~\forall t \in \DD(P_{r+1}).
    \end{align*}
    Similarly, in their \emph{irreducible-in-$(P_{r-1} \cup P_r)$} and in their \emph{irreducible-in-$(P_{r+1} \cup P_{r+2})$} forms, respectively, we obtain
    \begin{align*}
        &\ell_{i(t)}\left(S(t) \setminus \bigcup_{t^\prime \in \ID(P_{r-1}, P_r)} \{i(t^\prime)\}\right) ~&\forall t \in \DD(P_r) \quad \text{and} \quad g_{i(t)}\left(S(t) \setminus \bigcup_{t^\prime \in \DD(P_{r+1}, P_{r+2})} \{i(t^\prime)\}\right) ~&\forall t \in \ID(P_{r+1}).
    \end{align*}
    Hence, by inducting backwards over the peak walks from $b$ to $a$, we have that the entering and leaving expected payments of agent $i^\star=i(t^+_a)=i(t^-_b)$ in their \emph{irreducible-in-$(\cup_{\ell \in\{a, \ldots, b\}} P_r)$} form read respectively
    \begin{align*}
        g_{i(t^+_a)}\left(S(t^+_a) \setminus \bigcup_{t^\prime \in \DD(P_a,\ldots, P_b): t^\prime < t^-_b} \{i(t^\prime)\}\right) &= g_{i^\star}\left(S(t^+_a) \setminus \bigcup_{\ell \in \{a, \ldots, b\}}\bigcup_{t^\prime \in \DD\left( P_r\right): t^\prime < t^-_b} \{i(t^\prime)\}\right)\\
        \ell_{i(t^-_b)}\left(S(t^-_b) \setminus \bigcup_{t^\prime \in \ID(P_a,\ldots, P_b): t^\prime > t^+_a} \{i(t^\prime)\}\right) &= \ell_{i^\star}\left(S(t^-_b) \setminus \bigcup_{\ell \in \{a, \ldots, b\}}\bigcup_{t^\prime \in \ID\left(P_r\right): t^\prime > t^+_a} \{i(t^\prime)\}\right),
    \end{align*}
    and also that 
    \[
        S(t^+_a) \setminus \bigcup_{t^\prime \in \DD(P_a,\ldots, P_b): t^\prime < t^-_b} \{i(t^\prime)\} = S(t^-_b) \setminus \bigcup_{t^\prime \in \ID(P_a,\ldots, P_b): t^\prime > t^+_a} \{i(t^\prime)\}.
    \]
    Hence, the reward expressions for agent $i^\star$ cancel. The lemma follows.
\end{proof}

\section{Omitted Content From Section~\ref{sec:limited_liabbility}: Anonymous Contracts With Limited Liability}\label{app:proofsfromfour}

\subsection{Positive Results for Bounded Instances}\label{app:pos_res_bounded}

While the global ratio $Q$ may be unbounded, we note in the following proof that if a subset of agents has non-negligible $1/n$ margins, the ratio of success probabilities is naturally bounded by $n^2$. 
If moreover this subset of agents contributes the majority of social welfare, we obtain a $\Theta(\log n)$ approximation by applying Theorem \ref{thm:Qbounds} to this restricted instance, where the effective $Q \le n^2$.

\cornondegenerate*
\begin{proof}
    We first give the upper bound on $\SW$.
    The condition of $i \in T$ is equivalent to
    \begin{equation}\label{eq:assumption_1-1/n}
    \frac{(q_i - c_i)}{q_i} > \frac{1}{n}
    \Longleftrightarrow n(q_i - c_i) > q_i.
    \end{equation}
    Let \[S \subseteq [n] = \{i \in [n] \ | \ \max_{j\in [n]} (q_j-c_j) \leq n(q_i-c_i)\}.\]
    Then $\sum_{i \in S} (q_i - c_i) \geq \sum_{i\in [n]\setminus S}(q_i - c_i)$, implying 
    \[
    2\cdot \sum_{i \in S} (q_i - c_i) \geq \sum_{i\in [n]}(q_i - c_i) = \SW.
    \]
    By the condition on $T$ we get
    \[
    \sum_{i \in S} (q_i - c_i) + \sum_{i \in T} (q_i - c_i) \geq \left(1+ \frac{\delta}{2}\right)\SW.
    \]
    In particular, this implies $S\cap T \neq \varnothing$ and in fact
    $\sum_{i \in S \cap T}q_i-c_i \geq \delta/2 \cdot \SW$.
    
    For any $i,j \in S \cap T$, by definition of $S$ and property (\ref{eq:assumption_1-1/n}), we have
    \[
    n^2(q_j-c_j)\geq n(q_i-c_i) > q_i
    \]
    In particular, $n^2(q_j-c_j) > q_i \Rightarrow n^2> q_i/q_j$. Thus by Proposition \ref{prop:equal_scaled_Q}, up to a factor $2$, it suffices to bound
    \[
    \sum_{\ell \in S} \frac{q_\ell}{\sum_{i \in [\ell]} q_i}
    \leq \sum_{\ell \in S} \frac{q_\ell}{\sum_{i \in S, i\leq \ell} q_i}
    \]
    where for all $i,j \in S\cap T$, $q_i/q_j \leq n^2$.
    Here, we know it may be that $|S|<n$, however, we do not make use of this assumption. Hence the problem has been reduced to simply bounding $h(q)$ where we assume $q_i/q_j \leq n^2$ $\forall i,j \in [n]$. Thus applying Theorem \ref{thm:Qbounds} on $S\cap T$, knowing that $Q = n^2$, yields the result. The asymptotically matching lower bound follows from an instance we have already analyzed, in Lemma \ref{lem:last_prem_lower_bound}.
\end{proof}

\propconstantfactorbounds*
\begin{proof}
By Proposition \ref{prop:equal}, the optimal uniform anonymous contract utility is given by $\EQ(\mathcal{I}) = \max_{k \in [n]} (1 - \frac{c_k}{q_k}) \sum_{j=1}^k q_j$, assuming indices are sorted by non-decreasing cost-to-probability ratio.

Consider the specific candidate contract that incentivizes the full set of agents $S=[n]$. To satisfy the participation constraints for all agents, the uniform payment $w$ must satisfy $w \geq \max_{i \in [n]} \frac{c_i}{q_i}$. By assumption, $\max_{i \in [n]} \frac{c_i}{q_i} \leq \alpha$. The principal's utility for this specific contract provides a lower bound for the optimal $\EQ(\cI)$:
\begin{align*}
\EQ(\cI) &\geq \left(1 - \max_{i \in [n]} \frac{c_i}{q_i}\right) \sum_{i \in [n]} q_i \geq (1 - \alpha) \sum_{i \in [n]} q_i.
\end{align*}
Since $\sum_{i \in [n]} q_i \geq \SW(\cI)$, we get 
\[
\EQ(\cI) \geq (1 - \alpha) \SW(\cI). \qedhere
\]
\end{proof}

\begin{lemma}\label{lem:lemmarholowerbound}
Let $Q > 1$ and $n \ge 2$. Define the term $X = \frac{Qn}{2\log(Qn)}$. Then the following inequality holds:
    \begin{equation}
        X - X^{\frac{n-2}{n-1}} < Q.
    \end{equation}
\end{lemma}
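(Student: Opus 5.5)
The idea is to factor the left-hand side, bound it by an elementary exponential inequality, and then compare with $Q$ directly. Throughout, $\log$ is the natural logarithm, as elsewhere in the proof of \Cref{claim:approx_sum_characterization}. Since $Q>1$ and $n\ge 2$, we have $Qn>2$, so $\log(Qn)>\log 2>0$ and $X$ is well defined and positive.

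Write
\[
X - X^{\frac{n-2}{n-1}} = X\left(1 - X^{-\frac{1}{n-1}}\right).
\]
First handle the trivial case $X\le 1$. Then $X^{-1/(n-1)}\ge 1$, so the left-hand side is at most $0<Q$.

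Now assume $X>1$ and set $y=\frac{\log X}{n-1}>0$. The inequality $1-e^{-y}\le y$ gives
\[
X - X^{\frac{n-2}{n-1}} \le \frac{X\log X}{n-1} = \frac{Qn\,\log X}{2(n-1)\log(Qn)}.
\]
It therefore suffices to show $n\log X < 2(n-1)\log(Qn)$.

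Since $\log X = \log(Qn) - \log(2\log(Qn))$, we need the sign of $\log(2\log(Qn))$. Because $Qn>2$, we get $2\log(Qn)>2\log 2>1$, so $\log(2\log(Qn))>0$. Hence $0<\log X<\log(Qn)$ strictly.

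Combining this with $n\le 2(n-1)$ for $n\ge 2$ yields
\[
n\log X < n\log(Qn) \le 2(n-1)\log(Qn),
\]
which is the required inequality.

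The only delicate point is strictness when $n=2$, where $n=2(n-1)$. There the strict inequality comes entirely from $\log X<\log(Qn)$, i.e.\ from $2\log(Qn)>1$. This holds because $Qn>2>e^{1/2}$.
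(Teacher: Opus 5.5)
Your proof is correct and follows essentially the paper's argument: factor out $X$, apply $1-e^{-y}\le y$ with $y=\frac{\log X}{n-1}$, and combine $\log(2\log(Qn))>0$ with $n\le 2(n-1)$. The separate case $X\le 1$ is unnecessary, since $1-e^{-y}\le y$ holds for all real $y$. Your observation that for $n=2$ strictness must come from $\log X<\log(Qn)$ is more careful than the paper's final chain, whose last step $\frac{Qn}{2(n-1)}<Q$ is actually an equality when $n=2$.
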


\begin{proof}
Let LHS (left hand side) denote the expression $X - X^{\frac{n-2}{n-1}}$. We begin by factoring out $X$ and rewriting the exponent:
\begin{align*}
    \text{LHS} &= X - X^{1 - \frac{1}{n-1}} \\
    &= X \left( 1 - X^{-\frac{1}{n-1}} \right) \\
    &= X \left( 1 - \exp\left( -\frac{\log X}{n-1} \right) \right).
\end{align*}
We apply the elementary inequality $1 - e^{-y} \le y$, which holds for all $y \in \mathbb{R}$. Setting $y = \frac{\log X}{n-1}$, we obtain:
\begin{equation*}
    \text{LHS} \le X \left( \frac{\log X}{n-1} \right).
\end{equation*}
Substituting the definition $X = \frac{Qn}{2\log(Qn)}$ back into the inequality yields:
\begin{equation*}
    \text{LHS} \le \frac{Qn}{2\log(Qn)} \cdot \frac{1}{n-1} \cdot \log\left( \frac{Qn}{2\log(Qn)} \right).
\end{equation*}
Expanding the logarithm term $\log\left( \frac{Qn}{2\log(Qn)} \right) = \log(Qn) - \log(2\log(Qn))$, we have:
\begin{align*}
    \text{LHS} &\le \frac{Qn}{2(n-1)\log(Qn)} \left[ \log(Qn) - \log(2\log(Qn)) \right] \\
    &= \frac{Qn}{2(n-1)} \left( 1 - \frac{\log(2\log(Qn))}{\log(Qn)} \right).
\end{align*}
Since $Q > 1$ and $n \ge 2$, we have $Qn > 2$. Consequently, $\log(Qn) > \log 2 \approx 0.69$. It follows that $2\log(Qn) > 1$, implying $\log(2\log(Qn)) > 0$. Thus, the factor in the parentheses is strictly less than 1:
\begin{equation*}
    \text{LHS} < \frac{Qn}{2(n-1)}<Q. \qedhere
\end{equation*}
\end{proof}

\subsection{Proof of Theorem 4.3}

\propequalscaledQ*
\begin{proof}
    Let us recall from Proposition~\ref{prop:equal} that the optimal 
    utility from a uniform anonymous contract is given by index
    \[
        k = \arg \max_{\ell \in [n]} \left(1 - \frac{c_\ell}{q_\ell}\right) \cdot \sum_{i \in [\ell]} q_i. 
    \]
    That is, 
    \[
        \EQ = \left(1 - \frac{c_k}{q_k}\right) \cdot \sum_{i \in [k]} q_i \geq \left(1 - \frac{c_\ell}{q_\ell}\right) \cdot \sum_{i \in [\ell]} q_i,
    \]
    for all $\ell \in [n]$. Rearranging,
    \[
        1- \frac{c_\ell}{q_\ell} \leq \left(1 - \frac{c_k}{q_k}\right) \cdot \frac{\sum_{i \in [k]} q_i}{\sum_{i \in [\ell]} q_i}.
    \]
    We next bound the welfare in terms of \EQ:
    \[
        \sum_{\ell \in S} q_\ell \left(1-\frac{c_\ell}{q_\ell}\right) \leq \sum_{\ell \in S} q_\ell \left(1-\frac{c_k}{q_k} \right)\cdot \frac{\sum_{i \in [k]} q_i}{\sum_{i \in [\ell]} q_i} = \EQ \cdot \sum_{\ell \in S} \frac{q_\ell}{\sum_{i \in [\ell]} q_i}.\qedhere
    \]
\end{proof}

\lemlowerhbound*
\begin{proof}
    We proceed inductively. First, we set $c_1$ to be such that $q_1-c_1 = Z$. Assume inductively that $c_1,\ldots,c_\ell$ have been well defined and satisfy all conclusions except (\ref{eq:EQ_tight_bound}). We next define $c_{\ell+1}$ such that
    \[
    \left(1-\frac{c_\ell}{q_\ell}\right)\sum_{i\leq \ell}q_i   = \left(1-\frac{c_{\ell+1}}{q_{\ell+1}}\right)\sum_{i\leq \ell+1}q_i. 
    \]
    First, such value $c_{\ell+1}$ exists since the function 
   \[
 f(x) = \left(1-\frac{x}{q_{\ell+1}}\right)\sum_{i\leq \ell+1}q_i 
   \]    
   is continuous monotone decreasing from $0$ to $q_{\ell+1}$ with $f(0) =\sum_{i\leq \ell+1}q_i$ and $f(q_{\ell+1})=0$.
    Then clearly $\frac{c_\ell}{q_\ell} \leq \frac{c_{\ell+1}}{q_{\ell+1}}$. By definition, $\EQ$ is also achieved by subset $\{1,\ldots,\ell+1\}$. Finally, the fact that (\ref{eq:EQ_tight_bound}) holds follows simply from the fact that for the $c_1,\ldots,c_n$ as just defined, all inequalities in Proposition \ref{prop:equal_scaled_Q} are tight. 
\end{proof}

\subsection{Additional Results for Bounded Instances}\label{app:costs}

Similar to Section~\ref{sec:limited_liabbility}, we also get an upper bound when the ratio of entries of $c$ is bounded. 

\begin{theorem}\label{thm:Cbound}
Given any $0<a<b\leq 1$, let $\mathcal{G}(a,b) = \left\{(\bq,\bc) \ : \ c_i \in [a,b], \ 0 \leq c_i \leq q_i \ \forall i \in [n]\right\}$. Denote by $C := \frac{b}{a}$. Then we get
\[
 \sup_{\cI \in \mathcal{G}(a,b)} \frac{\SW(\cI)}{\EQ(\cI)} \leq \min\left\{n,1+\log(Cn)\right\}.
\]
\end{theorem}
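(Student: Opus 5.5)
The plan is to reuse the machinery behind \Cref{thm:Qbounds}, replacing the spread of the success probabilities by a spread controlled by the costs. The bound $n$ is immediate: for every $\ell$, $\EQ(\cI) \geq (1 - c_\ell/q_\ell)\sum_{i\in[\ell]} q_i \geq q_\ell - c_\ell$ by \Cref{prop:equal}, so $\SW(\cI) \leq n\cdot \EQ(\cI)$. For the logarithmic bound, sort agents by non-decreasing $c_i/q_i$ and write $F_\ell = \sum_{i\in[\ell]} q_i$. Applying \Cref{prop:equal_scaled_Q} with $S = [n]$ gives
\[
\SW(\cI) \leq \EQ(\cI)\cdot \sum_{\ell\in[n]} \frac{q_\ell}{F_\ell} = \EQ(\cI)\cdot h(\bq),
\]
and the computation in \Cref{claim:approx_sum_characterization} (telescoping via \eqref{eq:approx_factor_equality} plus AM--GM) shows $h(\bq) \leq 1+\log(F_n/F_1)$. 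The problem is that $F_n/F_1$ is not bounded by $Cn$, since only the costs are assumed to lie in $[a,b]$.

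The next step is to use the costs to control the relevant ratio. The lower end is fine: $F_1 = q_1 \geq c_1 \geq a$. At the upper end, agents with very large $q_\ell$ relative to $c_\ell$ have small ratio $c_\ell/q_\ell$, so they come first in the ordering and are cheap to incentivize. I would therefore not apply \Cref{prop:equal_scaled_Q} verbatim. Instead I would use the per-agent inequality from its proof, $s_\ell := q_\ell - c_\ell \leq \EQ\cdot q_\ell/F_\ell$, together with $s_\ell \leq q_\ell$. The aim is to replace $q_\ell$ by an effective weight $\tilde q_\ell := \min\{q_\ell,\ \text{something of order } c_\ell\}$ for which $\tilde F_n/\tilde F_1 \leq Cn$ while $s_\ell/\EQ \leq \tilde q_\ell/\tilde F_\ell$ still holds termwise. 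The natural candidate is $\tilde q_\ell = c_\ell$ for agents with $c_\ell/q_\ell$ large, combined with a separate treatment of the prefix of agents with $c_\ell/q_\ell \leq 1/2$. That prefix contributes welfare at most $\sum_{i \leq k} q_i \leq 2\EQ$ by the argument of \Cref{prop: constant factor bounds}. The remaining agents satisfy $q_\ell \leq c_\ell/(1 - c_\ell/q_\ell) \leq 2b$ and $q_\ell \geq a$, so their probability spread is at most $2C$, and \Cref{claim:approx_sum_characterization} applies to them.

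This split only yields $O(1+\log(Cn))$ with a worse additive constant, and the main obstacle is recovering exactly $1+\log(Cn)$. To get the sharp constant, I would try to run the telescoping identity \eqref{eq:approx_factor_equality} directly on a single sequence: the partial sums $F_\ell$ for the cheap prefix, glued at the split point $k$ to partial sums of costs for the rest. The goal is to show the resulting product of consecutive ratios is at most $Cn$, using $F_k \geq a$ at the start and total cost mass at most $nb$ at the end. The point to check is that gluing does not lose the leading $1$; this uses monotonicity of $c_\ell/q_\ell$, which makes $(1-c_\ell/q_\ell)F_\ell$ dominate $s_\ell F_\ell / q_\ell$ in the right direction. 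If that fails, I would settle for the bound $2+\log(2Cn)$ from the split argument and check whether the statement's constant can be matched.
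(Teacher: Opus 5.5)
\textbf{The proposal does not prove the stated bound.} The bound $\SW \le n\cdot\EQ$ is fine. The split at $c_\ell/q_\ell \le 1/2$ is also valid, but it only gives something like $2+\log(2Cn)$, or $3+\log(2Cn)$ if you apply \Cref{claim:approx_sum_characterization} to the suffix in isolation. The ``gluing'' step meant to recover $1+\log(Cn)$ is only a hope, and you explicitly fall back to the weaker constant if it fails. There is also a slip in your chain $q_\ell \le c_\ell/(1-c_\ell/q_\ell)\le 2b$. When $c_\ell/q_\ell>1/2$, the middle expression exceeds $2c_\ell$ and can be arbitrarily large, so the second inequality is false. The conclusion $q_\ell<2c_\ell\le 2b$ does hold, directly.

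\textbf{The missing idea: your effective weights $\tilde q_\ell = c_\ell$ work for every agent.} No split or gluing is needed. Let $G_\ell := \sum_{i\in[\ell]} c_i$, with agents sorted by non-decreasing $c_i/q_i$. The ordering gives $G_\ell = \sum_{i\le \ell}(c_i/q_i)\,q_i \le (c_\ell/q_\ell)\,F_\ell$. Combined with $\EQ(\cI)\ge (1-c_\ell/q_\ell)F_\ell$ from \Cref{prop:equal}, this yields
\[
(q_\ell - c_\ell)\,\frac{G_\ell}{c_\ell} \;=\; \Bigl(1-\frac{c_\ell}{q_\ell}\Bigr)\frac{q_\ell}{c_\ell}\,G_\ell \;\le\; \Bigl(1-\frac{c_\ell}{q_\ell}\Bigr)F_\ell \;\le\; \EQ(\cI).
\]
So $q_\ell - c_\ell \le \EQ(\cI)\cdot c_\ell/G_\ell$ for every $\ell$. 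Equivalently, $q_\ell/F_\ell \le c_\ell/G_\ell$ termwise; this is the content of \Cref{prop:equal_scaled_C}.

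Now run the telescoping/AM--GM computation from \Cref{claim:approx_sum_characterization} on the sequence $c_1,\dots,c_n$. That computation needs only positivity, not monotonicity, and gives $\sum_{\ell} c_\ell/G_\ell \le 1+\log(G_n/G_1)$. Since $G_1=c_1\ge a$ and $G_n\le nb$, you get $G_n/G_1\le Cn$, hence exactly $\SW(\cI)\le(1+\log(Cn))\,\EQ(\cI)$. Your worry that $F_n/F_1$ is uncontrolled is resolved by comparing $q$-weights to $c$-weights termwise, not by isolating cheap agents.
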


\begin{proposition}\label{prop:equal_scaled_C}
    Let $\cI = (\bq,\bc)$ be an instance of $n$ agents, with the ordering of indices such that the sequence $(c_i/q_i)$ is non-decreasing. Then for any $S \subseteq [n]$, 
    \[
        \sum_{\ell \in S} (q_\ell - c_\ell) \leq   \left(\sum_{\ell \in S} \frac{c_\ell}{\sum_{i \in [\ell]} c_i} \right) \cdot \EQ.
    \]
\end{proposition}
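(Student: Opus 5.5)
We mirror the proof of Proposition~\ref{prop:equal_scaled_Q}, replacing probability mass by cost mass. By Proposition~\ref{prop:equal}, with indices sorted so that $c_i/q_i$ is non-decreasing,
\[
\EQ = \max_{k\in[n]} \Bigl(1-\frac{c_k}{q_k}\Bigr)\sum_{i\in[k]} q_i,
\]
and this gives the lower bound $\EQ \ge (1-c_\ell/q_\ell)\sum_{i\in[\ell]} q_i$ for every $\ell\in[n]$. The plan is to lower bound $\sum_{i\in[\ell]} q_i$ in terms of the costs. Because the ratios are sorted, for every $i\le \ell$ we have $q_i \ge c_i \cdot q_\ell/c_\ell$. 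This needs $c_\ell>0$; if $c_\ell=0$ then $c_i=0$ for all $i\le\ell$, and we treat that case separately below. Summing over $i\le\ell$ gives $\sum_{i\in[\ell]} q_i \ge (q_\ell/c_\ell)\sum_{i\in[\ell]} c_i$. Hence
\[
\EQ \ \ge\ \Bigl(1-\frac{c_\ell}{q_\ell}\Bigr)\frac{q_\ell}{c_\ell}\sum_{i\in[\ell]} c_i = \frac{q_\ell-c_\ell}{c_\ell}\sum_{i\in[\ell]} c_i,
\]
which rearranges to
\[
q_\ell - c_\ell \le \frac{c_\ell}{\sum_{i\in[\ell]} c_i}\cdot \EQ.
\]
Summing this inequality over $\ell\in S$ gives the claim.

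The only step needing care is the degenerate case $c_\ell=0$, where the expression $c_\ell/\sum_{i\in[\ell]} c_i$ becomes $0/0$ or $0$. In the setting of Theorem~\ref{thm:Cbound} all costs satisfy $c_i\ge a>0$, so this case never arises there. For the proposition in general, I would either state it under the assumption $c_i>0$ or adopt a convention for zero-cost agents. Note that such agents have $c_\ell/q_\ell=0$ and so occupy the first positions in the order. Apart from this, the argument is a direct line-by-line analogue of the earlier proof, with the single substitution $q_i \ge c_i\, q_\ell/c_\ell$ as the key step.
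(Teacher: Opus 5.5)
Your argument is correct and is essentially the paper's. The paper first bounds each $q_\ell-c_\ell$ by $\frac{c_\ell}{\sum_{i\in[\ell]}c_i}\cdot\max_{t}\bigl(\frac{q_t}{c_t}-1\bigr)\sum_{i\in[t]}c_i$, then shows this maximum is at most $\EQ$ using the same sorted-ratio inequality $\sum_{i\in[t]}c_i\big/\sum_{i\in[t]}q_i\le c_t/q_t$ at the maximizer; you instead apply that inequality directly for every $\ell$, which is slightly more streamlined.

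One small step is missing. Multiplying by $1-c_\ell/q_\ell$ requires $c_\ell\le q_\ell$, which a general instance need not satisfy (it does hold on $\mathcal{G}(a,b)$). For an agent with $c_\ell>q_\ell$, the per-$\ell$ bound still holds trivially, since its left side is negative and $\EQ\ge 0$.
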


\begin{proof}
Let $t = \argmax_{\ell \in [n]}\left(\frac{q_{\ell}}{c_{\ell}} - 1 \right) \cdot \sum_{i \in [\ell]} c_i$. Rewriting we get
\begin{align*}
     \left(\frac{q_{t}}{c_{t}} - 1 \right) \cdot \sum_{i \in [t]} c_i \geq \left(\frac{q_{\ell}}{c_{\ell}} - 1 \right) \cdot \sum_{i \in [\ell]} c_i \Longleftrightarrow  \left(\frac{q_{t}}{c_{t}} - 1 \right) \cdot \left(\frac{\sum_{i \in [t]} c_i}{\sum_{i \in [\ell]} c_i} \right) \geq \left(\frac{q_{\ell}}{c_{\ell}} - 1 \right).
\end{align*}

Thus, we get
\begin{align*}\label{eq:SW-costs}
   \sum_{\ell \in S} c_\ell \left(\frac{q_\ell}{c_\ell} -1 \right)
   \leq  \sum_{\ell \in S} c_\ell \left(\frac{q_{t}}{c_{t}} - 1 \right) \cdot \left(\frac{\sum_{i \in [t]} c_i}{\sum_{i \in [\ell]} c_i} \right) =  \left(\frac{q_{t}}{c_{t}} - 1 \right) \cdot \left(\sum_{i \in [t]} c_i\right) \cdot \sum_{\ell \in S} \frac{c_\ell}{\sum_{i \in [\ell]} c_i}.
\end{align*}

Let us remark the following: $\left(\frac{q_{t}}{c_{t}} - 1 \right) / \left(1-\frac{c_t}{q_t}\right)
 = \frac{q_t}{c_t}$. Moreover, since indices are ordered such that $\frac{c_1}{q_1} \leq \ldots \leq \frac{c_n}{q_n}$, we have that $\left(\sum_{i \in [t]} c_i\right) / \left(\sum_{i \in [t]} q_i\right) \leq \frac{c_t}{q_t}$. Hence, we get that 
\[
\left(\frac{q_{t}}{c_{t}} - 1 \right)\left(\sum_{i \in [t]} c_i\right)
\leq  \left(1-\frac{c_t}{q_t}\right)\left(\sum_{i \in [t]} q_i\right) \leq \EQ. \qedhere
\]
\end{proof}

\begin{proof}[Proof of Theorem \ref{thm:Cbound}]
Given $0<a,b\leq 1$, and some $c=(c_1,\ldots,c_n)$ with $c_i \in [a,b] \ \forall i\in [n]$. Proposition \ref{prop:equal_scaled_C} implies that for any probabilities $q = (q_1,\ldots,q_n)$, we have $\frac{\SW((q,c)))}{\EQ((q,c))} \leq h(c)$. Then the result follows from \Cref{claim:approx_sum_characterization}.
\end{proof}

\section{Omitted Content From Section~\ref{sec:Positive Result Without Limited Liability}: Anonymous Contracts Without Limited Liability}\label{app:proofsfromfive}

\prefixequilibrium*
\begin{proof}
First, note that no agent $i \notin S$ benefits from exerting effort. If they do, there is a non-zero probability that $k^\star + 1$ agents succeed, 
and the expected utility for agent $i$ becomes: 
\begin{align*}
    u_i(S \cup \{i\}, \bw) \leq Q_{k^\star+1}(S \cup \{i\}) \cdot w_{k^\star+1} = q_i \cdot \prod_{j \in S} q_j \cdot (-\infty) = -\infty.
\end{align*}

Now, consider any agent $i \in S$. Their expected utility from exerting effort is:
\begin{align*}
    u_i(S, \bw) &= q_i \cdot \sum_{j \in [k^\star]} Q_{j-1}(S_{-i}) \cdot w_j - c_i = 
    q_i \cdot 1 - \left( \prod_{j \in [k^\star]} q_j \right) \cdot \left( \frac{s_{k^\star}}{\prod_{j \in [k^\star]} q_j} \right) - c_i \\
    &= q_i - s_{k^\star} - c_i = s_i - s_{k^\star} \geq 0,
\end{align*}
where the inequality follows from the ordering $s_1 \geq \ldots \geq s_{k^\star} \geq \ldots \geq s_n$.
Since the utility of not exerting effort is $0$, each agent in $S$ is incentivized to exert effort, confirming that $S$ is a PNE.
\end{proof}

\section{Additional Results for Uniform Anonymous Contracts}\label{sec:EQSpecialCases}

In this section, we construct two worst case instances. For the first one, we let all probabilities $q_i$ be equal. We show there exists a set of costs $\bc$ so that for $\cI = (\bq,\bc)$, we have $\EQ(\cI) = \OPT(\cI)$ and $\EQ(\cI) = \Theta(\log(n))$. That is, even though optimal payments $w$ happen to be non-negative, this lower bound matches the upper bound of Theorem \ref{thm:positive_without_ll}.

\subsection{Equal Success Probabilities}\label{sec:equal_prob}
We start with the case of equal success probabilities for all agents:

\lemlblog*

\begin{proof}
    Let $q$ be a success probability and $c_1 \leq \ldots \leq c_n$ costs to be fixed later. We have that, for any set of $k$ agents, the objective function (\ref{eq:principal-program1}) becomes 
    \[
        kq - \sum_{j \in [k]} j \cdot \binom{k}{j} q^j(1-q)^{k-j} \cdot w_j.
    \]
    Similarly, since costs are increasing in their index, the first set of constraints (\ref{eq:principal-program2}) becomes 
    \begin{align*}
        &  q_i \cdot \sum_{j \in [k]} Q_{j-1}(S_{-i}) \cdot w_j \geq c_i \\
       ~\Longleftrightarrow~  &q\sum_{j \in [k]} \binom{k-1}{j-1} q^{j-1}(1-q)^{k-j} \cdot {w_j} \geq c_k \\
        ~\Longleftrightarrow~ &\sum_{j \in [k]} j \cdot \binom{k}{j} q^j(1-q)^{k-j} \cdot w_j \geq kc_k.        
    \end{align*}
    This means we can upper-bound the principal's utility by $(q - c_k) \cdot k$. Note that this can be achieved by uniform anonymous contracts: Indeed, for all $k$
    \[
        \left(1 - \frac{c_k}{q}\right) \cdot kq = (q-c_k)\cdot k.
    \]
    Now, let us fix $q$ and $c_1=c$, and choose
    \[
        c_i = \left(1-\frac{1}{i}\right) \cdot q + \frac{c}{i}~\forall i \in [n].
    \]
    Then, $\OPT = q-c$ implying $\OPT = \EQ$. On the other hand, $\SW = (q-c) \cdot \sum_{i \in [n]} \frac{1}{i}$, and so
    \[
        \frac{\SW}{\OPT} = \sum_{i \in [n]} \frac{1}{i} \geq \Omega(\log n). \qedhere
    \] 
\end{proof}

\begin{proposition}\label{prop:equal prob. SW-OPT log(n)}
 Let $q_i=q$ $\forall i\in [n]$. Then the worst case instance $\cI = (\bq,\bc)$ over all $(c_i)_{i \in [n]}$ has the following asymptotic behavior in function of $n$:
\[
    \SW(\cI) = \Theta(\log(n)) \cdot \OPT(\cI).
\] 
Moreover, $\OPT(\cI) = \EQ(\cI)$.
\end{proposition}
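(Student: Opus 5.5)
The statement has two halves: a lower bound and an upper bound on the worst case over costs, plus the identity $\OPT(\cI)=\EQ(\cI)$. The lower bound and the identity for the extremal instance are exactly \Cref{lem:lb-log}. So the plan is to prove, for every cost vector when $q_i=q$ for all $i$, two things: that $\OPT(\cI)=\EQ(\cI)$, and that $\SW(\cI)\le H_n\cdot\EQ(\cI)$, where $H_n=\sum_{i\in[n]}1/i$.

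\textbf{Step 1 ($\OPT=\EQ$ in general).} Sort the costs so that $c_1\le\dots\le c_n$. With equal probabilities this is the same as sorting by $c_i/q$. Fix any anonymous contract $\bw\ge\bzero$ and an equilibrium $S$ of size $k$. By symmetry every agent in $S$ receives the same expected transfer $\tau=q\sum_{j}\binom{k-1}{j-1}q^{j-1}(1-q)^{k-j}w_j$. Then:
\begin{itemize}
\item Incentive compatibility forces $\tau\ge\max_{i\in S}c_i\ge c_k$, since any $k$ agents include one whose cost is at least $c_k$.
\item The principal's utility is therefore $kq-k\tau\le k(q-c_k)$, exactly as computed in the proof of \Cref{lem:lb-log}.
\item By \Cref{prop:equal}, the uniform contract $w=c_k/q$ incentivizes the prefix $[k]$ and earns $(1-c_k/q)\,kq=k(q-c_k)$.
\end{itemize}
Hence $\OPT(\cI)=\max_k k(q-c_k)=\EQ(\cI)$. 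One caveat: if $S$ is not a prefix, the bound only gets weaker for the contract, so no separate case is needed.

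\textbf{Step 2 (upper bound).} Apply \Cref{prop:equal_scaled_Q} with $S$ equal to the set of agents with $q>c_\ell$. With equal probabilities, $q_\ell/\sum_{i\in[\ell]}q_i=1/\ell$, which gives
\[
\SW(\cI)\le H_n\cdot\EQ(\cI).
\]
Alternatively, argue directly: for each $\ell$ with $q>c_\ell$, $\ell\,(q-c_\ell)\le\EQ(\cI)$, so $q-c_\ell\le\EQ(\cI)/\ell$, and summing over $\ell$ gives the same bound. Note that the welfare-maximizing set is itself a prefix here, which is consistent with this summation. Combined with Step 1, this yields $\SW(\cI)=O(\log n)\cdot\OPT(\cI)$ for every instance with equal probabilities.

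\textbf{Step 3 (lower bound).} Invoke the instance of \Cref{lem:lb-log}, with $c_i=(1-1/i)q+c/i$. There every $k(q-c_k)$ equals $q-c$, while $\SW(\cI)=(q-c)H_n$. This attains the ratio $H_n=\Theta(\log n)$, so the worst case is $\Theta(\log n)$ and $\OPT=\EQ$ on it.

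The only delicate point is Step 1's claim that the binding incentive constraint is the one of an agent with cost at least $c_k$. This relies on all agents in $S$ receiving an identical transfer, which holds because $Q_{j-1}(S_{-i})$ is independent of $i$ when all $q_i$ are equal. I would verify this first; the rest is routine.
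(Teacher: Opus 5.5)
Your proof is correct and follows the paper's route: the upper bound comes from Proposition~\ref{prop:equal_scaled_Q} with $q_\ell/\sum_{i\in[\ell]}q_i = 1/\ell$, and the lower bound together with $\OPT=\EQ$ comes from the instance and argument of Lemma~\ref{lem:lb-log}. You are slightly more careful than the written proof, since you apply the $H_n$ bound to every cost vector (the paper applies it only to the extremal instance) and you handle non-prefix equilibria via $\max_{i\in S}c_i\ge c_{|S|}$; also note that your Step~1 never actually uses $\bw\ge\bzero$, so it covers the optimum without limited liability as well.
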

\begin{proof}
    Let $\cI$ be the same instance as in Lemma \ref{lem:lb-log}. That is, let $q=q_i \in (0,1)$ be equal for all $i\in [n]$, and $c_i = \left(1-\frac{1}{i}\right) \cdot q + \frac{c}{i}~\forall i \in [n]$.
    From Proposition \ref{prop:equal_scaled_Q} for $S=[n]$, we get 
    \[
    \SW = \sum_{i\in [n]} (q_i-c_i) \leq \left(\sum_{i \in [n]} \frac{1}{i}\right) \cdot \EQ = O(\log(n))\cdot  \EQ.
    \]
    Thus by Lemma \ref{lem:lb-log}, the result follows.
\end{proof}

\subsection{Equal Costs}\label{sec:equal_cost}

In this part, we mirror the results of the earlier section (equal success probabilities) for equal costs, under the limited liability constraint:

\begin{restatable}{lemma}{lemlastpremlowerbound}\label{lem:last_prem_lower_bound}
    Let $c_i=c$ $\forall i\in [n]$. Then, under limited liability, the worst case over all $(q_i)_{i \in [n]}$ has the following asymptotic behavior in function of $n$:
    \[
    \SW = \Theta(\log(n)) \cdot \EQ.\qedhere
    \]
\end{restatable}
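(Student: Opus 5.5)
Both bounds go through the machinery of Section~\ref{sec:EQ_analysis}, with costs in place of probabilities, as in Proposition~\ref{prop:equal_scaled_C}.

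\textbf{Upper bound.} Index agents so that $c_i/q_i$ is non-decreasing; with $c_i=c$ this is non-increasing $q_i$. Agents with $q_i\le c$ add nothing to $\SW$ and can be dropped, so assume $q_i>c$ for all $i$. Apply Proposition~\ref{prop:equal_scaled_C} with $S=[n]$. Since all costs are equal, $c_\ell/\sum_{i\in[\ell]}c_i = 1/\ell$, so
\[
\SW \le \Big(\sum_{\ell\in[n]} 1/\ell\Big)\cdot \EQ = O(\log n)\cdot \EQ .
\]
This is Theorem~\ref{thm:Cbound} with $C=1$.

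\textbf{Lower bound.} I mirror the construction in Lemma~\ref{lem:lower_h_bound}, which makes every prefix give the same uniform-contract utility. By Proposition~\ref{prop:equal}, $\EQ=\max_k (1-c/q_k)\sum_{i\le k}q_i$ with $q_1\ge q_2\ge\dots\ge q_n$. I choose the $q_k$ so that $(1-c/q_k)\sum_{i\le k}q_i = Z$ for every $k$. Then $\EQ=Z$ and Proposition~\ref{prop:equal_scaled_Q} holds with equality:
\[
\SW = Z\cdot \sum_k \frac{q_k}{\sum_{i\le k} q_i}.
\]
It therefore suffices to make $\sum_{i\le k}q_i$ grow only linearly in $k$ while the $q_k$ stay comparable, so that each term is $\Theta(1/k)$.

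\textbf{Construction.} A concrete choice is $q_k \approx c + \frac{c}{k}$, i.e.\ $q_k-c = c/k$. Then $\SW = c\,H_n$ and $\sum_{i\le k}q_i \approx kc + cH_k$. The $k$-th prefix utility is
\[
\frac{q_k-c}{q_k}\sum_{i\le k}q_i \approx \frac{c/k}{c(1+1/k)}\,(kc + cH_k) = \frac{c\,(k+H_k)}{k+1} \le 2c .
\]
So $\EQ = O(c)$ while $\SW = \Theta(c\log n)$. The $q_k$ are decreasing, so the ordering is consistent, and choosing $c\le 1/2$ keeps every $q_k\le 1$.

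The main obstacle is making sure $\EQ$ is really bounded by that maximum over prefixes. This is exactly Proposition~\ref{prop:equal}, including its tie-breaking, so uniqueness of the equilibrium prevents any better non-prefix outcome. With that in hand, the bound $(k+H_k)/(k+1)\le 2$ finishes the argument, and the two directions together give $\Theta(\log n)$.
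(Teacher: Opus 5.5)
Your proposal is correct and is essentially the paper's proof. The upper bound is Proposition~\ref{prop:equal_scaled_C} with $S=[n]$, where $c_\ell/\sum_{i\in[\ell]}c_i=1/\ell$. The lower-bound instance $q_k=c(1+1/k)$, with $\SW=cH_n$ and prefix utilities $c(k+H_k)/(k+1)\le 2c$, is exactly the paper's choice $f(i)=1+1/i$.

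One small remark: this instance does not actually equalize the prefix utilities as in Lemma~\ref{lem:lower_h_bound}, so your motivating paragraph claiming that it does is inaccurate. That paragraph is never used, though, because you bound $\EQ\le 2c$ directly via Proposition~\ref{prop:equal}.
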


\begin{proof}
Let $f(i)\geq 0$ be defined for $i\in [n]$ and be decreasing in $i$. Then let $q_i = c \cdot f(i)$. We get
\[
    \SW = \sum_i q_i - nc = c \cdot \left(\sum_i f(i) - n\right).
\]
By Proposition~\ref{prop:equal}, since costs are all equal to $c$, we have that
\[
    \EQ = \max_{k} \left(1 - \frac{c}{q_k}\right) \cdot \sum_{i \leq k} q_i = \max_{k} \left(1 - \frac{1}{f(k)}\right) \cdot c \cdot \sum_{i \leq k} f(i).
\]
In order to find a large ratio $\SW/\EQ$, we choose to set $f(i)=1+\frac{1}{i}$. We obtain
\[
    \frac{\SW}{\EQ} = \frac{\sum_{i \leq n} f(i) - n}{\max_{k} \left(1 - \frac{1}{f(k)}\right) \cdot \sum_{i \leq k} f(i)} = \frac{\log n}{\max_{k} \frac{k+\log k}{k+1}} \geq \frac{\log n}{2}.
\]
The inequality follows from the fact that $\frac{k+\log k}{k+1} \leq 2$ for all $k\geq 1$.
Thus, we have shown that there exists an instance $\cI$ such that 
\[\SW(\cI) \geq \Omega(\log n) \cdot \EQ(\cI).\]
As for the upper bound, from Proposition \ref{prop:equal_scaled_C} for $S=[n]$, we get 
\[
\SW = \sum_{i\in [n]} (q_i-c_i) \leq \left(\sum_{i \in [n]} \frac{1}{i}\right) \cdot \EQ = O(\log(n))\cdot  \EQ
\]
We conclude that $\SW = \Theta(\log(n))\cdot \EQ.$
\end{proof}

\end{document}